\theoremstyle{plain}
\newtheorem{thm}{Theorem}[section]
\newtheorem{prop}[thm]{Proposition}
\newtheorem{lem}[thm]{Lemma}
\newtheorem{cor}[thm]{Corollary}
\theoremstyle{definition}
\newtheorem{defn}[thm]{Definition}
\tikzset{
  on each segment/.style={
    decorate,
    decoration={
      show path construction,
      moveto code={},
      lineto code={
        \path [#1]
        (\tikzinputsegmentfirst) -- (\tikzinputsegmentlast);
      },
      curveto code={
        \path [#1] (\tikzinputsegmentfirst)
        .. controls
        (\tikzinputsegmentsupporta) and (\tikzinputsegmentsupportb)
        ..
        (\tikzinputsegmentlast);
      },
      closepath code={
        \path [#1]
        (\tikzinputsegmentfirst) -- (\tikzinputsegmentlast);
      },
    },
  },
  mid arrow/.style={postaction={decorate,decoration={
        markings,
        mark=at position .6 with {\arrow[#1]{triangle 45}}
      }}},
}
\newdimen\shadedBaseline\shadedBaseline=-4mm
\newcommand\ShadedTableau[2][\relax]{%
  \begin{tikzpicture}[scale=0.3,draw/.append style={thick,black},baseline=\shadedBaseline]
    \ifx\relax#1\relax%
    \else 
      \foreach\bx in {#1} { \filldraw[blue!20]\bx+(-.5,-.5)rectangle++(.5,.5); }
    \fi
    \tableauRow=0
    \foreach \Row in {#2} {
       \tableauCol=1
       \foreach\k in \Row {
          \draw(\the\tableauCol,\the\tableauRow)+(-.5,-.5)rectangle++(.5,.5);
          \draw(\the\tableauCol,\the\tableauRow)node{\k};
          \global\advance\tableauCol by 1
       }
       \global\advance\tableauRow by -1
    }
  \end{tikzpicture}%
}
\begin{document}
\begin{titlepage}
\renewcommand{\thefootnote}{\fnsymbol{footnote}}
\vspace*{-2cm}
\begin{flushright}
UT-19-05
\end{flushright}

\vspace*{1cm}

\begin{center}
{\huge A note on S-dual basis in free fermion system}
\par
\vspace{1cm}
{\Large Shinya Sasa, Akimi Watanabe and Yutaka Matsuo}
\\[.6cm]
{\it Department of Physics, The University of Tokyo}\\
{\it 7-3-1 Hongo, Bunkyo-ku, Tokyo 113-0033, Japan}
\\[.4cm]
\texttt{E-mail: sasa, awatanabe, matsuo at hep-th.phys.s.u-tokyo.ac.jp}

\end{center}

\vspace{3cm}

\begin{abstract}
\noindent

Free fermion system is the simplest quantum field theory which has the symmetry of Ding-Iohara-Miki algebra (DIM).  DIM has S-duality symmetry, known as Miki automorphism which defines the transformation of generators. 
In this note, we introduce the second set of the fermionic basis (S-dual basis) which implement the duality transformation. It may be interpreted as the Fourier dual of the standard basis, and the inner product between the standard and the S-dual ones is proportional to the Hopf link invariant. We also rewrite the general topological vertex in the form of Awata-Feigin-Shiraishi intertwiner and show that it becomes more symmetric for the duality transformation.
\vspace{0.5cm}
\end{abstract}

\vfill

\end{titlepage}
\vfil\eject

\setcounter{footnote}{0}

\section{Introduction}

The free fermion system in two dimensions is one of the simplest examples in the quantum field theory and has applications in many branches of physics and mathematics, such as the topological string, the quantum Hall effect, and the quantum gravity in two dimensions.
It is known to have (quantum) $W_{1+\infty}$ algebra symmetry \cite{Bakas:1989xu, Pope:1989ew} which plays a substantial role to understand these systems. The representation theory of the symmetry was studied \cite{Frenkel:1994em, Awata:1994tf} and it shows that it describes only the free systems.

A few years later, it was generalized to a deformed symmetry which describes the interacting system. It was studied by some groups independently and has different names, such as Ding-Iohara-Miki (DIM) algebra \cite{Ding:1996, Miki:2007}, the quantum continuous gl($\infty$) \cite{Feigin:2010}, the quantum toroidal algebra \cite{feigin2012quantum}. In this paper, we refer to it as DIM algebra.
Recently, Schiffmann et al. \cite{Schiffmann:2012} used the symmetry to prove Alday-Gaiotto-Tachikawa (AGT) conjecture \cite{Alday:2009aq}, the equivalence between the conformal block function of two-dimensional CFT and the instanton partition function of $N=2$ super Yang-Mills. While the former has the expression by the basis of the Virasoro Hilbert space, the latter is labeled by Young diagrams which represent the fixed point of the localization. The equivalence between the two implies the algebra has nontrivial dual descriptions.

The generators of the DIM  has a label in $\mathbb{Z}^2$ and $SL(2,\mathbb{Z})$ acts on it as an automorphism.  In particular, the S-duality action is called Miki automorphism. Among the generators of DIM, we may choose two sets of generators which are related by Miki automorphism. In one set, there is a free boson realization of generators, which is the standard realization of the CFT.  On the other hand, in terms of the other set of generators, the representation is given by the basis labeled by Young diagrams, which appear in Nekrasov partition function.

The purpose of this note is to present an alternative picture of S-duality.  Instead of changing generators, we introduce the second set of basis which implements the dual descriptions of the CFT. The existence of two pictures is helpful to understand the nature of the duality. In order to simplify the setup, we restrict ourselves to the free fermion system where the DIM algebra reduces to the quantum $W_{1+\infty}$ which is isomorphic to the quantum torus algebra up to the central charge. At the first quantized level, the duality map on the states is the Fourier transformation. At the second quantized level, we have to be careful to define the states to avoid the divergence. In order to show that the action on the S-dual basis is the representation rotated by right-angle, we need to introduce a projection operator which is necessary to modify the central charge together with shifts of generators.

The DIM algebra also appears in the topological string where one may identify the topological vertex \cite{Aganagic:2003db} as an intertwiner between the representations \cite{Awata:2011ce}.  In this context, it is better to interpret the $\mathbb{Z}^2$ charges of the representation with the brane charges. We first show that the inner product between the standard and S-dual bases is proportional to Hopf link invariant. We also show that the intertwiner takes more symmetric form if we use the S-dual basis. In particular, the S-duality in the topological string amplitude becomes a consequence of the symmetry between the standard and the S-dual bases.

We note that the duality of the DIM algebra was studied in various contexts. For the recent studies, see for instance \cite{Awata:2011dc, Bourgine:2018fjy, Fukuda:2019ywe}.

This paper is organized as follows. In section \ref{sec:reviews}, we give a brief review on the DIM algebra and its representation, starting from the quantum torus. This part also serves as a pedagogical introduction to the duality of the quantum $W_{1+\infty}$ algebra. In section \ref{sec:S-dual}, we define the S-dual basis and derive S-dual transformation between the $(1,0)$ and $(0,1)$ representations using them. We also explain the nature of the projection operator, which is necessary to reproduce the representation. In section \ref{s:tv}, we compute the inner product between the two distinguished bases. We also derive the expression of the intertwiner and calculate the amplitude. Finally, we present our conclusion in section \ref{sec:conclusion}.

\section{Preliminaries} \label{sec:reviews}
\subsection{Quantum torus}
It is illuminative to start the explanation of the origin of the S-duality automorphism in the DIM algebra from the quantum torus. This algebra is generated by $U$ and $V$  which satisfy the commutation relation
\begin{equation}
    UV = qVU.
\end{equation}
In the quantum mechanical system, $U$ and $V$ appear as translation operators along the edge of a rectangle in the presence of constant magnetic flux. One may write the general element of the translation as $U^m V^m$. In the following, we consider the generic case where $q$ is not the root of unity.
The commutation relation among them is 
\begin{equation}
    [U^{m_1} V^{n_1}, U^{m_2} V^{n_2}] = (q^{-m_2n_1} - q^{-m_1n_2}) U^{m_1+m_2} V^{n_1+n_2}.
\end{equation}
We call this algebra as the quantum torus algebra. As the name suggests,
it has an $SL(2, \mathbb{Z})$ automorphism,
\begin{gather}
	\mathcal{T} = \begin{pmatrix}
		a & c \\
        b & d
	\end{pmatrix}\in SL(2,\mathbb{Z}),
\end{gather}
\begin{gather}
	U \to U^\mathcal{T} = U^a V^b, \quad V \to V^\mathcal{T} = U^c V^d,\quad
	U^\mathcal{T}V^\mathcal{T} =q V^\mathcal{T} U^\mathcal{T}.
\end{gather}
Among the $SL(2,\mathbb{Z})$ transformation,  $\mathcal{S}$-transformation
\begin{gather}
	\mathcal{S} = \begin{pmatrix}
		0 & -1 \\
        1 & 0
	\end{pmatrix}, \quad U \to U^\mathcal{S} = V, \quad V \to V^\mathcal{S} = U^{-1} \label{eq:qtorus_S}
\end{gather}
will play a special role in the following.

One may represent the quantum torus algebra by multiplication of $z$ and the $q$-difference operator:
\begin{gather}
	U = q^D = q^{z \frac{d}{dz}}, \qquad V = z. \label{eq:qtorus_vector}
\end{gather}
One choice of the basis $\{f_n\}_{n\in \mathbb{Z}}$ of the representation space is
\begin{equation}
    f_n(z) = z^n,\quad n\in \mathbb{Z},
\end{equation}
where the action of the $U$ operator is diagonal,
\begin{gather}
	\quad Uf_n(z) = q^n f_n(z), \\
	\quad Vf_n(z) = f_{n+1}(z).
\end{gather}
It is referred to as an (infinite) vector representation.
We can define the second set of the basis which is diagonal to the $V$ generators, using the delta function $\delta(z) = \sum_{n \in \mathbb{Z}} z^n$, 
\begin{equation}\label{e:FBasis}
    f_n^\mathcal{S}(z) = \delta(z/q^n)=\sum_{m\in \mathbb{Z}} z^m q^{-nm},\quad
    n\in \mathbb{Z}.
\end{equation}
$f_n^\mathcal{S}(z) $ is the Fourier transformation of the standard basis.
The $\mathcal{S}$-transformed generators have the same action on them.
\begin{gather}
	\quad U^\mathcal{S} f_n^\mathcal{S}(z) = V f_n^\mathcal{S}(z) =  q^nf_n^\mathcal{S}(z), \\
	\quad V^\mathcal{S} f_n^\mathcal{S}(z) = U^{-1} f_n^\mathcal{S}(z) = f_{n+1}^\mathcal{S}(z). 
\end{gather}
We call them as the `S-dual' basis of the quantum torus algebra.
In the following, we study the second quantized version.

\subsection{Ding-Iohara-Miki algebra}
\begin{defn}
    The DIM algebra is a quantum algebra with three parameters $q_1$, $q_2$, $q_3 \in \mathbb{C}$ satisfying $q_1 q_2 q_3 = 1$. They are related with those of the Macdonald polynomials or the refined topological vertex as $q_1=q$, $q_2=t^{-1}$.
    The algebra is generated by $\left\lbrace x_i^{\pm}, \psi_{\pm j}^{\pm}, \hat{\gamma}^{\pm 1/2} | i \in \mathbb{Z}, j \in \mathbb{Z}_{\geq 0} \right\rbrace$. The generating function is referred to as Drinfeld currents,
    \begin{align}
    	&x^\pm (z) = \sum_{k \in \mathbb{Z}} z^{-k} x^\pm_k, \\
        &\psi^+ (z) = \sum_{k \geq 0} z^{-k} \psi^+_k, \\
        &\psi^- (z) = \sum_{k \geq 0} z^{k} \psi^-_{-k}\,.
    \end{align}
$\hat{\gamma}^{\pm 1/2}$ is the center. 
The DIM algebra is written as (see for instance \cite{tsymbaliuk2017affine})
    \begin{align}
        & \psi^+_0 \psi^-_0 = \psi^-_0 \psi^+_0 = 1, \\
    	&[\psi^\pm (z), \psi^\pm (w)] = 0, \\ 
        &\psi^+(z) \psi^- (w) = \frac{g(\hat{\gamma}w/z)}{g(\hat{\gamma}^{-1}w/z)} \psi^-(w) \psi^+(z), \\
        & \psi^+(z) x^\pm (w) = g(\hat{\gamma}^{\mp 1/2}w/z)^{\mp 1} x^\pm (w) \psi^+ (z), \\
        & \psi^-(z) x^\pm (w) = g(\hat{\gamma}^{\mp 1/2}z/w)^{\pm 1} x^\pm (w) \psi^- (z), \\
        & x^\pm(z) x^\pm(w) = g(z/w)^{\pm 1} x^\pm(w) x^\pm(z), \\
        & [x^+(z),x^-(w)] = \frac{(1-q_1)(1-q_2)}{1-q_1q_2} \left( \delta(\hat{\gamma}^{-1}z/w) \psi^+(\hat{\gamma}^{1/2}w) - \delta(\hat{\gamma}z/w) \psi^- (\hat{\gamma}^{-1/2}w) \right),
    \end{align}
    where
    \begin{equation}
        g(z) = \prod_{\alpha=1, 2, 3} \frac{1-q_\alpha z}{1-q_\alpha^{-1}z},
    \end{equation}
together with Serre relations.
\end{defn}

The DIM algebra has an automorphism found by Miki~\cite{Miki:2007}:
\begin{equation}
    \begin{minipage}{0.3\hsize}
        \centering
        \begin{tikzpicture}
            \node (1) at (0,1) {$h_1$};
            \node (2) at (1,0) {$e_0$};
            \node (3) at (0,-1) {$h_{-1}$};
            \node (4) at (-1,0) {$f_0$};
            \draw[->] (1) -- (2);
            \draw[->] (2) -- (3);
            \draw[->] (3) -- (4);
            \draw[->] (4) -- (1);
        \end{tikzpicture}
    \end{minipage}
    \begin{minipage}{0.3\hsize}
        \centering
        \begin{tikzpicture}
            \node (1) at (0,1) {$\psi^+_0$};
            \node (2) at (1,0) {$\hat{\gamma}^{-1}$};
            \node (3) at (0,-1) {$\psi^-_0$};
            \node (4) at (-1,0) {$\hat{\gamma}$};
            \draw[->] (1) -- (2);
            \draw[->] (2) -- (3);
            \draw[->] (3) -- (4);
            \draw[->] (4) -- (1);
        \end{tikzpicture}
    \end{minipage}.
\end{equation}
Here we change the normalization of the generators as
\begin{align}
    e_0 &= - \frac{x^+_0}{q_1^{-1/2}q_3^{1/4}(1-q_1)^2}, \\
    f_0 &= \frac{x^-_0}{q_2^{-1/2}q_3^{1/4}(1-q_2)^2}.
\end{align}
We note that DIM is generated by four generators $h_{\pm 1}, e_0, f_0$ and the map here extends to all generators of DIM.
This automorphism corresponds to the $\mathcal{S}$-transformation of the quantum torus (\ref{eq:qtorus_S}).

In this paper, we focus on the self-dual case $q=t$. We define the generators $h_m \ (m \in \mathbb{Z}_{\neq 0})$, $\hat{l}_1$, $\hat{l}_2$ by the relations
\begin{gather}
    \psi^\pm ( \hat{\gamma}^{1/2} z) = \psi^\pm_0 \exp \left( \sum_{m>0} \kappa_m h_{\pm m} z^{\mp m} \right), \\
    \hat{\gamma}^{\pm 1/2} = q_3^{\pm\hat{l}_1/4}, \qquad \psi^\pm_0 = q_3^{\mp \hat{l}_2/2},
\end{gather}
where $\kappa_m = (1-q_1^m)(1-q_2^m)(1-q_3^m)$. Then commutation relations can be rewritten as~\cite{Harada:2018} 
\begin{align}
    [h_m, x^+_n] &= - \frac{1}{m} x^+_{n+m}, \\
    [h_m, x^-_n] &= \frac{1}{m} x^-_{n+m}, \\
    [h_m, h_n] &= - \delta_{m+n,0} \frac{1}{m} \frac{\hat{l}_1}{(1-q^m)(1-q^{-m})}, \\
    [x^+_m, x^-_n] &= - (1-q)(1-q^{-1}) \left[ (m+n)(1-q^{m+n})(1-q^{-m-n})h_{m+n} + \delta_{m+n,0} (n\hat{l}_1 + \hat{l}_2) \right].
\end{align}
We assume that $|q|<1$ and $q \neq 0$ in the rest of this paper.

The DIM algebra has a family of representations labeled by a weight $u$ and a level $(l_1, l_2) \in \mathbb{Z}^2$, which is the eigenvalues of $\hat{l}_1$ and $\hat{l}_2$. In this paper, we use only the vertical representation $(0,1)_u$ and the horizontal representation $(1,n)_u$. In the self-dual limit, Miki automorphism reduces to the transformation $\hat{l}_1 \to -\hat{l}_2$ and $\hat{l}_2 \to \hat{l}_1$, thus the levels of the representations also transform as $l_1 \to -l_2$ and $l_2 \to l_1$. Especially, the $(1,0)$ representation is mapped to $(0,1)$.

\subsection{DIM algebra and the quantum torus}

Here we introduce the free bosonic field
\begin{equation}
	\phi(z) = \hat{q} + a_0 \log(z) - \sum_{n \neq 0} \frac{a_n}{n} z^{-n},
\end{equation}
\begin{gather}
    [a_m, a_n] = m \delta_{m+n, 0}, \qquad [a_0, \hat{q}] = 1 \label{eq:boson_CR}
\end{gather}
and the free fermionic field
\begin{gather}
    \psi(z) = \sum_{r \in \mathbb{Z}+1/2} \psi_r z^{-r-1/2}, \qquad \psi^\dagger(z) = \sum_{r \in \mathbb{Z}+1/2} \psi^\dagger_r z^{-r-1/2},
\end{gather}
\begin{gather}
    \lbrace \psi_r, \psi^\dagger_s \rbrace = \delta_{r+s, 0}, \qquad \lbrace \psi_r, \psi_s \rbrace = \lbrace \psi^\dagger_r. \psi^\dagger_s \rbrace = 0. \label{eq:fermion_AR}
\end{gather}

The self-dual limit of the DIM algebra is also obtained by 2-dimensional central extension of the quantum torus. We can realize this central extension by the free fields as
\begin{align}
	W[U^mV^n] &= \oint \frac{dz}{2\pi i} \colon e^{-\phi(z)} \colon q^{mD}z^n \colon e^{\phi(z)} \colon = \frac{1}{1-q^m}\oint \frac{dz}{2\pi i z} (q^mz)^n \colon e^{-\phi(z) + \phi(q^mz)} \colon \notag \\
    &= \oint \frac{dz}{2\pi i} \psi(z) q^{mD} z^n \psi^\dagger(z) = \sum_{r \in \mathbb{Z}+1/2} q^{m(r-1/2)} \colon \psi_r \psi^\dagger_{n-r} \colon + \frac{\delta_{n,0}}{1-q^m}
\end{align}
when $m \neq 0$. The normal ordering $::$ is defined by bosonic (fermionic) modes depending on the oscillators appearing in the formula. In the $m=0$ case, we define
\begin{equation}
    W[V^n] = -a_n = \sum_{r \in \mathbb{Z} + 1/2} \colon \psi_{r} \psi^\dagger_{n-r} \colon .
\end{equation}
Representations of the DIM algebra can be obtained by corresponding these $W$'s with the generators.

For later use, we introduce the commutative bosons as
\begin{equation}
    b_n =  W[U^{-n}] = -\sum_{r \in \mathbb{Z} + 1/2} q^{n(r+1/2)} \colon \psi^\dagger_r \psi_{-r} \colon + \frac{1}{1-q^{-n}} \qquad (n \in \mathbb{Z}_{\neq 0}).
\end{equation}
The relation between the quantum torus algebra and DIM may be more explicitly seen by introducing
\begin{equation}
	\psi_f = \oint \frac{dz}{2\pi i} \psi(z) f(z), \quad \psi^\dagger_f = \oint \frac{dz}{2\pi i} \psi^\dagger(z) f(z).
\end{equation}
We note that these give maps from a wave function $f(z)$ into the operator in the free fermion Fock space.
The action of the quantum torus algebra on $f$ is obtained by the commutation relation with $W[U^mV^n]$,
\begin{gather}
	\left[ W[U^mV^n], \psi_f \right] = \psi_{U^mV^nf}\,, \\
    \left[ W[U^mV^n], \psi_f^\dagger \right] = -q^{-m} \psi^\dagger_{V^nU^{-m}f}\,.
\end{gather}

\subsection{Two representations of DIM}
There are two types of representations of DIM, the horizontal and the vertical ones, depending on the central charges $\hat{l}_{1,2}$ \cite{Awata:2011ce}. At the level of the quantum torus, they correspond to the different choice of the generators $U,V$.
\begin{itemize}
\item Horizontal $(1,n)$ representation corresponds to the choice,
\begin{equation}
U^{\mathcal{T}_{-n}}=UV^{-n},\quad
V^{\mathcal{T}_{-n}}=V,\quad
	\mathcal{T}_{-n} = \begin{pmatrix}
		1 & 0 \\
        -n & 1
	\end{pmatrix}\,.
\end{equation}
\item Vertical $(0,1)$ representation corresponds to the choice,
\begin{equation}
U^{\mathcal{S}^{-1}}=V^{-1},\quad
V^{\mathcal{S}^{-1}}=U,\quad
	\mathcal{S}^{-1} = \begin{pmatrix}
		0 & 1 \\
        -1 & 0
	\end{pmatrix}\,.
\end{equation}
\end{itemize}
At the second quantized level, we define the Drinfeld currents by
\begin{eqnarray}
x^+_{\mathcal{R}}(z)=(1-q)W[U^\mathcal{R}\delta(V^\mathcal{R}/z)],\quad
x^-_{\mathcal{R}}(z)=(1-q^{-1})W[\delta(V^\mathcal{R}/z)(U^\mathcal{R})^{-1}]\,,
\end{eqnarray}
where $\mathcal{R}=\mathcal{T}_{-n}$ or $\mathcal{S}^{-1}$.
We use the same Fock space in order to describe two representations.  Because of the choice of the generators, the representations look very different.

\subsubsection{Horizontal $(1,n)_u$ representation}

The horizontal $(1,n)_u$ representation~\cite{Awata:2011ce} are most conveniently written in terms of bosonized currents,
\begin{gather}
	x^+_{\mathcal{T}_{-n}}(z) = (1-q) W [UV^{-n}\delta(V/z)] = q^{a_0} z^{-n} :e^{-\sum_{m \neq 0} \frac{1-q^{m}}{m} a_m z^{-m}}:, \\
	x^-_{\mathcal{T}_{-n}}(z) = (1-q^{-1}) W [\delta(V/z)V^{n}U^{-1}] = q^{-a_0} z^n :e^{\sum_{m \neq 0} \frac{1-q^{m}}{m} a_m z^{-m}}:, \\
	h_m = - \frac{1}{m} \frac{a_m}{1-q^{-m}}, \\
    \hat{l}_1 = 1, \qquad \hat{l}_2 = n.
\end{gather}
We have an extra parameter $u\in \mathbb{C}$ which corresponds to the vacuum charge
in the bosonic Fock space,
\begin{gather}
	\ket{u} = u^{\hat{q}/\log q} \ket{0}, \\
	\qquad a_n \ket{0} = 0 \qquad (n \geq 0).
\end{gather}

\subsubsection{Vertical $(0, 1)_v$ representation}
This representation is realized by the '$\mathcal{S}$-transformed' operators as
\begin{gather}
	x^+_{\mathcal{S}^{-1},v}(z) = (1-q) W [V^{-1} \delta(Uv/z)] = (1-q)\sum_{r \in \mathbb{Z}+1/2} \delta(vq^{r+1/2}/z) \colon \psi_r \psi_{-1-r}^\dagger \colon, \\
    x^-_{\mathcal{S}^{-1},v}(z) = (1-q^{-1}) W [\delta(Uv/z) V] = (1-q^{-1})\sum_{r \in \mathbb{Z}+1/2} \delta(vq^{r-1/2}/z) \colon \psi_r \psi_{1-r}^\dagger \colon, \\
    h_m = \frac{v^m}{m} \frac{b_{-m}}{1-q^{-m}}, \\
    \hat{l}_1 = 0, \qquad \hat{l}_2 = 1.
\end{gather}
As is obvious from these representations, their action is most straightforwardly written in terms of fermionic basis which is labeled by a Young diagram $\lambda$,
\begin{equation}
    \ket{\lambda} = \prod_{i=1}^{\mathrm{diag}(\lambda)} \left[ (-1)^{-\lambda_i+i} \psi_{-\lambda'_i+i-1/2} \psi^\dagger_{-\lambda_i+i-1/2} \right] \ket{0},
\end{equation}
where $\lambda_i$ (resp. $\lambda'_i$) is the length (resp. height) of $i$-th row (resp. column), and $\mathrm{diag}(\lambda)$ is the number of the rows satisfying $i \leq \lambda_i$. We note that the signs of these states are different from the standard convention.

In order to express the action of generators, we introduce some notations.
We associate each box $(i,j)$ in a Young diagram\footnote{The box coordinate is defined to satisfy $(i,j) \in \lambda \Leftrightarrow 1 \leq i \leq l(\lambda), 1 \leq j \leq \lambda_i$.} with the quantity
\begin{equation}
	\chi_{(i,j),v;q} = vq^{-i+j}.
\end{equation}
We sometimes omit the subscripts $v$ and $q$. We introduce a half-integer sequence
$\rho = (-1/2, -3/2, -5/2, \cdots)$ and integer sequences $\varepsilon_i$ saisfying
\begin{equation}
    (\varepsilon_i)_j =
    \begin{cases}
        1 \qquad &(i=j) \\
        0 \qquad &(\text{otherwise})
    \end{cases}.
\end{equation}
The action on the basis is given as~\cite{Awata:2011ce} 
\begin{align}
    x^+_{\mathcal{S}^{-1},v}(z) \ket{\lambda} &= (1-q) \sum_{x \in A(\lambda)} \delta(\chi_{x}/z) \ket{\lambda+x}, \label{eq:x+_Sinverse_action} \\
    x^-_{\mathcal{S}^{-1},v}(z) \ket{\lambda} &= (1-q^{-1}) \sum_{x \in R(\lambda)} \delta(\chi_{x}/z) \ket{\lambda-x}, \label{eq:x-_Sinverse_action} \\
    h_m \ket{\lambda} &= \frac{v^m}{m} \frac{p_{m}(q^{-\lambda'-\rho-1/2})}{1-q^{-m}} \ket{\lambda} \qquad (m>0), \\
    h_m \ket{\lambda} &= - \frac{v^m}{m} \frac{p_{-m}(q^{-\lambda-\rho+1/2})}{1-q^{-m}} \ket{\lambda} \qquad (m<0),
\end{align}
where $p_m(x) = \sum_{i} x_i^m$ is the power sum and $A(\lambda)$ ($R(\lambda)$) is a set of boxes which can be added to (removed from) a Young diagram $\lambda$:
\begin{gather}
    A(\lambda) = \lbrace (i, \lambda_i+1) \mid i \in \mathbb{Z}_{> 0}, \lambda_{i-1} > \lambda_i \rbrace = \lbrace (\lambda'_j+1, j) \mid j \in \mathbb{Z}_{> 0}, \lambda'_{j-1} > \lambda'_j \rbrace, \\
    R(\lambda) = \lbrace (i, \lambda_i) \mid i \in \mathbb{Z}_{> 0}, \lambda_{i+1} < \lambda_i \rbrace = \lbrace (\lambda'_j, j) \mid j \in \mathbb{Z}_{> 0}, \lambda'_{j+1} < \lambda'_i \rbrace.
\end{gather}
We note that $\lambda_i=0$ for $i > l(\lambda)$.

The action of $a_m$ on this basis is given by the Murnaghan-Nakayama rule
\begin{align}
	a_{-m} \ket{\lambda} &= \sum_{\mu \in Y_A^m(\lambda)} (-1)^{\mathrm{ht}(\mu'/\lambda')+1} \ket{\mu}, \label{eq:a-_action} \\
    a_{m} \ket{\lambda} &= \sum_{\mu \in Y_R^m(\lambda)} (-1)^{\mathrm{ht}(\lambda'/\mu')+1} \ket{\mu}, \label{eq:a+_action} \\
    a_0 \ket{\lambda} &= 0, \label{eq:a0_action}
\end{align}
where $m>0$ and $\mathrm{ht}(r)+1$ is the number of the rows occupied by the ribbon\footnote{A ribbon is a connected skew diagram which does not contain $2 \times 2$ blocks~\cite{Macdonald:1979}.} $r$. $Y_A^m(\lambda)$ and $Y_R^m(\lambda)$ are given as
\begin{align}
	Y_A^m(\lambda) &= \Set{ \mu : \textup{partition} | \exists i \in \mathbb{Z}_{>0} \quad \text{s.t.} \quad \mu + \rho = (\lambda + \rho + m\epsilon_i)^+ } \notag \\
	&= \Set{ \mu : \textup{partition} | \exists j \in \mathbb{Z}_{>0} \quad \text{s.t.} \quad \mu' + \rho = (\lambda' + \rho + m\epsilon_j)^+ }, \\
	Y_R^m(\lambda) &= \Set{ \mu : \textup{partition} | \exists i \in \mathbb{Z}_{>0} \quad \text{s.t.} \quad \mu + \rho = (\lambda + \rho - m\epsilon_i)^+ } \notag \\
	&= \Set{ \mu : \textup{partition} | \exists j \in \mathbb{Z}_{>0} \quad \text{s.t.} \quad \mu' + \rho = (\lambda' + \rho - m\epsilon_j)^+ }.
\end{align}
Here $a^+$ denote the sequence obtained by rearranging the terms in $a$ in a descending order. 
Diagrammatically, $Y_A^m(\lambda)$ ($Y_R^m(\lambda)$) is a set of Young diagrams which can be obtained by adding (removing) a ribbon with $m$ boxes to (from) $\lambda$, as shown in Fig.\ref{fig:ribbon}.

\begin{figure}
    \begin{center}
        \begin{minipage}{0.4\hsize}
            \begin{tikzpicture}[scale=0.3]
                \fill[fill=red!50] (6,-5)--(7,-5)--(7,-4)--(10,-4)--(10,-3)--(12,-3)--(12,-2)--(9,-2)--(9,-3)--(6,-3)--cycle; 
                \draw[very thick] (0,0)--(0,-9)--(4,-9)--(4,-6)--(6,-6)--(6,-3)--(9,-3)--(9,-2)--(13,-2)--(13,0)--cycle;
                \node[above] (l) at (3,-3) {$\lambda$};
                \node[left] (i) at (0,-4.4) {$i$};
                \node[above] (j) at (11.5,0) {$j$};
            \end{tikzpicture}
        \end{minipage}
        \begin{minipage}{0.4\hsize}
            \begin{tikzpicture}[scale=0.3]
                \fill[fill=blue!50] (2,-9)--(3,-9)--(4,-9)--(4,-6)--(6,-6)--(6,-4)--(5,-4)--(5,-5)--(3,-5)--(3,-8)--(2,-8)--cycle; 
                \draw[very thick] (0,0)--(0,-9)--(4,-9)--(4,-6)--(6,-6)--(6,-3)--(9,-3)--(9,-2)--(13,-2)--(13,0)--cycle;
                \node[above] (l) at (3,-3) {$\lambda$};
                \node[left] (i) at (0,-4.4) {$i$};
                \node[above] (j) at (2.5,0) {$j$};
            \end{tikzpicture}
        \end{minipage}
    \end{center}
    \caption{A Young diagram $\mu$ satisfying $\mu + \rho = (\lambda + \rho \pm m\epsilon_i)^+$ or $\mu' + \rho = (\lambda' + \rho \pm m\epsilon_j)^+$ }\label{fig:ribbon}
\end{figure}
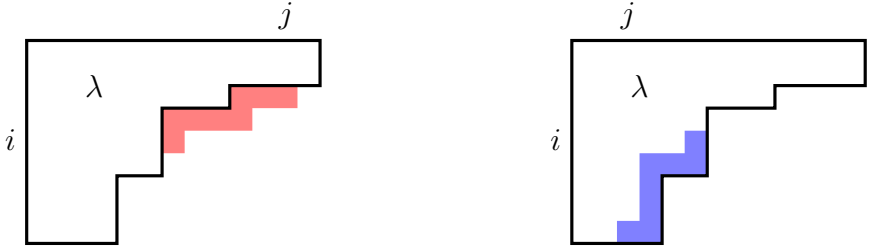

On the other hand, $b_n$ generators act diagonally on $\ket{\lambda}$:
\begin{align}
    b_{m} \ket{\lambda} =& p_{-m} (q^{-\lambda'-\rho-1/2}) \ket{\lambda}. \label{eq:b|lambda>}
\end{align}
This property reflects the fact that $f_{r-1/2}$ is an eigenfunction of $U^{m}$ in the vector representation, because $\psi^\dagger_{r} = \psi^\dagger_{f_{r-1/2}}$.

\section{Construction of the S-dual basis} \label{sec:S-dual}
In the previous section, we demonstrate that the use of S and T dual generators of $U,V$ in the fermionic realization gives two distinguished representations (horizontal/vertical) of DIM. In this section, we study a redefinition of the fermionic basis $|\lambda\rangle \rightarrow |\lambda\rangle_{\mathcal{S}}$ which transforms as a vertical basis to the horizontal $(1,0)$ generators. At the first quantized level, such a basis is given by the Fourier transformation (\ref{e:FBasis}).  The second quantized basis should be defined as the infinite wedge product of such bases with proper normalizations. We show that such a basis transforms properly with respect to the half of the generators. In order to reproduce the correct transformation for the other half, we need to redefine the generators by some sort of ``projection operator".

\subsection{Setup and results}
\subsubsection{Definition of S-dual basis}
We start constructing eigenstates of $a_m = -W[V^m]$, which are the `S-dual' of $\ket{\lambda}$.
If we impose the condition that the empty vacuum $\ket{-\infty} = \psi_{-1/2} \psi_{-3/2} \psi_{-5/2} \cdots \ket{0}$ remains unchanged under the $\mathcal{S}$-transformation, the fact $q^{r-1/2} \psi(q^{r-1/2}) = \psi_{f_{r-1/2}^{\mathcal{S}}}$ and $\ket{\lambda} \propto \psi^\dagger_{-\lambda_1+1-1/2} \psi^\dagger_{-\lambda_2+2-1/2} \cdots \ket{-\infty}$ suggests the naive definition of the `S-dual' states:
\begin{equation}
	\psi(q^{-\lambda'_1+1-1}) \psi(q^{-\lambda'_2+2-1}) \cdots \ket{-\infty}.
\end{equation}
However, this state is obviously not well-defined. Thus we regularize this expression by the boson-fermion correspondence and the normal ordering as 
\begin{equation}
	\lim_{N \to \infty} \colon e^{-\phi(q^{-\lambda'_1+1-1})} e^{-\phi(q^{-\lambda'_2+2-1})} \cdots e^{-\phi(q^{-\lambda'_N+N-1})} e^{N\hat{q}} \colon \ket{0},
\end{equation}
where we used the identification $\psi_{-1/2} \psi_{-3/2} \cdots \psi_{-(N-1/2)} \ket{0} = e^{N\hat{q}} \ket{0}$. This setup leads to the following definition of the `S-dual' basis.
\begin{defn}[S-dual basis] \label{defn:S-dual_basis}
    We define the S-dual basis as 
    \begin{gather}
    	\ket{\lambda}_{\mathcal{S}} = \mathcal{N}_\lambda \exp \left[ - \sum_{n>0} \frac{a_{-n}}{n} \sum_{l>0} q^{(-\lambda'_l+l-1)n} \right] \ket{0} = \mathcal{N}_\lambda \exp \left[ - \sum_{n>0} \frac{a_{-n}}{n} p_n (q^{-\lambda'-\rho-1/2}) \right] \ket{0}, \\
    	\lsubscript{\bra{\lambda}}{\mathcal{S}} = \mathcal{N}'_\lambda \bra{0} \exp \left[ \sum_{n>0} \frac{a_n}{n} \sum_{l>0} q^{(-\lambda_l+l)n} \right] = \mathcal{N}'_\lambda \bra{0} \exp \left[ \sum_{n>0} \frac{a_n}{n} p_n(q^{-\lambda-\rho+1/2}) \right],
    \end{gather}
    where $\mathcal{N}_\lambda$ and $\mathcal{N}'_\lambda$ are the normalizing constants given as
    \begin{gather}
        \mathcal{N}_\lambda = \left[ (-1)^{|\lambda|} q^{\frac{1}{2}\kappa_\lambda-|\lambda|} \prod_{i,j>0} (1-q^{-\lambda_i-\lambda'_j+i+j-1})^{-1} \right]^{1/2}, \label{eq:N} \\
        \mathcal{N}'_\lambda = (-1)^{|\lambda|} \left[ (-1)^{|\lambda|} q^{-\frac{1}{2}\kappa_\lambda+|\lambda|} \prod_{i,j>0} (1-q^{-\lambda_i-\lambda'_j+i+j-1})^{-1} \right]^{1/2}. \label{eq:N'}
    \end{gather}
\end{defn}
We claim that the S-dual basis satisfies\footnote{We need to regularize naively the divergent power series in the exponent which appears in the LHS.}
\begin{equation}
	{}_\mathcal{S} \langle \lambda | \mu \rangle_\mathcal{S} = \delta_{\lambda \mu}\,. \label{eq:innerproduct_S}
\end{equation}

\subsubsection{Evaluation of the action of DIM generators}
We examine the action of generators of the $(1,0)$ representation, slightly modified by introducing a parameter $v$ as follows:
\begin{gather}
    x^+_v (z) = (1-q) W[U\delta(Vv/z)] = q^{a_0} :e^{-\sum_{n \neq 0} \frac{1-q^{n}}{n} a_n (z/v)^{-n}}:, \\
    x^-_v(z) = (1-q^{-1}) W[\delta(Vv/z)U^{-1}] = q^{-a_0} :e^{\sum_{n \neq 0} \frac{1-q^{n}}{n} a_n (z/v)^{-n}}: .
\end{gather}
We note that these operators also satisfy the commutation relations of the DIM algebra. Generalizing these operators, we can introduce the representation of general currents in the DIM algebra as
\begin{gather}
    x^m_v (z) = (1-q^m)W[U^m \delta(Vv/z)] = q^{ma_0} :e^{-\sum_{n \neq 0} \frac{1-q^{mn}}{n} a_n (z/v)^{-n}}:, \\
    x^{-m}_v(z) = (1-q^{-m}) W [\delta(Vv/z)U^{-m}] = q^{-ma_0} :e^{\sum_{n \neq 0} \frac{1-q^{mn}}{n} a_n (z/v)^{-n}}: 
\end{gather}
for $m \in \mathbb{Z}$.

These operators will add/remove $m$ boxes in the Young diagram.
In order to write down the result, we need to define the two sets $\chi_{A,v;q}^m(\lambda)$, $\chi_{R,v;q}^m(\lambda) \ (m \in \mathbb{Z}_{>0})$ as
\begin{align}
    \chi_{A,v;q}^m(\lambda) &= \Set{ vq^{\lambda_i+m-i} | 1 \leq i \leq l(\lambda)+m} \backslash \Set{ vq^{\lambda_i-i} | 1 \leq i \leq l(\lambda)} \notag \\
    &= \Set{ vq^{-\lambda'_j-1+j} | 1 \leq j \leq l(\lambda')+m} \backslash \Set{ vq^{-\lambda'_j+(m-1)+j} | 1 \leq j \leq l(\lambda')}, \label{eq:chi_A} \\
    \chi_{R,v;q}^m(\lambda) &= \Set{ vq^{\lambda_i-i} | 1 \leq i \leq l(\lambda)} \backslash \Set{ vq^{\lambda_i+m-i} | 1 \leq i \leq l(\lambda)+m} \notag \\
    &= \Set{ vq^{-\lambda'_j+(m-1)+j} | 1 \leq j \leq l(\lambda')} \backslash \Set{ vq^{-\lambda'_j-1+j} | 1 \leq j \leq l(\lambda')+m}. \label{eq:chi_R}
\end{align}
The consistency of the two expressions is ensured by the fact that $\lambda+\rho$ and $-\lambda'-\rho$ are the complementary subsequences of $\mathbb{Z}+1/2$.
We note that if $\chi \in \chi_{A,v;q}^m(\lambda)$, there exists a Young diagram $\mu \in Y_A^m(\lambda)$ such that the box $x$ at the top-right corner in $\mu/\lambda$ satisfies $\chi_{x,v;q} = \chi$. Thus there is a one-to-one correspondence between $\chi_{A,v;q}^m(\lambda)$ and $Y_A^m(\lambda)$, and we can find the correspondence between $\chi_{R,v;q}^m(\lambda)$ and $Y_R^m(\lambda)$ for the same reason.

We define a generalization of Nekrasov's $Y$ function \cite{Nekrasov:2013xda}:
\begin{gather}
	\mathcal{Y}_{\lambda,v;q}^m(z) = \frac{\prod_{\chi \in \chi_{A,v;q}^m(\lambda)} (z-\chi)}{\prod_{\chi \in \chi_{R,v;q}^m(\lambda)} (z-\chi)}. \label{eq:Y_def}
\end{gather}
The properties of the function are compiled in Appendix \ref{sec:Y}.

\begin{thm} \label{thm:x_action_on_S}
    The action of $x^{\pm m}_v(z)$ on the S-dual basis is 
    \begin{align}
        x_v^m(z) \ket{\lambda}_\mathcal{S} &= \mathcal{N}_\lambda \left[ \frac{z^m}{\mathcal{Y}_{\lambda,v;q}^m(z)} \right]_- \exp \left[ - \sum_{n>0} \frac{a_{-n}}{n} \left( -(1-q^{-mn}) (z/v)^n + p_n (q^{-\lambda'-\rho-1/2}) \right) \right] \ket{0}, \\
        x_v^{-m}(z) \ket{\lambda}_\mathcal{S} &= \mathcal{N}_\lambda \left[ \frac{\mathcal{Y}_{\lambda,v;q}^m(z)}{z^m} \right]_- \exp \left[ - \sum_{n>0} \frac{a_{-n}}{n} \left( (1-q^{-mn}) (z/v)^n + p_n (q^{-\lambda'-\rho-1/2}) \right) \right] \ket{0}, \label{eq:x-m_action_on_Sket} \\
        \lsubscript{\bra{\lambda}}{\mathcal{S}} x^{m}_v(z) &= \mathcal{N}'_\lambda \left[ \frac{\mathcal{Y}_{\lambda,v^{-1};q^{-1}}^m(z^{-1})}{z^{-m}} \right]_+ \bra{0} q^{ma_0} \exp \left[ \sum_{n>0} \frac{a_n}{n} \left( -(1-q^{mn}) (z/v)^{-n} + p_n(q^{-\lambda-\rho+1/2}) \right) \right], \\
        \lsubscript{\bra{\lambda}}{\mathcal{S}} x^{-m}_v(z) &= \mathcal{N}'_\lambda \left[ \frac{z^{-m}}{\mathcal{Y}_{\lambda,v^{-1};q^{-1}}^m(z^{-1})} \right]_+ \bra{0} q^{-ma_0} \exp \left[ \sum_{n>0} \frac{a_n}{n} \left( (1-q^{mn}) (z/v)^{-n} + p_n(q^{-\lambda-\rho+1/2}) \right) \right]
    \end{align}
    for $m \in \mathbb{Z}_{>0}$.
    Here $[f(z)]_+$ and $[f(z)]_-$ denote the expansions of $f(z)$ in the neighborhood of $z=0$ and $z=\infty$, respectively.
\end{thm}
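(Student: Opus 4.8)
The plan is to reduce the statement to elementary vertex-operator algebra, since all of $x^{\pm m}_v(z)$, $\ket{\lambda}_{\mathcal S}$ and $\lsubscript{\bra{\lambda}}{\mathcal S}$ are exponentials in the single free boson $\{a_n\}$. First I would normal-order each current, e.g.
\[
x^m_v(z) = q^{ma_0}\,\exp\!\Big[\sum_{n>0}\tfrac{1-q^{-mn}}{n}\,a_{-n}(z/v)^{n}\Big]\,\exp\!\Big[-\sum_{n>0}\tfrac{1-q^{mn}}{n}\,a_{n}(z/v)^{-n}\Big],
\]
and likewise for $x^{-m}_v(z)$ (same shape, with the appropriate sign changes). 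Two observations are then recorded for free: the right-hand (annihilation) exponential satisfies $e^{A}\ket{0}=\ket{0}$ whenever $A$ involves only $a_n$ with $n>0$, and $a_0$ annihilates the charge-zero state $\ket{\lambda}_{\mathcal S}$, so the $q^{\pm m a_0}$ prefactor is inert on the ket (and, being $1$ against $\bra{0}$, is carried along only for bookkeeping in the bra formulas).

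Second, I would push the annihilation exponential of the current to the right through the defining exponential of $\ket{\lambda}_{\mathcal S}$. Since $[a_n,a_{-n}]=n$ is central, the Weyl identity $e^{A}e^{B}=e^{[A,B]}e^{B}e^{A}$ produces a single scalar contraction factor, after which the annihilation part dies on $\ket{0}$; meanwhile the left (creation) exponential of the current simply adds its exponent to that of $\ket{\lambda}_{\mathcal S}$, which reproduces verbatim the $\mp(1-q^{-mn})(z/v)^n$ shift of the exponent in the statement. Resumming the contraction with $\sum_{n>0}x^{n}/n=-\log(1-x)$ — legitimate because $|q|<1$ makes $p_n(q^{-\lambda'-\rho-1/2})=\sum_{i>0}q^{n(-\lambda'_i+i-1)}$ and every ensuing series convergent — gives the infinite product
\[
\prod_{i>0}\frac{1-vq^{-\lambda'_i+i-1+m}/z}{1-vq^{-\lambda'_i+i-1}/z}
\]
for $x^m_v(z)$, and its reciprocal for $x^{-m}_v(z)$.

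Third — and this is where the combinatorial content sits — I would identify this infinite product with $\big[\,z^m/\mathcal Y^m_{\lambda,v;q}(z)\,\big]_-$ (resp.\ $\big[\,\mathcal Y^m_{\lambda,v;q}(z)/z^m\,\big]_-$). Here the sets $\chi^m_{A,v;q}(\lambda)$, $\chi^m_{R,v;q}(\lambda)$ of (\ref{eq:chi_A})--(\ref{eq:chi_R}) do the work: because $\{\lambda_i-i\}_{i\ge 1}$ and $\{-\lambda'_j+j-1\}_{j\ge 1}$ are complementary subsequences of $\mathbb Z+\tfrac12$ (suitably shifted), all but finitely many factors of the product cancel telescopically, and the surviving finite ratio is exactly $z^m\prod_{\chi\in\chi^m_{R,v;q}(\lambda)}(z-\chi)\,/\,\prod_{\chi\in\chi^m_{A,v;q}(\lambda)}(z-\chi)=z^m/\mathcal Y^m_{\lambda,v;q}(z)$; this telescoping identity is among the properties collected in Appendix \ref{sec:Y}, which I would simply invoke. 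The expansion being taken around $z=\infty$ ($[\cdot]_-$) is forced, since the contraction naturally generated powers of $1/z$. Assembling the three pieces (the scalar $\mathcal Y$-factor, the current-creation exponent, and the $\ket{\lambda}_{\mathcal S}$-exponent), and noting that $\mathcal N_\lambda$ is untouched, yields the first two displays.

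Finally, the bra identities follow by running the identical computation on $\lsubscript{\bra{\lambda}}{\mathcal S}=\mathcal N'_\lambda\bra{0}\exp[\sum_{n>0}\tfrac{a_n}{n}p_n(q^{-\lambda-\rho+1/2})]$, whose exponent carries the conjugate data $p_n(q^{-\lambda-\rho+1/2})=\sum_{i>0}q^{n(-\lambda_i+i)}$ and the opposite oscillators; the exchange $a_n\leftrightarrow a_{-n}$ together with $0\leftrightarrow\infty$ turns $[\cdot]_-$ into $[\cdot]_+$ and replaces $(q,v,z)$ by $(q^{-1},v^{-1},z^{-1})$ in the argument of $\mathcal Y$, which gives the last two displays. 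I expect the only genuine obstacle to be the bookkeeping of the third step — tracking index ranges, signs, and the power $z^{\pm m}$ so that the telescoped product lands precisely on $z^{\pm m}/\mathcal Y^{\pm m}_{\lambda,v;q}$ — while everything else is routine.
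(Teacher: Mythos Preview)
Your proposal is correct and follows essentially the same route as the paper: normal-order the current, use the Weyl/BCH identity with $[a_n,a_{-n}]=n$ to produce the scalar contraction, resum it via $-\log(1-x)$ into the infinite product, and then invoke Lemma~\ref{lem:Y_alternative} (the $|q|<1$ form of $\mathcal Y^m_{\lambda,v;q}$) to identify that product with $[z^m/\mathcal Y^m_{\lambda,v;q}(z)]_-$; the bra cases are handled by the same ``similar computation'' the paper alludes to. Your treatment of $q^{ma_0}$ and of the creation-part merging into the exponent of $\ket{\lambda}_{\mathcal S}$ is exactly what is needed.
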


\begin{proof}
    Using the commutation relations (\ref{eq:boson_CR}) and Lemma \ref{lem:Y_alternative}, we obtain
    \begin{align}
        x_v^m(z) \ket{\lambda}_\mathcal{S} =& \mathcal{N}_\lambda \exp \left[ \sum_{n>0} \frac{1}{n} (1-q^{mn}) (z/v)^{-n} p_n (q^{-\lambda'-\rho-1/2}) \right] \notag \\
        &\times \exp \left[ - \sum_{n>0} \frac{a_{-n}}{n} \left( (1-q^{-mn}) (z/v)^n + p_n (q^{-\lambda'-\rho-1/2}) \right) \right] \ket{0} \notag \\
        =& \mathcal{N}_\lambda \prod_{l>0} \frac{1-(v/z)q^{-\lambda'_l+(m-1)+l}}{1-(v/z)q^{-\lambda'_l-1+l}} \exp \left[ - \sum_{n>0} \frac{a_{-n}}{n} \left( (1-q^{-mn}) (z/v)^n + p_n (q^{-\lambda'-\rho-1/2}) \right) \right] \ket{0} \notag \\
        =& \mathcal{N}_\lambda \left[ \frac{z^m}{\mathcal{Y}_{\lambda,v;q}^m(z)} \right]_- \exp \left[ - \sum_{n>0} \frac{a_{-n}}{n} \left( (1-q^{-mn}) (z/v)^n + p_n (q^{-\lambda'-\rho-1/2}) \right) \right] \ket{0}. \notag
    \end{align}
    Similar computation gives the other results.
\end{proof}

\begin{prop} \label{prop:b_action_on_S}
    The action of $b_m$ is
    \begin{align}
        b_{-m} \ket{\lambda}_\mathcal{S} =& \sum_{\mu \in Y^m_A(\lambda)} (-1)^{\mathrm{ht}(\mu'/\lambda')} \ket{\mu}_\mathcal{S}, \\
        \lsubscript{\bra{\lambda}}{\mathcal{S}} b_m =& \sum_{\mu \in Y^m_A(\lambda)} (-1)^{\mathrm{ht}(\mu'/\lambda')} \lsubscript{\bra{\mu}}{\mathcal{S}} q^{-ma_0}
    \end{align}
    for $m>0$.
\end{prop}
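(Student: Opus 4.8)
The plan is to reduce Proposition~\ref{prop:b_action_on_S} to Theorem~\ref{thm:x_action_on_S}, by writing $b_{-m}$ as a single Fourier coefficient of the current $x_v^m(z)$ and then evaluating that coefficient by partial fractions. Concretely, since $x_v^m(z)=(1-q^m)W[U^m\delta(Vv/z)]=(1-q^m)\sum_{k\in\mathbb Z}v^kz^{-k}W[U^mV^k]$, only the $k=0$ term contributes to the coefficient of $z^0$, so $[x_v^m(z)]_{z^0}=(1-q^m)W[U^m]=(1-q^m)b_{-m}$ (the $v$-dependence drops out). Hence $b_{-m}\ket{\lambda}_{\mathcal S}=\tfrac1{1-q^m}[x_v^m(z)\ket{\lambda}_{\mathcal S}]_{z^0}$, and Theorem~\ref{thm:x_action_on_S} rewrites this as $\tfrac{\mathcal N_\lambda}{1-q^m}[P_\lambda(z)\,\Psi_\lambda(z)]_{z^0}$, where $P_\lambda(z)=[z^m/\mathcal Y_{\lambda,v;q}^m(z)]_-=\prod_{l>0}\frac{1-(v/z)q^{-\lambda'_l+(m-1)+l}}{1-(v/z)q^{-\lambda'_l-1+l}}$ (expanded near $z=\infty$) and $\Psi_\lambda(z)=\exp\!\big(\sum_{n>0}\tfrac{a_{-n}}{n}(1-q^{-mn})(z/v)^n\big)\ket{\lambda}_{\mathcal S}/\mathcal N_\lambda$ is a power series in $z$ with constant term $\Psi_\lambda(0)=\ket{\lambda}_{\mathcal S}/\mathcal N_\lambda$.

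The key structural observation is that $z^m/\mathcal Y_{\lambda,v;q}^m(z)$, as a rational function, has degree $0$, equals $1$ at $z=\infty$, vanishes at $z=0$ (the factor $z^m$, with $m\geq 1$), and — after the telescoping cancellations in the infinite product — has only simple poles, sitting exactly at the points of $\chi_{A,v;q}^m(\lambda)$; by the bijection recalled just before Theorem~\ref{thm:x_action_on_S}, these are the $\chi_\mu=vq^{-\lambda'_j+j-1}$ for $\mu\in Y_A^m(\lambda)$, where $j$ is the index such that $\mu'+\rho$ is obtained from $\lambda'+\rho$ by replacing its $j$-th entry $\lambda'_j+\rho_j$ with $\lambda'_j+\rho_j+m$ and re-sorting. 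Writing $P_\lambda(z)=1+\sum_{\mu\in Y_A^m(\lambda)}\frac{R_\mu}{z-\chi_\mu}$ with $R_\mu=\mathrm{Res}_{z=\chi_\mu}P_\lambda(z)$ and expanding each $\frac1{z-\chi_\mu}$ near $z=\infty$, the $z^0$-coefficient of $P_\lambda(z)\Psi_\lambda(z)$ becomes
\[
\Psi_\lambda(0)\Big(1-\sum_{\mu}\tfrac{R_\mu}{\chi_\mu}\Big)+\sum_{\mu}\tfrac{R_\mu}{\chi_\mu}\,\Psi_\lambda(\chi_\mu).
\]
Since $1-\sum_\mu R_\mu/\chi_\mu=P_\lambda(0)=0$, the diagonal $\ket{\lambda}_{\mathcal S}$-term drops out, and we get $b_{-m}\ket{\lambda}_{\mathcal S}=\tfrac{\mathcal N_\lambda}{1-q^m}\sum_{\mu\in Y_A^m(\lambda)}\tfrac{R_\mu}{\chi_\mu}\Psi_\lambda(\chi_\mu)$. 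Evaluating the exponent at $z=\chi_\mu$, the term $(\chi_\mu/v)^n(1-q^{-mn})=q^{(-\lambda'_j+j-1)n}-q^{(-\lambda'_j+j-1-m)n}$ cancels the $l=j$ summand of $p_n(q^{-\lambda'-\rho-1/2})$ and converts it into $-p_n(q^{-\mu'-\rho-1/2})$, precisely because of the description of $\mu'$ above; hence $\Psi_\lambda(\chi_\mu)=\ket{\mu}_{\mathcal S}/\mathcal N_\mu$. (The divergent intermediate series are regularized as in Definition~\ref{defn:S-dual_basis}: one reads $\Psi_\lambda(\chi_\mu)$ off its convergent exponential form, not term by term.)

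What remains is to show $\tfrac1{1-q^m}\tfrac{R_\mu}{\chi_\mu}\tfrac{\mathcal N_\lambda}{\mathcal N_\mu}=(-1)^{\mathrm{ht}(\mu'/\lambda')}$. The residue of the product formula at $z=\chi_\mu$ is elementary and gives $R_\mu=\chi_\mu(1-q^m)\prod_{l\ne j}\frac{1-q^{\lambda'_j-\lambda'_l+m+l-j}}{1-q^{\lambda'_j-\lambda'_l+l-j}}$. For the normalization ratio one uses, from (\ref{eq:N}), that $|\mu|=|\lambda|+m$, that $\kappa_\mu-\kappa_\lambda$ equals twice the sum of the contents $j-i$ of the boxes of the ribbon $\mu/\lambda$ (these contents are $m$ consecutive integers, ending at $-\lambda'_j+j-1$), and that the exponents $-\lambda_i-\lambda'_l+i+l-1$ in the hook-type double product of (\ref{eq:N}) are minus the extended hook lengths of $\lambda$, so that the ratio of these double products for $\mu$ and $\lambda$ telescopes to a finite product of $(1-q^{\bullet})$-factors supported on the columns of $\lambda'$ met by the ribbon. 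Multiplying the two contributions, all powers of $q$ cancel and the surviving sign is $(-1)^{j-j'}=(-1)^{\mathrm{ht}(\mu'/\lambda')}$, $j'$ being the index reached after the particle move; this proves the first identity. The bra identity follows in exactly the same way from the fourth line of Theorem~\ref{thm:x_action_on_S}, using $b_m=\tfrac1{1-q^{-m}}[x_v^{-m}(z)]_{z^0}$ read off the expansion near $z=0$; the $\bra{0}q^{-ma_0}$ prefactor there — equivalently, the factor $q^{-ma_0}$ already present in $b_m=W[U^{-m}]$ — reappears as the $q^{-ma_0}$ on the right-hand side.

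The step I expect to be the real obstacle is the last one: checking that the $q$-factors coming from $R_\mu/\chi_\mu$ and from $\mathcal N_\lambda/\mathcal N_\mu$ combine into the pure sign $(-1)^{\mathrm{ht}(\mu'/\lambda')}$. This is a border-strip analog of the hook-length identity and calls for careful bookkeeping of the (regularized) infinite products in $\mathcal N_\lambda$, of the content sum of the added ribbon, and of the signs generated by rewriting each $1-q^{(\text{negative})}$ as $-q^{(\text{negative})}(1-q^{(\text{positive})})$. A secondary point, flagged by the footnotes, is the rigorous justification of the formal manipulations — especially evaluating the $z$-series $\Psi_\lambda(z)$ at $z=\chi_\mu$ and interchanging the summations in the extraction of the $z^0$-coefficient.
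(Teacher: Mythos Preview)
Your proposal is correct and follows essentially the same route as the paper. The paper's proof simply cites the partial-fraction identity you rederive as Proposition~\ref{prop:Y_decomp} and, crucially, packages the ``real obstacle'' you flag---the identity $\tfrac{1}{1-q^m}\tfrac{R_\mu}{\chi_\mu}\tfrac{\mathcal N_\lambda}{\mathcal N_\mu}=(-1)^{\mathrm{ht}(\mu'/\lambda')}$---as Proposition~\ref{prop:N_add} (with a complete proof in the appendix), so that the argument reduces to ``apply Theorem~\ref{thm:x_action_on_S}, decompose via Proposition~\ref{prop:Y_decomp}, invoke Proposition~\ref{prop:N_add}, and take the $z^0$ coefficient.'' Your inline computation of the residue and of $\mathcal N_\mu/\mathcal N_\lambda$ is exactly the content of that appendix proposition.
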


\begin{proof}
    Applying Propositions \ref{prop:Y_decomp}, \ref{prop:N_add}, \ref{prop:N_remove} and (\ref{eq:Delta1}) to Theorem \ref{thm:x_action_on_S} and taking the coefficient of $z^0$, we obtain the result.
\end{proof}

\begin{prop} \label{prop:a_action_on_S}
    The action of $a_m$ is
    \begin{align}
        a_m \ket{\lambda}_\mathcal{S} &= - p_m(q^{-\lambda'-\rho-1/2}) \ket{\lambda}_\mathcal{S}, \\
        \lsubscript{\bra{\lambda}}{\mathcal{S}} a_{-m} &= p_m(q^{-\lambda-\rho+1/2}) \lsubscript{\bra{\lambda}}{\mathcal{S}}
    \end{align}
    for $m>0$.
\end{prop}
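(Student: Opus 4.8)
The plan is to read the eigenvalue off directly from the bosonic coherent-state form of the S-dual basis in Definition~\ref{defn:S-dual_basis}, using only the Heisenberg relations~\eqref{eq:boson_CR}; in contrast to Proposition~\ref{prop:b_action_on_S}, Theorem~\ref{thm:x_action_on_S} is not needed here. Write $\ket{\lambda}_{\mathcal S}=\mathcal N_\lambda\, e^{B_\lambda}\ket 0$ with
\[
B_\lambda=-\sum_{n>0}\frac{a_{-n}}{n}\,p_n(q^{-\lambda'-\rho-1/2}),
\]
a linear combination of the creation modes $a_{-n}$ with $n>0$ only. For $m>0$ we have $a_m\ket 0=0$, hence $a_m\ket{\lambda}_{\mathcal S}=\mathcal N_\lambda\,[a_m,e^{B_\lambda}]\ket 0$. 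Since $[a_m,B_\lambda]$ is a $c$-number, the identity $[a_m,e^{B_\lambda}]=[a_m,B_\lambda]\,e^{B_\lambda}$ applies, and $[a_m,B_\lambda]=-\sum_{n>0}\frac1n p_n(q^{-\lambda'-\rho-1/2})[a_m,a_{-n}]=-p_m(q^{-\lambda'-\rho-1/2})$ by $[a_m,a_{-n}]=m\delta_{mn}$. This already yields the first formula; note that $\mathcal N_\lambda$ drops out, as it must since $a_m$ acts by a scalar.

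For the covector statement I would run the mirror computation: writing $\lsubscript{\bra{\lambda}}{\mathcal S}=\mathcal N'_\lambda\,\bra 0 e^{C_\lambda}$ with $C_\lambda=\sum_{n>0}\frac{a_n}{n}p_n(q^{-\lambda-\rho+1/2})$, one uses $\bra 0 a_{-m}=0$ for $m>0$ together with $[C_\lambda,a_{-m}]=p_m(q^{-\lambda-\rho+1/2})$ to move $a_{-m}$ to the left, giving $\bra 0 e^{C_\lambda}a_{-m}=p_m(q^{-\lambda-\rho+1/2})\,\bra 0 e^{C_\lambda}$ and hence the second formula.

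There is no serious obstacle — the computation is a one-line coherent-state commutator — and the only point deserving care is convergence, which is also why the restriction $m>0$ is genuine rather than cosmetic. Unlike the inner product~\eqref{eq:innerproduct_S}, nothing here needs regularization: for $m>0$ the series $p_m(q^{-\lambda'-\rho-1/2})=\sum_{l>0}q^{m(-\lambda'_l+l-1)}$ converges absolutely since $\lambda'_l=0$ for large $l$, so the exponents grow linearly in $l$ while $|q|<1$; the same holds for $p_m(q^{-\lambda-\rho+1/2})$. For $m\le 0$ these sums diverge, so the claimed eigenvalue equation really fails there. As an independent cross-check I would rederive the same scalar from the pre-regularized expression $\lim_{N\to\infty}\colon e^{-\phi(q^{-\lambda'_1+1-1})}e^{-\phi(q^{-\lambda'_2+2-1})}\cdots e^{-\phi(q^{-\lambda'_N+N-1})}e^{N\hat q}\colon\ket 0$ by commuting $a_m$ through it, using $[a_m,\phi(z)]=z^m$ and $[a_m,\hat q]=0$ for $m>0$; this reproduces $-\sum_{l>0}q^{m(-\lambda'_l+l-1)}$ and ties the statement back to the fermionic side, where $a_m=-W[V^m]$ is diagonal on the S-dual states by construction.
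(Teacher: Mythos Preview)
Your proof is correct and is exactly the approach the paper indicates: the paper's proof simply cites Definition~\ref{defn:S-dual_basis} and the Heisenberg relations~\eqref{eq:boson_CR}, and you have spelled out precisely that coherent-state commutator computation. The additional remarks on convergence and the cross-check via the pre-regularized vertex-operator expression are extra but sound.
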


\begin{proof}
    This result follows from the definition of the S-dual basis (Definition \ref{defn:S-dual_basis}) and the commutation relation (\ref{eq:boson_CR}).
\end{proof}

The results in Propositions \ref{prop:b_action_on_S}, \ref{prop:a_action_on_S} agree with the vertical representations on the fermionic basis (\ref{eq:a-_action}), (\ref{eq:b|lambda>}) with $a$ and $b$ oscillators interchanged. However, we do not reproduce the other half (actions of $b_m$, $a_{-m}$ with $m>0$). We note that $b_m$ oscillators do not have central charge (they commute with each other).  Here they need play the role of $a$ oscillators which have nonvanishing central charge.  In this sense, we need to introduce a minor modification of the definition of the generators.

\subsection{Redefinition of generators by projection} \label{sec:modification}

Theorem \ref{thm:x_action_on_S} shows that the action of $x^m_v(z)$ on the S-dual basis differs from the vertical representation. For example, in the $m=1$ case we obtain
\begin{align}
	x^+_v(z) \ket{\lambda}_\mathcal{S} =& (1-q) \sum_{x \in A(\lambda)} \frac{1}{1-\chi_x/z} \mathcal{N}_{\lambda+x} \exp \left[ - \sum_{n>0} \frac{a_{-n}}{n} \left( -(1-q^{-n})(z/v)^n + \sum_{l>0} q^{(-\lambda'_l+l-1)n} \right) \right] \ket{0} \notag \\
    =& (1-q) \sum_{x \in A(\lambda)} \frac{1}{1-\chi_x/z} \ket{\lambda+x}_\mathcal{S} + (\text{positive power of } z), \label{eq:x+_action_on_Sket} \\
    x^-_v(z) \ket{\lambda}_\mathcal{S} =& \left[ (1-q^{-1}) \sum_{x \in R(\lambda)} \frac{1}{1-\chi_{x}/z} \mathcal{N}_{\lambda-x} - \frac{v}{z} \mathcal{N}_{\lambda} + (1-q^{-1}) p_1(q^{\lambda'+\rho+1/2}) \right] \notag \\
    &\times \exp \left[ - \sum_{n>0} \frac{a_{-n}}{n} \left( (1-q^{-n})(z/v)^n + \sum_{l>0} q^{(-\lambda'_l+l-1)n} \right) \right] \ket{0} \notag  \\
    =& (1-q^{-1}) \sum_{x \in R(\lambda)} \frac{1}{1-\chi_{x}/z} \ket{\lambda-x}_\mathcal{S} - \frac{v}{z} \ket{\lambda}_\mathcal{S} + (\text{non-negative power of } z), \label{eq:x-_action_on_Sket} \\
    \lsubscript{\bra{\lambda}}{\mathcal{S}} x^+_v(z) =& \left[ (1-q) \sum_{x \in R(\lambda)} \frac{1}{1-z/\chi_x} \mathcal{N}'_{\lambda-x} - \frac{z}{v} \mathcal{N}'_\lambda + (1-q) p_1(q^{-\lambda'-\rho-1/2}) \mathcal{N}'_\lambda \right] \notag \\
    & \times \bra{0} q^{a_0} \exp \left[ \sum_{n>0} \frac{a_n}{n} \left( -(1-q^n)(z/v)^{-n} + \sum_{l>0} q^{(-\lambda_l+l)n} \right) \right] \notag \\
    =& (1-q) \sum_{x \in R(\lambda)} \frac{1}{1-z/\chi_x} \lsubscript{\bra{\lambda-x}}{\mathcal{S}} q^{a_0} - \frac{z}{v} \lsubscript{\bra{\lambda}}{\mathcal{S}} q^{a_0} + (\text{non-positive power of } z), \label{eq:x+_action_on_Sbra} \\
    \lsubscript{\bra{\lambda}}{\mathcal{S}} x^-_v(z) =& (1-q^{-1})\sum_{x \in A(\lambda)} \frac{1}{1-z/\chi_x} \mathcal{N}'_{\lambda+x} \bra{0} q^{-a_0} \exp \left[ \sum_{n>0} \frac{a_n}{n} \left( (1-q^n) (z/v)^{-n} + \sum_{l>0} q^{(-\lambda_l+l)n} \right) \right]  \notag \\
    =& (1-q^{-1})\sum_{x \in A(\lambda)} \frac{1}{1-z/\chi_x} \lsubscript{\bra{\lambda+x}}{\mathcal{S}} q^{-a_0} + (\text{negative power of } z). \label{eq:x-_action_on_Sbra}
\end{align}
Especially, taking the coefficient of $z^0$, we obtain
\begin{align}
    b_1 \ket{\lambda}_\mathcal{S} &= \sum_{x \in R(\lambda)} \ket{\lambda-x}_\mathcal{S} + a_{-1} \ket{\lambda}_\mathcal{S} - p_1(q^{-\lambda-\rho+1/2}) \ket{\lambda}_\mathcal{S}, \label{eq:b1_action_on_Sket} \\
    \lsubscript{\bra{\lambda}}{\mathcal{S}} b_{-1} &= \sum_{x \in R(\lambda)} \lsubscript{\bra{\lambda-x}}{\mathcal{S}} q^{a_0} + \lsubscript{\bra{\lambda}}{\mathcal{S}} a_1 q^{a_0} + p_1(q^{-\lambda'-\rho-1/2}) \lsubscript{\bra{\lambda}}{\mathcal{S}}. \label{eq:b-1_action_on_Sbra}
\end{align}
These results show that the action of the operators which live in the shaded area in Fig. \ref{fig:DIM_gen_Sdual} on the S-dual basis is the same as $(0,1)$ representation, but other operators act differently.
\begin{figure}
    \centering
    \begin{tikzpicture}
        \fill [fill=CadetBlue1!50] (-0.75,2.0)--(-0.75,1.0)--(0.75,-0.5)--(0.75,2.0) ;
        
        \draw[->] (0,-2)--(0,2);
        \draw[->] (-1,0)--(1,0);
        \fill [Gold3] (0,1.5) circle [radius=0.8mm];
        \fill [Gold3] (0,1) circle [radius=0.8mm];
        \fill [Gold3] (0,0.5) circle [radius=0.8mm];
        \fill [Gold3] (0,0) circle [radius=0.8mm];
        \fill [Gold3] (0,-0.5) circle [radius=0.8mm];
        \fill [Gold3] (0,-1.0) circle [radius=0.8mm];
        \fill [Gold3] (0,-1.5) circle [radius=0.8mm];
        \fill [Firebrick2] (0.5,1.5) circle [radius=0.8mm];
        \fill [Firebrick2] (0.5,1) circle [radius=0.8mm];
        \fill [Firebrick2] (0.5,0.5) circle [radius=0.8mm];
        \fill [Firebrick2] (0.5,0) circle [radius=0.8mm];
        \fill [Firebrick2] (0.5,-0.5) circle [radius=0.8mm];
        \fill [Firebrick2] (0.5,-1.0) circle [radius=0.8mm];
        \fill [Firebrick2] (0.5,-1.5) circle [radius=0.8mm];
        \fill [RoyalBlue1] (-0.5,1.5) circle [radius=0.8mm];
        \fill [RoyalBlue1] (-0.5,1) circle [radius=0.8mm];
        \fill [RoyalBlue1] (-0.5,0.5) circle [radius=0.8mm];
        \fill [RoyalBlue1] (-0.5,0) circle [radius=0.8mm];
        \fill [RoyalBlue1] (-0.5,-0.5) circle [radius=0.8mm];
        \fill [RoyalBlue1] (-0.5,-1.0) circle [radius=0.8mm];
        \fill [RoyalBlue1] (-0.5,-1.5) circle [radius=0.8mm];

        \fill [Firebrick2] (3.0,1.5) circle [radius=0.8mm];
        \path (3.2,1.5) node [right] {$x^+(z)$};
        
        \fill [Gold3] (3.0,1.0) circle [radius=0.8mm];
        \path (3.2,1.0) node [right] {$\psi^\pm(z)$};
        
        \fill [RoyalBlue1] (3.0,0.5) circle [radius=0.8mm];
        \path (3.2,0.5) node [right] {$x^-(z)$};
    \end{tikzpicture}
    \caption{The operators which act the same as the $(0,1)$ representation on $\ket{\lambda}_\mathcal{S}$}
    \label{fig:DIM_gen_Sdual}
\end{figure}
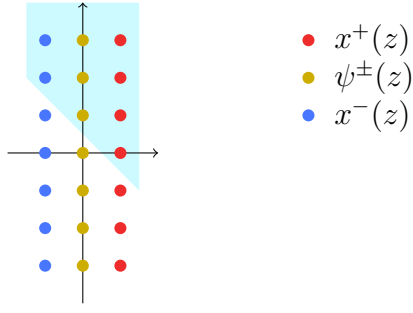

These differences arise from the following two contributions:
\begin{itemize}
    \item The differences between the singularity $\frac{1}{1-(z/\chi_x)^{\pm 1}}$ in $[\mathcal{Y}_{\lambda,v;q}^m(z)]^{\pm 1}$ and $\delta(z/\chi_x)$ in the vertical representation
    \item The poles at $z=0$ in $\frac{\mathcal{Y}_{\lambda,v;q}^m(z)}{z^m}$
\end{itemize}
The former contribution can be eliminated by inserting the projection operator
\begin{equation}
    \mathcal{P} = \sum_{\lambda} \ket{\lambda}_\mathcal{S} \lsubscript{\bra{\lambda}}{\mathcal{S}}. \label{eq:P}
\end{equation}
The latter is removed by a slight modification of the generators, as we will see later.

\begin{prop} \label{prop:PbP_action_on_S}
   We have
    \begin{gather}
        \mathcal{P} b_m \ket{\lambda}_\mathcal{S} = \sum_{\mu \in Y^m_R(\lambda)} (-1)^{\mathrm{ht}(\lambda'/\mu')} \ket{\mu}_\mathcal{S}, \\
        \lsubscript{\bra{\lambda}}{\mathcal{S}} b_{-m} \mathcal{P} = \sum_{\mu \in Y^m_R(\lambda)} (-1)^{\mathrm{ht}(\lambda'/\mu')} \lsubscript{\bra{\mu}}{\mathcal{S}}
    \end{gather}
    for $m>0$.
\end{prop}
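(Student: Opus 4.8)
The plan is to combine Proposition \ref{prop:b_action_on_S}, which gives the bare action $b_{-m}\ket{\lambda}_\mathcal{S}=\sum_{\mu\in Y^m_A(\lambda)}(-1)^{\mathrm{ht}(\mu'/\lambda')}\ket{\mu}_\mathcal{S}$ (and similarly for the bra), with the effect of sandwiching by the projector $\mathcal{P}=\sum_\lambda\ket{\lambda}_\mathcal{S}\,\lsubscript{\bra{\lambda}}{\mathcal{S}}$. Since (\ref{eq:innerproduct_S}) tells us that $\{\ket{\lambda}_\mathcal{S}\}$ is orthonormal with respect to the pairing $\lsubscript{\bra{\cdot}}{\mathcal{S}}$--$\ket{\cdot}_\mathcal{S}$, the operator $\mathcal{P}$ acts as the identity on the span of S-dual kets and annihilates everything in the orthogonal complement. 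The point is that $b_m$ with $m>0$ applied to $\ket{\lambda}_\mathcal{S}$ does \emph{not} land purely in the S-dual subspace — from (\ref{eq:b1_action_on_Sket}) we see it produces the desired remove-box terms $\sum_{x\in R(\lambda)}\ket{\lambda-x}_\mathcal{S}$ plus spurious $a_{-1}$ and $p_1$ contributions — so first I would compute, for general $m$, the decomposition of $b_m\ket{\lambda}_\mathcal{S}$ into an ``S-dual part'' plus a ``remainder'' that is orthogonal to all $\lsubscript{\bra{\mu}}{\mathcal{S}}$, and then observe that $\mathcal{P}$ simply discards the remainder.

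Concretely, I would start from Theorem \ref{thm:x_action_on_S} for $x^{-m}_v(z)\ket{\lambda}_\mathcal{S}$ (equation (\ref{eq:x-m_action_on_Sket})) and extract the coefficient of $z^0$, which by definition is $-b_m\ket{\lambda}_\mathcal{S}$ up to the sign/normalization conventions already fixed in the text (recall $x^{\pm m}_v(z)$ is built from $W[U^{\pm m}\delta(Vv/z)]$ and $b_m=W[U^{-m}]$). To identify the S-dual component I would use the structural results that the text has lined up for exactly this purpose: Propositions \ref{prop:Y_decomp} (the decomposition of $\mathcal{Y}^m_{\lambda,v;q}(z)/z^m$ into partial fractions over $\chi\in\chi^m_{R,v;q}(\lambda)$ plus a polynomial piece), \ref{prop:N_add}, \ref{prop:N_remove} (the ratios $\mathcal{N}_{\mu}/\mathcal{N}_\lambda$ when $\mu$ is obtained by adding or removing a ribbon, which supply the signs $(-1)^{\mathrm{ht}(\lambda'/\mu')}$), and the identity (\ref{eq:Delta1}). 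The one-to-one correspondence between $\chi^m_{R,v;q}(\lambda)$ and $Y^m_R(\lambda)$ already noted in the text then matches each residue term to a Young diagram $\mu\in Y^m_R(\lambda)$, and assembling the residues with the correct normalization ratios reproduces $\sum_{\mu\in Y^m_R(\lambda)}(-1)^{\mathrm{ht}(\lambda'/\mu')}\ket{\mu}_\mathcal{S}$ exactly — this is the desired answer. The ``polynomial in $z$'' piece of $\mathcal{Y}^m_{\lambda,v;q}(z)/z^m$ together with the poles at $z=0$ is what produces the exponential-dressed states that lie outside the S-dual span; these are precisely killed by $\mathcal{P}$.

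For the bra version I would run the mirror computation using $\lsubscript{\bra{\lambda}}{\mathcal{S}}x^{-m}_v(z)$ from Theorem \ref{thm:x_action_on_S}, taking the $z^0$ coefficient and noting the $q^{-ma_0}$ factor reorders correctly since $a_0$ annihilates $\ket{0}$ and the S-dual bras are built on $\bra{0}$; projecting on the right by $\mathcal{P}$ again retains only the orthonormal part. The main obstacle I anticipate is purely bookkeeping rather than conceptual: tracking the signs, the powers of $q$ coming from $\kappa_\lambda$ and $|\lambda|$ inside $\mathcal{N}_\lambda$, and the $v$-dependence through all of the partial-fraction and normalization-ratio manipulations, and in particular verifying that all the $v$-dependent and $a_{-n}$-dependent ``junk'' terms organize into something that $\mathcal{P}$ annihilates — i.e. that after projection the result is genuinely $v$-independent as the statement requires. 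Once Propositions \ref{prop:Y_decomp}, \ref{prop:N_add}, \ref{prop:N_remove} are in hand this is a finite, mechanical check, so the proof reduces to: (i) expand via Theorem \ref{thm:x_action_on_S}, (ii) take the $z^0$ coefficient, (iii) apply $\mathcal{P}$ to kill the non-S-dual remainder, (iv) match residues to $Y^m_R(\lambda)$ with the signs from the normalization ratios.
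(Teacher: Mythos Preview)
Your proposal would work but is considerably more laborious than the paper's argument. You propose to compute $b_m\ket{\lambda}_\mathcal{S}$ explicitly by extracting the $z^0$ coefficient from Theorem~\ref{thm:x_action_on_S}, decompose it via Propositions~\ref{prop:Y_decomp}, \ref{prop:N_add}, \ref{prop:N_remove} into an S-dual part plus a remainder, and then argue that $\mathcal{P}$ annihilates the remainder. The paper bypasses all of this: it simply writes $\mathcal{P}b_m\ket{\lambda}_\mathcal{S}=\sum_\nu\ket{\nu}_\mathcal{S}\,\bigl(\lsubscript{\bra{\nu}}{\mathcal{S}}b_m\bigr)\ket{\lambda}_\mathcal{S}$ directly from~(\ref{eq:P}), and then uses the \emph{bra} formula of Proposition~\ref{prop:b_action_on_S}, $\lsubscript{\bra{\nu}}{\mathcal{S}}b_m=\sum_{\mu\in Y^m_A(\nu)}(-1)^{\mathrm{ht}(\mu'/\nu')}\lsubscript{\bra{\mu}}{\mathcal{S}}q^{-ma_0}$, which is already established. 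Since $q^{-ma_0}$ acts trivially on $\ket{\lambda}_\mathcal{S}$ and $\lsubscript{\braket{\mu|\lambda}}{\mathcal{S}}=\delta_{\mu\lambda}$, the sum collapses to $\nu\in Y^m_R(\lambda)$ with sign $(-1)^{\mathrm{ht}(\lambda'/\nu')}$ in one line. The second identity is handled symmetrically from the ket formula of Proposition~\ref{prop:b_action_on_S}. What your route buys is an explicit description of the unprojected $b_m\ket{\lambda}_\mathcal{S}$ (generalizing~(\ref{eq:b1_action_on_Sket}) to all $m$), which is of some independent interest; but for the statement at hand the paper's ``flip via the bra formula'' trick avoids ever touching the messy remainder terms --- and indeed, the cleanest way to verify your step~(iii) (that $\mathcal{P}$ kills the junk) would itself be to compute $\lsubscript{\bra{\nu}}{\mathcal{S}}b_m\ket{\lambda}_\mathcal{S}$ via Proposition~\ref{prop:b_action_on_S}, which is exactly the paper's shortcut.
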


\begin{proof}
    These follow from Proposition \ref{prop:b_action_on_S} and (\ref{eq:P}).
\end{proof}

\begin{prop} \label{prop:PaP_action_on_S}
    We have
    \begin{gather}
        \mathcal{P} a_{-m} \ket{\lambda}_\mathcal{S} = p_m(q^{-\lambda-\rho+1/2}) \ket{\lambda}_\mathcal{S}, \\
        \lsubscript{\bra{\lambda}}{\mathcal{S}} a_{m} \mathcal{P} = - p_m(q^{-\lambda'-\rho-1/2}) \lsubscript{\bra{\lambda}}{\mathcal{S}}
    \end{gather}
    for $m>0$.
\end{prop}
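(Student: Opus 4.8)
\emph{Proof proposal.} The two identities should follow at once from Proposition \ref{prop:a_action_on_S}, the orthonormality (\ref{eq:innerproduct_S}) of the S-dual basis, and the definition (\ref{eq:P}) of $\mathcal{P}$. The plan for the first line is to insert the resolution $\mathcal{P}=\sum_{\mu}\ket{\mu}_{\mathcal{S}}\lsubscript{\bra{\mu}}{\mathcal{S}}$ and to evaluate the matrix element $\lsubscript{\bra{\mu}}{\mathcal{S}} a_{-m}\ket{\lambda}_{\mathcal{S}}$ by letting $a_{-m}$ act to the \emph{left}. By Proposition \ref{prop:a_action_on_S} the operator $a_{-m}$ ($m>0$) is diagonal on the S-dual bra with eigenvalue $p_m(q^{-\mu-\rho+1/2})$, so
\[
\lsubscript{\bra{\mu}}{\mathcal{S}} a_{-m}\ket{\lambda}_{\mathcal{S}} = p_m(q^{-\mu-\rho+1/2})\,{}_{\mathcal{S}}\langle\mu|\lambda\rangle_{\mathcal{S}} = p_m(q^{-\mu-\rho+1/2})\,\delta_{\mu\lambda}
\]
using (\ref{eq:innerproduct_S}); summing over $\mu$ collapses the sum to $\mu=\lambda$ and gives $\mathcal{P}a_{-m}\ket{\lambda}_{\mathcal{S}}=p_m(q^{-\lambda-\rho+1/2})\ket{\lambda}_{\mathcal{S}}$.

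For the second line I would argue symmetrically: expand $\lsubscript{\bra{\lambda}}{\mathcal{S}} a_m\mathcal{P}=\sum_{\mu}\bigl(\lsubscript{\bra{\lambda}}{\mathcal{S}} a_m\ket{\mu}_{\mathcal{S}}\bigr)\lsubscript{\bra{\mu}}{\mathcal{S}}$ and use the complementary half of Proposition \ref{prop:a_action_on_S}, namely $a_m\ket{\mu}_{\mathcal{S}}=-p_m(q^{-\mu'-\rho-1/2})\ket{\mu}_{\mathcal{S}}$ for $m>0$, so that (\ref{eq:innerproduct_S}) again collapses the sum and produces $-p_m(q^{-\lambda'-\rho-1/2})\lsubscript{\bra{\lambda}}{\mathcal{S}}$. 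This is the $a$-oscillator analogue of the proof of Proposition \ref{prop:PbP_action_on_S}, and is in fact a little simpler because no shift by $q^{\pm m a_0}$ appears (recall $a_0\ket{\lambda}_{\mathcal{S}}=0$).

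I do not anticipate a real obstacle. The only thing to keep in view is the regularization caveat already attached to the earlier statements: the exponents defining $\ket{\lambda}_{\mathcal{S}}$ and $\lsubscript{\bra{\lambda}}{\mathcal{S}}$ involve formally divergent power series, and both Proposition \ref{prop:a_action_on_S} and (\ref{eq:innerproduct_S}) are understood in that regularized sense; since sandwiching $a_{\mp m}$ between copies of $\mathcal{P}$ only recombines facts already proved in that framework, no new analytic subtlety enters. Conceptually, the statement says precisely that dressing $a_{\mp m}$ with the projector $\mathcal{P}$ interchanges the two diagonal actions of Proposition \ref{prop:a_action_on_S}, which is exactly what is needed for the S-dual basis to carry the full vertical representation.
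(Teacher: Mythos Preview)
Your argument is correct and matches the paper's own proof, which simply cites Proposition~\ref{prop:a_action_on_S} together with the definition~(\ref{eq:P}) of $\mathcal{P}$; you have merely spelled out the one-line computation (inserting $\mathcal{P}$, letting $a_{\mp m}$ act on the diagonal side, and collapsing via~(\ref{eq:innerproduct_S})) that the paper leaves implicit.
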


\begin{proof}
    We obtain the result using Theorem \ref{prop:a_action_on_S} and the definition of the projection operator (\ref{eq:P}).
\end{proof}

Thus the correspondence under the S-duality can be compiled as follows.
\begin{align}
    \ket{\lambda} \quad &\to \quad \ket{\lambda}_\mathcal{S} \\
    \bra{\lambda} \quad &\to \quad \lsubscript{\bra{\lambda}}{\mathcal{S}} \\
    a_{m} \quad &\to \quad - \mathcal{P} b_{m} \\
    a_{-m} \quad &\to \quad -b_{-m} \mathcal{P} \\
    b_{m} \quad &\to \quad - \mathcal{P} a_{-m} \\
    b_{-m} \quad &\to \quad - a_m \mathcal{P}
\end{align}
Here $m>0$.

In order to obtain the vertical representation by eliminating the poles at $z=0$ in $\frac{\mathcal{Y}_{\lambda,v;q}^m(z)}{z^m}$, we modify the generators as\footnote{We expect that $\tilde{x}^+_v(z) = \mathcal{P} x^+_v(z)$, and $\tilde{x}^-_v(z) = \mathcal{P} x^-_v(z) \mathcal{P}$ give the vertical representation on the S-dual basis. Careful evaluation, however, implies that we have additional terms proportional to $(z/v)^{\pm 1}$. } 
\begin{gather}
    \tilde{x}^+_v(z) = \mathcal{P} x^+_v(z) \mathcal{P} + \frac{z}{v}, \\
    \tilde{x}^-_v(z) = \mathcal{P} x^-_v(z) \mathcal{P} + \frac{v}{z}, \\
    \tilde{h}_m = - \frac{v^m}{m} \frac{\mathcal{P} a_m \mathcal{P}}{1-q^{-m}}, \\
    \hat{l}_1 = 0, \qquad \hat{l}_2 = 1.
\end{gather}
The modified generators give the vertical representation on the S-dual basis:
\begin{align}
    \tilde{x}_v^+(z) \ket{\lambda}_\mathcal{S} &= (1-q) \sum_{x \in A(\lambda)} \delta(z/\chi_x) \ket{\lambda+x}_\mathcal{S}, \\
    \tilde{x}_v^-(z) \ket{\lambda}_\mathcal{S} &= (1-q^{-1}) \sum_{x \in R(\lambda)} \delta(z/\chi_x) \ket{\lambda-x}_\mathcal{S}, \\
    \tilde{h}_m \ket{\lambda}_\mathcal{S} &= \begin{cases} \displaystyle \frac{v^m}{m} \frac{p_{m}(q^{-\lambda'-\rho-1/2})}{1-q^{-m}} \ket{\lambda}_\mathcal{S} \qquad (m>0) \\ \displaystyle - \frac{v^m}{m} \frac{p_{-m}(q^{-\lambda-\rho+1/2})}{1-q^{-m}} \ket{\lambda}_\mathcal{S} \qquad (m<0) \end{cases}.
\end{align}

One may interpret the role of the projection operator as follows. We take $b_1$ as an example. In terms of the standard basis, $b_m$ operators are diagonal for any $m$ (see eq.\ref{eq:b|lambda>}) and commute with each other. 
In particular, the action of $b_1$ is the multiplication of $p_{-1}$ -- the power sum polynomial of negative power. On the S-dual basis, we have to modify the oscillators to produce $[b_1,b_{-1}]=1$. 
For this purpose, the action of $b_{1}$ should be replaced by $\frac{\partial}{\partial p_1}$. This discrepancy shows up in (\ref{eq:b1_action_on_Sket}) and the projection operator removes it.
We note that similar replacement was used to construct $\mathbf{SH}$ (a degenerate algebra of DIM) in \cite{Schiffmann:2012}.  This algebra was defined as a symmetrized degenerate DAHA which operates the space of symmetric polynomials of $n$ variables and was denoted as $\mathbf{SH}_n^+$. One may define similarly the algebra for negative power symmetric polynomials $\mathbf{SH}_n^-$. We may patch these two algebras to obtain $\mathbf{SH}$ where the negative power sum is replaced by the derivative $\partial_{p_n}$. We refer the appendix B of \cite{Schiffmann:2012} for detail.

\section{Topological vertex and the amplitude} \label{s:tv}
In this section, we calculate the inner product of the standard and S-dual bases and relate it to the topological vertex when one of the indices is empty.
Then, we consider the general case and express the intertwiners (topological vertex contracted with the standard states) on the Fock basis. The inclusion of the S-dual basis makes the formula more symmetric.
We apply it to simple topological string amplitudes,  $U(1)$ and $U(2)$.  The expression implies that the S-duality transformation, which exchanges the vertical and horizontal lines in the diagram, becomes the direct consequence of the symmetry of the inner product.


\subsection{Inner product and Hopf link invariant}
First, we introduce the topological vertex \cite{Aganagic:2003db}.
\begin{defn} \label{def:tv_C}
We use the definition of the topological vertex $C_{\lambda \mu \nu}$  in terms of the skew Schur polynomials as \cite{Okounkov:2003sp}
\begin{equation} \label{eq:tv_C}
C_{\lambda \mu \nu} = q^\frac{\kappa_\mu}{2} s_{\nu'}(q^{- \rho}) \sum_\eta s_{\lambda' / \eta}(q^{-\nu -\rho}) s_{\mu / \eta}(q^{-\nu' -\rho}) ,
\end{equation}
where $\kappa_\lambda=\sum_i \lambda_i(\lambda_i-2i+1)$.
\end{defn}

It is known \cite{Aganagic:2003db} that the topological vertex has the following two properties.
\begin{eqnarray}
&&C_{\lambda\mu\nu}=C_{\mu\nu\lambda}=C_{\nu\lambda\mu}, \label{eq:tv_cyclic}
\\
&&q^{\frac{\kappa_\mu}{2}}C_{\varnothing\lambda\mu'}=q^{\frac{\kappa_\lambda}{2}}C_{\varnothing\mu\lambda'}=W_{\lambda\mu}(q). \label{eq:tv_sym}
\end{eqnarray}
In the second line, we introduced the Hopf-link invariant $W_{\lambda\mu}(q)$.

We first give the connection between the Hopf-link invariant and the inner product between the bases.
\begin{prop} \label{prop:tv_innerproduct}
The inner product of the standard and the S-dual bases is given by the Hopf-link invariant:
\begin{align}
W_{\lambda\mu}&=q^{\frac{\kappa_\lambda}{2}+\frac{\kappa_\mu}{2}}
(-q^{-\frac{1}{2}})^{|\lambda|+|\mu|}\mathcal{N}_\varnothing^{-1}\lsubscript{\braket{\lambda|\mu}}{\mathcal{S}}\label{eq:tv_innerproduct1}\\
&=q^{\frac{\kappa_\lambda}{2}+\frac{\kappa_\mu}{2}}q^{\frac{1}{2}|\lambda|+\frac{1}{2}|\mu|}\mathcal{N}_\varnothing^{-1}\braket{\mu'|\lambda'}_\mathcal{S},\label{eq:tv_innerproduct2}
\end{align}
We note that $\mathcal{N}_\varnothing=\prod_{n=1}^\infty(1-q^n)^{-n/2}$, the square root of MacMahon function, from (\ref{eq:N}). 
\end{prop}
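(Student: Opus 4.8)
The plan is to compute both mixed inner products by passing to the boson--fermion correspondence, where they become evaluations of Schur functions. The one structural input needed is that, with the sign convention of Section \ref{sec:reviews}, the standard fermionic state $\ket{\mu}$ corresponds to $(-1)^{|\mu|}s_\mu$; equivalently $\bra{0}\exp\big[\sum_{n>0}\frac{a_n}{n}p_n(\mathbf{x})\big]\ket{\mu}=(-1)^{|\mu|}s_\mu(\mathbf{x})$ for every specialization $\mathbf{x}$, and dually $\bra{\mu}$ pairs with $(-1)^{|\mu|}s_\mu$ under the Hall pairing. The overall sign can be pinned down by comparing (\ref{eq:a-_action}) with the Murnaghan--Nakayama rule $p_m s_\mu=\sum_\nu(-1)^{\mathrm{ht}(\nu/\mu)}s_\nu$, using that the numbers of rows and columns of an $m$-ribbon add up to $m+1$.

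By Definition \ref{defn:S-dual_basis}, the bra $\lsubscript{\bra{\lambda}}{\mathcal{S}}$ is a bosonic coherent state in the power sums $p_n(q^{-\lambda-\rho+1/2})$, so contracting it against the standard ket $\ket{\mu}$ and using homogeneity of $s_\mu$ gives at once
\[
\lsubscript{\braket{\lambda|\mu}}{\mathcal{S}}=\mathcal{N}'_\lambda(-1)^{|\mu|}s_\mu\big(q^{-\lambda-\rho+1/2}\big)=\mathcal{N}'_\lambda(-1)^{|\mu|}q^{|\mu|/2}s_\mu\big(q^{-\lambda-\rho}\big).
\]
For the second expression, $\ket{\lambda'}_\mathcal{S}$ is a bosonic coherent ket; pairing it with the standard bra $\bra{\mu'}$ and expanding $\exp\big[-\sum_{n>0}\frac1n p_n(\mathbf{w})p_n\big]=\prod_{k,i}(1-w_kx_i)=\sum_\nu(-1)^{|\nu|}s_\nu(x)s_{\nu'}(\mathbf{w})$ (dual Cauchy identity) picks out the single term $\nu=\mu'$, so $\braket{\mu'|\lambda'}_\mathcal{S}=\mathcal{N}_{\lambda'}\,s_\mu\big(q^{-\lambda-\rho-1/2}\big)=\mathcal{N}_{\lambda'}q^{-|\mu|/2}s_\mu\big(q^{-\lambda-\rho}\big)$.

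Next I would unpack the Hopf link invariant. Combining (\ref{eq:tv_sym}) with Definition \ref{def:tv_C}, the empty leg forces $\eta=\varnothing$ in the sum over partitions, so $W_{\lambda\mu}=q^{\kappa_\lambda/2}C_{\varnothing\mu\lambda'}=q^{(\kappa_\lambda+\kappa_\mu)/2}s_\lambda(q^{-\rho})s_\mu(q^{-\lambda-\rho})$. Substituting the two inner-product formulas above, the entire $\mu$-dependence cancels and each of (\ref{eq:tv_innerproduct1}), (\ref{eq:tv_innerproduct2}) collapses to the single normalization identity
\[
\mathcal{N}'_\lambda=(-1)^{|\lambda|}q^{|\lambda|/2}\,\mathcal{N}_\varnothing\,s_\lambda(q^{-\rho}),\qquad\text{equivalently}\qquad \mathcal{N}_{\lambda'}=q^{-|\lambda|/2}\,\mathcal{N}_\varnothing\,s_\lambda(q^{-\rho}),
\]
the two being interchangeable since $\mathcal{N}'_\lambda=(-1)^{|\lambda|}q^{|\lambda|}\mathcal{N}_{\lambda'}$, which is immediate from (\ref{eq:N}), (\ref{eq:N'}) and $\kappa_{\lambda'}=-\kappa_\lambda$.

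It remains to verify this normalization identity from the closed forms (\ref{eq:N}), (\ref{eq:N'}). The decisive input is the product identity $\prod_{i,j>0}(1-q^{\,i+j-1-\lambda_i-\lambda'_j})=\prod_{i,j>0}(1-q^{\,i+j-1})\prod_{(i,j)\in\lambda}(1-q^{h(i,j)})(1-q^{-h(i,j)})$, which is the multiplicative form of the complementarity of $\lambda+\rho$ and $-\lambda'-\rho$ already invoked in Section \ref{sec:reviews}; feeding it in, together with the principal specialization $s_\lambda(q^{-\rho})=q^{|\lambda|/2+n(\lambda)}\prod_{(i,j)\in\lambda}(1-q^{h(i,j)})^{-1}$ (with $n(\lambda)=\sum_i(i-1)\lambda_i$), the hook sum $\sum_{(i,j)\in\lambda}h(i,j)=n(\lambda)+n(\lambda')+|\lambda|$, the relation $n(\lambda')-n(\lambda)=\kappa_\lambda/2$, and $\prod_{i,j>0}(1-q^{i+j-1})=\mathcal{N}_\varnothing^{-2}$, reduces it to an elementary check; an induction on $|\lambda|$ comparing the behaviour of both sides under adding a box would do equally well. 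I expect this last step, keeping the signs and half-integer powers of $q$ straight, to be the only genuinely fiddly part — all of the content sits in the first three paragraphs.
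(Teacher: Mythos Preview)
Your argument is correct and follows the same overall arc as the paper's: write $W_{\lambda\mu}=q^{(\kappa_\lambda+\kappa_\mu)/2}s_\lambda(q^{-\rho})\,s_\mu(q^{-\lambda-\rho})$, identify each mixed inner product with a single Schur evaluation (these are exactly the formulas recorded at the end of Appendix~B), and then match the leftover normalization. The only organizational difference is in this last step. The paper routes through the relation~(\ref{eq:Schur_flip_unfinished}) established in the proof of Proposition~\ref{prop:Schur_flip}, which expresses $s_{\mu'}(q^{-\lambda-\rho-1/2})$ as a product of two inner products and lets the $\mathcal{N}$-factors telescope; you instead isolate the single identity $\mathcal{N}'_\lambda=(-1)^{|\lambda|}q^{|\lambda|/2}\mathcal{N}_\varnothing\,s_\lambda(q^{-\rho})$ and verify it directly from the hook-length principal specialization together with the product identity for $\prod_{i,j}(1-q^{\,i+j-1-\lambda_i-\lambda'_j})$. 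Your route is more self-contained and makes the role of the hook-length formula explicit, at the price of proving that product identity from scratch; the paper's route recycles its appendix machinery but is otherwise the same computation. Your sign check via Murnaghan--Nakayama in the first paragraph is also correct and is in fact needed, since~(\ref{eq:Schur_free_field}) as written suppresses the $(-1)^{|\mu|}$ that appears in the Appendix~B inner-product formulas you rely on.
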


\begin{proof}
In order to prove (\ref{eq:tv_innerproduct1}), we rewrite the components of the topological vertex by the Schur polynomials by using (\ref{eq:tv_C}), 
\begin{equation}
C_{\varnothing\lambda\mu}=q^\frac{\kappa_\lambda}{2}s_{\mu'}(q^{-\rho})s_\lambda(q^{-\mu'-\rho}) .
\end{equation}
From (\ref{eq:Schur_flip_unfinished}), we can write the Schur polynomial by the inner products as
\begin{equation}
s_{\mu'}(q^{-\lambda-\rho-\frac{1}{2}})=(-q)^{-|\mu|}\left(\mathcal{N}_{\lambda'}\mathcal{N'}_{\lambda'}\right)^{-1}
\braket{\lambda|\varnothing}_\mathcal{S}\lsubscript{\braket{\mu'|\lambda}}{\mathcal{S}} .
\end{equation}
Using this relation and $\mathcal{N'}_\lambda^{-1}\mathcal{N'}_{\lambda'}=q^\frac{\kappa_\lambda}{2}$, we can rewrite $C_{\varnothing\lambda\mu}$ as
\begin{align}
C_{\varnothing\lambda\mu}&=q^\frac{\kappa_\lambda}{2}(-q^{-\frac{1}{2}})^{|\lambda|+|\mu|}\left(\mathcal{N}_\varnothing \mathcal{N'}_\varnothing \right)^{-1}\left(\mathcal{N}_\mu\mathcal{N'}_\mu\right)^{-1}
\braket{\varnothing|\varnothing}_\mathcal{S}\lsubscript{\braket{\mu'|\varnothing}}{\mathcal{S}}
\braket{\mu'|\varnothing}_\mathcal{S}\lsubscript{\braket{\lambda|\mu'}}{\mathcal{S}} \notag\\
&=q^\frac{\kappa_\lambda}{2}(-q^{-\frac{1}{2}})^{|\lambda|+|\mu|}\mathcal{N}_\varnothing^{-1}\lsubscript{\braket{\lambda|\mu'}}{\mathcal{S}} .
\end{align}
We can derive  (\ref{eq:tv_innerproduct2}) in the same way.
\end{proof}

By using this proposition, we can express the symmetry of the two subscripts of the topological vertex as the symmetry of the inner product.

\begin{prop} \label{prop:innerproduct_sym}
We have the following properties of the inner product,
\begin{align}
\lsubscript{\braket{\lambda|\mu}}{\mathcal{S}}&=\lsubscript{\braket{\mu|\lambda}}{\mathcal{S}} ,\label{eq:sym_innerproduct1}\\
\braket{\lambda|\mu}_\mathcal{S}&=\braket{\mu|\lambda}_\mathcal{S} ,\label{eq:sym_innerproduct2}\\
\braket{\mu|\lambda}_\mathcal{S}&=(-q)^{|\lambda|+|\mu|}\lsubscript{\braket{\lambda'|\mu'}}{\mathcal{S}}.\label{eq:sym_innerproduct3}
\end{align}
for any Young diagrams $\lambda$ and $\mu$.
\end{prop}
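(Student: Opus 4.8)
The plan is to derive all three identities directly from Proposition~\ref{prop:tv_innerproduct}, which already expresses both mixed inner products as the Hopf-link invariant $W_{\lambda\mu}$ times an explicit scalar. The only structural inputs I need are the index-symmetry $W_{\lambda\mu}=W_{\mu\lambda}$, which is read off from (\ref{eq:tv_sym}) (both $q^{\kappa_\mu/2}C_{\varnothing\lambda\mu'}$ and $q^{\kappa_\lambda/2}C_{\varnothing\mu\lambda'}$ equal $W_{\lambda\mu}$, and the latter is $W_{\mu\lambda}$), together with the transpose rules $\kappa_{\lambda'}=-\kappa_\lambda$ and $|\lambda'|=|\lambda|$.

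For (\ref{eq:sym_innerproduct1}) I would solve (\ref{eq:tv_innerproduct1}) for the bracket, obtaining $\lsubscript{\braket{\lambda|\mu}}{\mathcal{S}}=q^{-\frac{\kappa_\lambda}{2}-\frac{\kappa_\mu}{2}}(-q^{-\frac12})^{-|\lambda|-|\mu|}\mathcal{N}_\varnothing\,W_{\lambda\mu}$. The scalar prefactor is manifestly invariant under $\lambda\leftrightarrow\mu$, and $W_{\lambda\mu}=W_{\mu\lambda}$, so the right-hand side is symmetric, which is (\ref{eq:sym_innerproduct1}). For (\ref{eq:sym_innerproduct2}) I would do the same with (\ref{eq:tv_innerproduct2}): solving for $\braket{\mu'|\lambda'}_\mathcal{S}$ exhibits it as a prefactor symmetric in $\{\lambda,\mu\}$ times $W_{\lambda\mu}$; since every partition is the transpose of a partition, running this with $\lambda,\mu$ replaced by $\lambda',\mu'$ gives $\braket{\mu|\lambda}_\mathcal{S}=\braket{\lambda|\mu}_\mathcal{S}$, i.e.\ (\ref{eq:sym_innerproduct2}). (Here one may re-express $\kappa_{\lambda'}=-\kappa_\lambda$ to match notations, but that is not needed for the symmetry itself.)

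The remaining identity (\ref{eq:sym_innerproduct3}) relates the two different mixed brackets, and it is obtained by eliminating $W_{\lambda\mu}$ between (\ref{eq:tv_innerproduct1}) and (\ref{eq:tv_innerproduct2}). The common factor $q^{\frac{\kappa_\lambda}{2}+\frac{\kappa_\mu}{2}}\mathcal{N}_\varnothing^{-1}$ cancels, leaving a proportionality $\braket{\mu'|\lambda'}_\mathcal{S}=c_{\lambda\mu}\,\lsubscript{\braket{\lambda|\mu}}{\mathcal{S}}$ with $c_{\lambda\mu}$ an explicit sign times a monomial in $q$ depending only on $|\lambda|+|\mu|$. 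Replacing $\lambda\to\lambda'$, $\mu\to\mu'$ and using $|\lambda'|=|\lambda|$ then turns the left side into $\braket{\mu|\lambda}_\mathcal{S}$ and the right side into a multiple of $\lsubscript{\braket{\lambda'|\mu'}}{\mathcal{S}}$, which is (\ref{eq:sym_innerproduct3}).

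There is no conceptual obstacle once Proposition~\ref{prop:tv_innerproduct} and the symmetry of $W$ are in hand; the only step demanding care is the bookkeeping of signs and powers of $q$ in the coefficient $c_{\lambda\mu}$ of (\ref{eq:sym_innerproduct3}) and in the transpose relabelling. A convenient internal check is that the three identities are mutually consistent — composing (\ref{eq:sym_innerproduct3}) with (\ref{eq:sym_innerproduct1}) must reproduce (\ref{eq:sym_innerproduct2}) — which pins down the exponent in $c_{\lambda\mu}$ unambiguously.
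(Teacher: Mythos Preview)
Your proposal is correct and follows essentially the same approach as the paper: the paper's proof is the single sentence ``They are immediate consequence of eqs.~(\ref{eq:tv_sym},\ref{eq:tv_innerproduct1},\ref{eq:tv_innerproduct2}),'' and you have simply unpacked that sentence. Your caution about the bookkeeping in $c_{\lambda\mu}$ is well placed, since eliminating $W_{\lambda\mu}$ between (\ref{eq:tv_innerproduct1}) and (\ref{eq:tv_innerproduct2}) directly yields $\lsubscript{\braket{\lambda|\mu}}{\mathcal{S}}=(-q)^{|\lambda|+|\mu|}\braket{\mu'|\lambda'}_\mathcal{S}$, and the transpose relabelling then gives (\ref{eq:sym_innerproduct3}) up to inverting the power of $(-q)$; your consistency check with (\ref{eq:sym_innerproduct1}) and (\ref{eq:sym_innerproduct2}) is exactly the right way to pin this down.
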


\begin{proof}
They are immediate consequence of eqs.(\ref{eq:tv_sym},\ref{eq:tv_innerproduct1},\ref{eq:tv_innerproduct2}).
\end{proof}


\subsection{Intertwiners and the S-dual basis}
In this subsection, we go further to express the general topological string by the (dual) basis.  For that purpose, it is more convenient to work with the intertwiners ${\Phi}^*$ and ${\Phi}$ \cite{Awata:2011ce} which are obtained by contracting the topological vertex with the basis. As the name suggests, it gives the intertwiners between the representations of DIM:
\begin{eqnarray}
\Phi: \mathcal{F}^{(0,1)}_v \otimes \mathcal{F}^{(1,N)}_u \rightarrow \mathcal{F}^{(1,N+1)}_{-vu}\nonumber\\
\Phi^*: \mathcal{F}^{(1,N+1)}_{-vu}\rightarrow \mathcal{F}^{(0,1)}_v \otimes \mathcal{F}^{(1,N)}_u \nonumber
\end{eqnarray}
where $\mathcal{F}^{(l,m)}_u$ is the Fock space representation of $(l,m)$ representation with weight $u$.  They are expressed as the summation over the topological vertex whose subscript is contracted with the fermionic basis $|\lambda\rangle$ or $\langle\lambda|$ with the proper weight factors.
For simplicity, we start from giving the expression for $u=v=1$ case and denote the corresponding intertwiners as $\tilde\Phi$ and $\tilde\Phi^*$.
\begin{defn} \label{def:intertwiner}
We define the intertwiners $\tilde{\Phi}^*$ and $\tilde{\Phi}$ as
\begin{align}
\tilde{\Phi}^*&\equiv \sum_{\lambda, \mu, \nu} \left( (-1)^N q^{\frac{1}{2}}\right)^{|\lambda|}
f_\lambda^N \cdot \left(-q^{-\frac{1}{2}}\right)^{|\mu|}f_\mu^{-1} \cdot q^{{\frac{1}{2}}|\nu|}\cdot C_{\nu' \mu \lambda} \ket{\lambda} \otimes \ket{\mu} \otimes \bra{\nu},\label{eq:Phi*_def}\\
\tilde{\Phi}&\equiv \sum_{\lambda, \mu, \nu} \left(-(-1)^N q^{-\frac{1}{2}}\right)^{|\lambda|}
f_\lambda^{-N} \cdot q^{\frac{1}{2}|\mu|}f_\mu \cdot (-q^{-\frac{1}{2}})^{|\nu|}
\cdot C_{\nu \mu' \lambda'} \bra{\lambda} \otimes \bra{\mu} \otimes \ket{\nu} .\label{eq:Phi_def}
\end{align}
Here, $f_\lambda=(-1)^{|\lambda|}q^\frac{\kappa_\lambda}{2}$ is the frame factor and $C_{\lambda \mu \nu}$ is the topological vertex.
\end{defn}

\begin{defn}
We define the operator $\hat{w}$ by one set of the fermions as
\begin{equation}
\hat{w}\equiv\sum_{r\in \mathbb{Z}+\frac{1}{2}} r^2 :\psi^\dagger_r \psi_{-r}: .
\end{equation}
We note that the standard basis is its eigenstate,
\begin{equation}
\hat{w}\ket{\lambda}=\kappa_\lambda\ket{\lambda}, \label{eq:w}
\end{equation}
for any Young diagram $\lambda$.
\end{defn}

Using the above, we describe the main result in this subsection.
\begin{thm} \label{thm:int}
Use of the S-dual basis gives a symmetric representation of
the intertwiners $\tilde{\Phi}^*$ and $\tilde{\Phi}$:
\begin{align}
\tilde{\Phi}^*&=\mathcal{N}_\varnothing^{-1} \times\lsuperscript{\bra{0}}{3}\exp\left[\sum_{n>0}\frac{1}{n}\left(-(-1)^n b_n^{(1)}a_n^{(3)}+a_{-n}^{(2)}a_n^{(3)}+(-1)^n b_{-n}^{(1)}a_{-n}^{(2)}\right)\right]
q^{\frac{N}{2}\hat{w}^{(1)}}q^{L_0^{(1)}}\ket{0}^1_\mathcal{S}\ket{0}^2,\label{eq:Phi*}\\
\tilde{\Phi}&=\mathcal{N}_\varnothing^{-1} \times\lsuperscript{\bra{0}}{2}\twolscripts{\bra{0}}{1}{\mathcal{S}}(-q)^{-L_0^{(1)}}
q^{-\frac{N}{2}\hat{w}^{(1)}}\exp\left[\sum_{n>0}\frac{1}{n}\left((-1)^n b_n^{(1)}a_n^{(2)}+a_n^{(2)}a_{-n}^{(3)}-(-1)^n b_{-n}^{(1)}a_{-n}^{(3)}\right)\right]\ket{0}^3.\label{eq:Phi}
\end{align}
We use the notation: $a^{(i)}, b^{(i)}$ are the operators acting on the $i$-th Fock space. Similarly $|0\rangle^{i}, |0\rangle^{i}_\mathcal{S}$ are the vacuum of the (standard and S-dual) basis of the $i$-th Fock space. $L_0=\sum_{n>0} a_{-n} a_n$ is the degree operator.
\end{thm}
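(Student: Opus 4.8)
The plan is to start from the explicit defining sums \eqref{eq:Phi*_def}, \eqref{eq:Phi_def} and convert every ingredient into an exponential of bilinears in oscillators acting on the three vacua. First I would use the Schur-function form of the topological vertex in Definition \ref{def:tv_C}, together with the Cauchy identities for (skew) Schur polynomials, to resum the $\eta$-sum and then the sums over two of the three Young diagrams. Concretely, writing $C_{\nu'\mu\lambda}=q^{\kappa_\mu/2}s_{\lambda'}(q^{-\rho})\sum_\eta s_{\nu/\eta}(q^{-\lambda-\rho})s_{\mu'/\eta}(q^{-\lambda'-\rho})$ and recalling that a sum $\sum_\mu s_{\mu'/\eta}(x)\,(\text{vertex factor in }\mu)\,|\mu\rangle$ can be recognized, via the fermionic/bosonic vertex-operator realization of Schur functions, as the action of a bosonic exponential on $|\eta\rangle$ — and similarly for the $\lambda$ and $\nu$ sums against the bases $|\lambda\rangle$, $\bra{\nu}$ — one collapses \eqref{eq:Phi*_def} into a product of three exponential operators. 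The remaining single $\eta$-sum is then a Cauchy kernel that assembles into $\exp\!\big[\sum_n \tfrac1n a_{-n}^{(2)}a_n^{(3)}\big]$-type factors.

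The second, more delicate, step is to replace the sums over $|\lambda\rangle$ on Fock space $1$ by the S-dual vacuum $\ket{0}^1_\mathcal{S}$. Here I would invoke Proposition \ref{prop:tv_innerproduct} and the identity \eqref{eq:Schur_flip_unfinished} (used already in its proof): the factor $s_{\lambda'}(q^{-\rho})$ times the weight $\big((-1)^Nq^{1/2}\big)^{|\lambda|}f_\lambda^N$ summed against $|\lambda\rangle$ is, up to $\mathcal{N}_\varnothing^{-1}$, exactly a matrix element of the form $\lsubscript{\bra{0}}{\mathcal S}(\cdots)$, which is what produces the $\lsubscript{\bra{0}}{1,\mathcal S}$ bra and the operators $b_n^{(1)}$ (the $b$-oscillators being the ones diagonal on $|\lambda\rangle$, dual to the $a$-oscillators on $\ket{\lambda}_\mathcal{S}$, by \eqref{eq:b|lambda>} and Proposition \ref{prop:a_action_on_S}). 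The framing/$\kappa_\lambda$-dependence is absorbed into $q^{\frac{N}{2}\hat w^{(1)}}$ using \eqref{eq:w}, and the $q^{|\lambda|/2}$-type powers into $q^{L_0^{(1)}}$; the sign $(-1)^{N|\lambda|}$ combines with $f_\lambda^N=(-1)^{N|\lambda|}q^{N\kappa_\lambda/2}$ consistently. The alternating signs $(-1)^n$ in front of $b_n^{(1)}a_n^{(3)}$ and $b_{-n}^{(1)}a_{-n}^{(2)}$ track the substitution $q^{-\rho}\to q^{-\rho}$ versus $q^{-\lambda-\rho}$ shifts, i.e. the sign $(-1)$ picked up in $p_n(q^{-\rho})$ versus the S-dual conventions in Definition \ref{defn:S-dual_basis}.

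Once both \eqref{eq:Phi*_def} and \eqref{eq:Phi_def} have been brought to the exponential-of-bilinears form, the third step is purely bookkeeping: match the three bilinear terms and their $(-1)^n$ signs, the prefactor $\mathcal{N}_\varnothing^{-1}$, and the diagonal operators $q^{\frac{N}{2}\hat w^{(1)}}$, $q^{L_0^{(1)}}$ (resp. $(-q)^{-L_0^{(1)}}$, $q^{-\frac{N}{2}\hat w^{(1)}}$ for $\tilde\Phi$) against the right-hand sides of \eqref{eq:Phi*}, \eqref{eq:Phi}. The computation for $\tilde\Phi$ is the ``conjugate'' of the one for $\tilde\Phi^*$: one repeats the argument with bras and kets exchanged, $\lambda'\leftrightarrow\lambda$, and $q\to q^{-1}$ in the relevant places, and the extra $(-q)^{-L_0^{(1)}}$ is precisely the flip factor relating $\braket{\mu|\lambda}_\mathcal{S}$ to $\lsubscript{\braket{\lambda'|\mu'}}{\mathcal S}$ in \eqref{eq:sym_innerproduct3}.

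I expect the main obstacle to be the careful handling of the normal-ordering/regularization when trading the infinite sum over $|\lambda\rangle$ for $\ket{0}_\mathcal{S}$: the S-dual states are defined through a regularized limit (Definition \ref{defn:S-dual_basis}), so one must check that the formal Cauchy-identity manipulations are compatible with that regularization and that the normalizing constants $\mathcal{N}_\lambda$, $\mathcal{N}'_\lambda$ collapse correctly to the single overall $\mathcal{N}_\varnothing^{-1}$ — in effect re-deriving Proposition \ref{prop:tv_innerproduct} inside the intertwiner. Everything else (the Schur–Cauchy resummations and the sign/framing accounting) is routine though lengthy.
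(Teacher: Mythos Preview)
Your outline is correct and follows the same route as the paper: expand $C_{\nu'\mu\lambda}$ via Definition~\ref{def:tv_C}, write each (skew) Schur function as a bosonic matrix element using \eqref{eq:Schur}, collapse the $\eta$-sum into the Cauchy kernel $\exp\big[\sum_{n>0}\tfrac{1}{n}a_{-n}^{(2)}a_n^{(3)}\big]$, and absorb the framing into $q^{\frac{N}{2}\hat w^{(1)}}$ and the degree into $q^{L_0^{(1)}}$.

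The one place where you are making life harder than necessary is your ``second, more delicate, step.'' You do not need Proposition~\ref{prop:tv_innerproduct} or \eqref{eq:Schur_flip_unfinished}, and no nontrivial $\mathcal{N}_\lambda$'s ever appear. The paper's trick is simply this: after the Cauchy resummation the only $\lambda$-dependence left in the scalars is through $p_n(q^{-\lambda-\rho+1/2})$ and $p_n(q^{-\lambda'-\rho-1/2})$, which by \eqref{eq:b|lambda>} are the eigenvalues of $-b_n^{(1)}$ and $b_{-n}^{(1)}$ on $\ket{\lambda}^1$. Replace those scalars by the operators acting on $\ket{\lambda}^1$, and the entire $\lambda$-sum becomes $\sum_\lambda \ket{\lambda}^1\,{}^1\!\bra{\lambda}=\mathbf{1}$ acting on the single state $q^{\frac{N}{2}\hat w^{(1)}}\exp\!\big[-\sum_{n>0}\tfrac{1}{n}p_n(q^{-\rho+1/2})a_{-n}^{(1)}\big]\ket{0}^1$. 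That exponential is, by Definition~\ref{defn:S-dual_basis} with $\lambda=\varnothing$, precisely $\mathcal{N}_\varnothing^{-1}\ket{0}^1_{\mathcal S}$ (a \emph{ket}, not a bra, for $\tilde\Phi^*$). So there is no regularized limit to track and no cascade of $\mathcal{N}_\lambda$ to collapse; only $\mathcal{N}_\varnothing$ enters, straight from the definition of the S-dual vacuum. The computation for $\tilde\Phi$ is indeed the conjugate one, as you say.
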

\begin{proof}
We consider $\tilde{\Phi}^*$.
Using Definition \ref{def:tv_C}, (\ref{eq:Schur}) and the definition of the frame factor, we can rewrite (\ref{eq:Phi*_def}) as
\begin{align}
\tilde{\Phi}^*&=\sum_{\lambda, \mu, \nu} q^{\frac{N}{2}\kappa_\lambda} s_{\lambda'}(q^{-\rho+\frac{1}{2}})
\sum_\eta s_{\nu / \eta}(q^{-\lambda -\rho+\frac{1}{2}}) s_{\mu / \eta}(q^{-\lambda'-\rho-\frac{1}{2}})
\ket{\lambda}\otimes \ket{\mu} \otimes \bra{\nu} \notag\\
&= \sum_{\lambda, \mu, \nu, \eta} q^{\frac{N}{2}\kappa_\lambda}\times
\lsuperscript{\bra{\lambda}}{1}\exp\left[-\sum_{n>0}\frac{p_n(q^{-\rho+\frac{1}{2}})}{n}a_{-n}\right]\ket{0}^1\times
\lsuperscript{\bra{\eta}}{3}\exp\left[\sum_{n>0}\frac{p_n(q^{-\lambda-\rho+\frac{1}{2}})}{n}(-1)^n a_n\right]\ket{\nu}^3\notag\\
&\quad\times\lsuperscript{\bra{\mu}}{2}\exp\left[\sum_{n>0}\frac{p_n(q^{-\lambda'-\rho-\frac{1}{2}})}{n}(-1)^n a_{-n}\right]\ket{\eta}^2
\times\ket{\lambda}^3 \ket{\mu}^2 \lsuperscript{\bra{\nu}}{3}. \label{eq:Phi_proof1}
\end{align}
Then, using (\ref{eq:w}) and
\begin{equation}
\sum_\eta \ket{\eta}^2 \lsuperscript{\bra{\eta}}{3}
=\lsuperscript{\bra{0}}{3}\exp\left[\sum_{n>0}\frac{1}{n}a_{-n}^{(2)}a_n^{(3)}\right]\ket{0}^2,
\end{equation}
we can rewrite (\ref{eq:Phi_proof1}) as
\begin{align}
\tilde{\Phi}^*&=\sum_{\lambda, \eta} \ket{\lambda}^1\lsuperscript{\bra{\lambda}}{1}q^{\frac{N}{2}\hat{w}}\exp\left[-\sum_{n>0}\frac{p_n(q^{-\rho+\frac{1}{2}})}{n}a_{-n}^{(1)}\right]\ket{0}^1\notag\\
&\quad\times\exp\left[\sum_{n>0}\frac{p_n(q^{-\lambda'-\rho-\frac{1}{2}})}{n}(-1)^n a_{-n}^{(2)}\right]\ket{\eta}^2
\lsuperscript{\bra{\eta}}{3}\exp\left[\sum_{n>0}\frac{p_n(q^{-\lambda-\rho+\frac{1}{2}})}{n}(-1)^n a_n^{(3)}\right]\notag\\
&=\sum_\lambda \ket{\lambda}^1\lsuperscript{\bra{\lambda}}{1}q^{\frac{N}{2}\hat{w}}\exp\left[-\sum_{n>0}\frac{p_n(q^{-\rho+\frac{1}{2}})}{n}a_{-n}^{(1)}\right]\ket{0}^1\notag\\
&\quad\times\lsuperscript{\bra{0}}{3}\exp\left[\sum_{n>0}\frac{p_n(-q^{-\lambda-\rho+\frac{1}{2}})}{n}a_n^{(3)}\right]
\exp\left[\sum_{n>0}\frac{1}{n}a_{-n}^{(2)}a_n^{(3)}\right]
\exp\left[\sum_{n>0}\frac{p_n(-q^{-\lambda'-\rho-\frac{1}{2}})}{n}a_{-n}^{(2)}\right]\ket{0}^2. \label{eq:Phi_proof2}
\end{align}
Furthermore, using
\begin{equation}
b_{-m}\ket{\lambda}=p_m(q^{-\lambda'-\rho-\frac{1}{2}})\ket{\lambda}
\end{equation}
and
\begin{equation}
b_m\ket{\lambda}=-p_m(q^{-\lambda-\rho+\frac{1}{2}})\ket{\lambda},
\end{equation}
we can rewrite (\ref{eq:Phi_proof2}) as
\begin{align}
\tilde{\Phi}^*&=\lsuperscript{\bra{0}}{3}\exp\left[\sum_{n>0}\frac{1}{n}\left(-(-1)^nb_n^{(1)}a_n^{(3)}+a_{-n}^{(2)}a_n^{(3)}+(-1)^n b_{-n}^{(1)}a_{-n}^{(2)}\right)\right]\notag\\
&\quad\times q^{\frac{N}{2}\hat{w}^{(1)}}\exp\left[-\sum_{n>0}\frac{p_n(q^{-\rho+\frac{1}{2}})}{n}a_{-n}^{(1)}\right]
\ket{0}^1\ket{0}^2.
\end{align}
By using Definition \ref{defn:S-dual_basis} and the fact that
\begin{equation}
L_0 \prod_{i=1}^m a_{-n_i}\ket{0}=\left(\sum_{i=1}^m n_i \right)\prod_{i=1}^m a_{-n_i}\ket{0},
\end{equation}
we obtain (\ref{eq:Phi*}).
We can prove (\ref{eq:Phi}) in the same way.
\end{proof}

One may easily include the parameters $u, v$ in the Fock space $\mathcal{F}^{(l,m)}$ to derive the general expression of the intertwiners.

\begin{defn} \label{def:AFSintertwiner}
General expression of the intertwiners $\Phi^*$ and $\Phi$ \cite{Awata:2011ce}:\
\begin{align}
\Phi^*&\equiv \sum_{\lambda, \mu, \nu} \left( \frac{(-u)^N}{q^{-\frac{1}{2}}v}\right)^{|\lambda|}
f_\lambda^N \cdot \left(-q^{-\frac{1}{2}}u\right)^{|\mu|}f_\mu^{-1} \cdot (q^{-\frac{1}{2}}u)^{-|\nu|}\cdot C_{\nu' \mu \lambda} \ket{\lambda} \otimes \ket{\mu} \otimes \bra{\nu},\label{eq:AFSPhi*_def}\\
\Phi&\equiv \sum_{\lambda, \mu, \nu} \left(-\frac{q^{-\frac{1}{2}}u}{(-v)^N}\right)^{|\lambda|}
f_\lambda^{-N} \cdot \left(q^{-\frac{1}{2}}v\right)^{-|\mu|}f_\mu \cdot (-q^{-\frac{1}{2}}v)^{|\nu|}
\cdot C_{\nu \mu' \lambda'} \bra{\lambda} \otimes \bra{\mu} \otimes \ket{\nu} .\label{eq:AFSPhi_def}
\end{align}
\end{defn}

For simplicity, we define the following constants.
\begin{defn}
We define the constants $\alpha, \tilde{\alpha}, \beta, \tilde{\beta}$ as
\begin{align}
\alpha &\equiv q^\frac{1}{2} u^N v^{-1} , \\
\beta &\equiv -q^{-\frac{1}{2}}u , \\
\tilde{\alpha} &\equiv q^\frac{1}{2} u^{-1}v^N , \\
\tilde{\beta} &\equiv -q^{-\frac{1}{2}}v .
\end{align}
\end{defn}

In the same way as Theorem \ref{thm:int}, we can prove the following theorem.
\begin{thm} \label{thm:AFSint}
We can express the intertwiners $\Phi^*$ and $\Phi$ by the bosonic oscillators, the standard basis and the S-dual basis as follows.
\begin{align}
\Phi^*&=\mathcal{N}_\varnothing^{-1} \times\lsuperscript{\bra{0}}{3}\exp\left[\sum_{n>0}\frac{1}{n}\left(-\beta^{-n}q^{-\frac{n}{2}}b_n^{(1)}a_n^{(3)}+a_{-n}^{(2)}a_n^{(3)}+\beta^n q^\frac{n}{2}b_{-n}^{(1)}a_{-n}^{(2)}\right)\right]\notag\\
&\quad\times q^{\frac{N}{2}\hat{w}^{(1)}}(\alpha q^\frac{1}{2})^{L_0^{(1)}}\ket{0}^1_\mathcal{S}\ket{0}^2,\label{eq:AFSPhi*}\\
\hfil\notag\\
\Phi&=\mathcal{N}_\varnothing^{-1} \times\lsuperscript{\bra{0}}{2}\twolscripts{\bra{0}}{1}{\mathcal{S}}(-\tilde{\alpha}q^\frac{1}{2})^{-L_0^{(1)}}
q^{-\frac{N}{2}\hat{w}^{(1)}}\notag\\
&\quad\times\exp\left[\sum_{n>0}\frac{1}{n}\left(\tilde{\beta}^{-n}q^{-\frac{n}{2}}b_n^{(1)}a_n^{(2)}+a_n^{(2)}a_{-n}^{(3)}-\tilde{\beta}^n q^\frac{n}{2}b_{-n}^{(1)}a_{-n}^{(3)}\right)\right]\ket{0}^3.\label{eq:AFSPhi}
\end{align}
\end{thm}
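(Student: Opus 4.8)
The plan is to follow exactly the same strategy as in the proof of Theorem \ref{thm:int}, now carrying the spectral parameters $u$ and $v$ through every step. The only difference from the $u=v=1$ case is the appearance of the weight-dependent prefactors in Definition \ref{def:AFSintertwiner}, so the bulk of the work is bookkeeping rather than new ideas. First I would substitute Definition \ref{def:tv_C} for $C_{\nu'\mu\lambda}$ into \eqref{eq:AFSPhi*_def}, rewrite the Schur and skew-Schur polynomials as bosonic matrix elements using \eqref{eq:Schur}, and collect the $|\lambda|$-, $|\mu|$-, $|\nu|$-dependent powers of $u,v,q$. The key observation is that a factor $t^{|\lambda|}$ multiplying $\ket{\lambda}^1\lsuperscript{\bra{\lambda}}{1}$ can be absorbed as $t^{L_0^{(1)}}$ once we recall, from the last display in the proof of Theorem \ref{thm:int}, that the intermediate $\lambda$-sum closes up into $\ket{\lambda}^1\lsuperscript{\bra{\lambda}}{1}$ acting on a coherent state; similarly the factor $q^{\frac{N}{2}\kappa_\lambda}$ becomes $q^{\frac{N}{2}\hat w^{(1)}}$ via \eqref{eq:w}. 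Tracking the exponents one finds the combination $q^{\frac12}u^N v^{-1}=\alpha$ multiplying $L_0^{(1)}$ up to the universal $q^{1/2}$, i.e. $(\alpha q^{1/2})^{L_0^{(1)}}$, while the $\mu$- and $\nu$-weights $\beta=-q^{-1/2}u$ get distributed into the rescalings $a_n^{(3)}\mapsto \beta^{-n}q^{-n/2}(\cdots)$ inside the exponential, exactly as written in \eqref{eq:AFSPhi*}.

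Concretely, the steps in order are: (1) insert \eqref{eq:tv_C} and the frame factors into \eqref{eq:AFSPhi*_def}; (2) bosonize each Schur/skew-Schur factor via \eqref{eq:Schur} so that the expression becomes a product of coherent-state matrix elements on three Fock spaces, with $\lambda,\mu,\nu,\eta$ appearing only through power sums $p_n(q^{-\lambda-\rho\pm1/2})$ etc.; (3) perform the $\eta$-sum using the identity $\sum_\eta \ket{\eta}^2\lsuperscript{\bra{\eta}}{3}=\lsuperscript{\bra{0}}{3}\exp[\sum_{n>0}\frac1n a_{-n}^{(2)}a_n^{(3)}]\ket{0}^2$ already used in the proof of Theorem \ref{thm:int}; (4) perform the $\lambda$-sum, replacing $p_m(q^{-\lambda'-\rho-1/2})$ and $-p_m(q^{-\lambda-\rho+1/2})$ by $b_{-m}^{(1)}$ and $b_m^{(1)}$ acting on $\ket{\lambda}^1\lsuperscript{\bra{\lambda}}{1}$, and $\kappa_\lambda$ by $\hat w^{(1)}$; (5) recognize the remaining $\exp[-\sum_{n>0}\frac1n p_n(q^{-\rho+1/2})a_{-n}^{(1)}]\ket{0}^1$ as $\ket{0}^1_\mathcal{S}$ via Definition \ref{defn:S-dual_basis} (note $\mathcal N_\varnothing$ is the overall normalization), and absorb the surviving power-of-$u,v,q$ prefactors of $|\lambda|,|\mu|,|\nu|$ into $(\alpha q^{1/2})^{L_0^{(1)}}$ and into the $\beta$-rescalings of the oscillators using $L_0^{(i)}\ket{\eta}^i=|\eta|\ket{\eta}^i$ and the analogous statement for coherent states. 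The computation for $\Phi$ is identical after transposing all diagrams ($\lambda\leftrightarrow\lambda'$) and replacing $(u,N)$-data by $(v,N)$-data, producing $\tilde\alpha,\tilde\beta$ in place of $\alpha,\beta$ and the bra S-dual vacuum $\twolscripts{\bra{0}}{1}{\mathcal{S}}$.

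The main obstacle is purely computational: keeping the powers of $u$, $v$, and $q$ straight across steps (4) and (5). In the $u=v=1$ proof these bookkeeping factors are invisible, so one must carefully re-examine which $|\lambda|$, $|\mu|$, $|\nu|$ exponents survive after the $\eta$-sum collapses $|\mu|,|\nu|$ against $|\eta|$, and verify that the leftover $|\lambda|$-power is precisely $\log_q(\alpha q^{1/2})$ times $|\lambda|$ so that it assembles into $(\alpha q^{1/2})^{L_0^{(1)}}$ rather than leaving a residual $\mu$- or $\nu$-dependent factor. A secondary subtlety is the placement of the $q^{\mp n/2}$ shifts accompanying $\beta^{\mp n}$ inside the exponential: these come from the half-shift $\pm1/2$ in the arguments $q^{-\lambda-\rho\pm1/2}$ versus $q^{-\rho}$ in Definition \ref{def:tv_C}, and one must check they match \eqref{eq:AFSPhi*}–\eqref{eq:AFSPhi} exactly. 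Once these are verified, the theorem follows verbatim by the argument of Theorem \ref{thm:int}; I would therefore present the proof as ``In the same way as the proof of Theorem \ref{thm:int}, now tracking the $u,v$-dependent prefactors, one obtains \eqref{eq:AFSPhi*} and \eqref{eq:AFSPhi},'' supplemented by a short display exhibiting the reorganization of the weight factors into $(\alpha q^{1/2})^{L_0^{(1)}}$ and the $\beta$-rescaled oscillators.
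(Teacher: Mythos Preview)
Your proposal is correct and mirrors the paper's own treatment: the paper simply states that Theorem \ref{thm:AFSint} is proved ``in the same way as Theorem \ref{thm:int}'' by tracking the $u,v$-dependent prefactors, which is exactly the bookkeeping argument you outline. Your step-by-step plan is in fact a more detailed exposition of what the paper leaves implicit.
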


\subsection{Amplitudes}
In this subsection, we present some simple topological string amplitudes where the S-duality is a direct consequence of the symmetry of the inner product between the standard and S-dual bases.

\subsubsection{The $U(1)$ case}
\begin{figure}[h]
\centering
\begin{tabular}{cc}
\begin{minipage}{0.45\hsize}
\centering
\includegraphics[width=7cm]{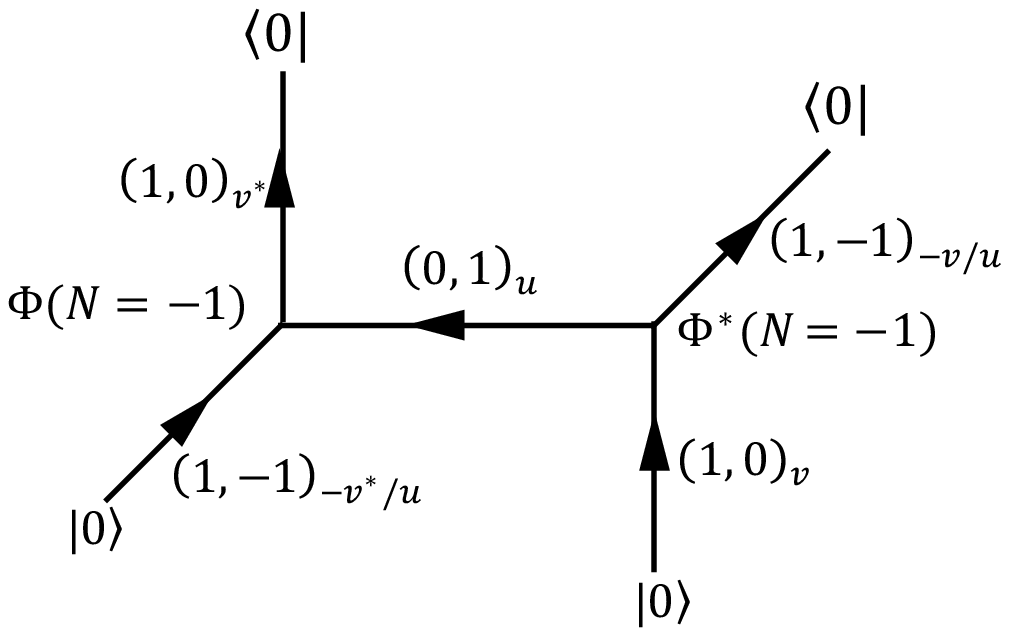}
\end{minipage}
\begin{minipage}{0.45\hsize}
\centering
\includegraphics[width=5cm]{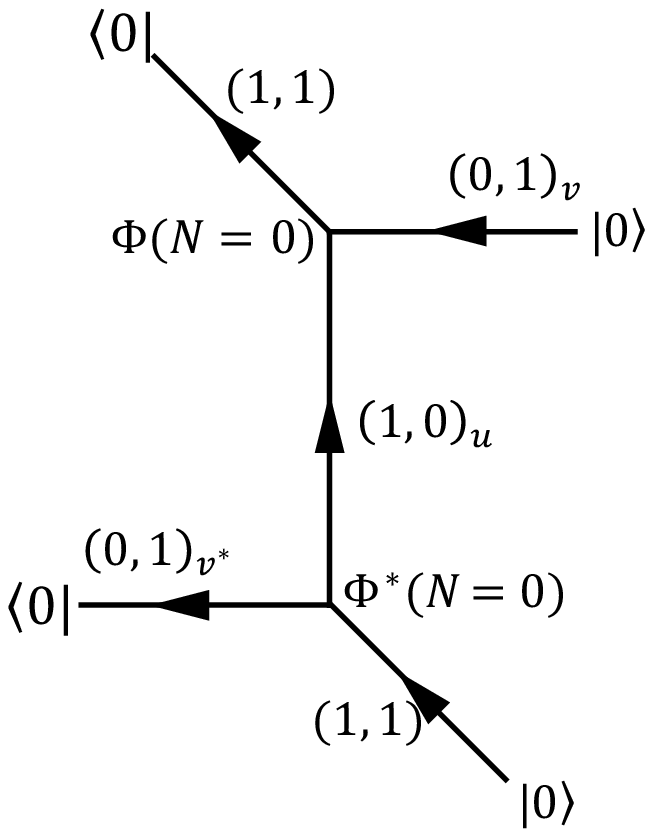}
\end{minipage}
\end{tabular}
\caption{Amplitudes of $U(1)$ and $U(1)$-dual}
\label{fig:U1ori}
\end{figure}

First, we consider the simplest case, $U(1)$, corresponding to the left of Fig. \ref{fig:U1ori}.
Using Definition \ref{def:AFSintertwiner} or Theorem \ref{thm:AFSint}, the amplitude $Z_{U(1)}$ is
\begin{align} \label{eq:ZU1}
Z_{U(1)}&=\sum_{\lambda} \bra{0} \Phi_\lambda(N=-1) \ket{0} \bra{0} \Phi_\lambda^*(N=-1) \ket{0} \notag\\
&=\sum_{\lambda}\left(-\frac{v^*}{v}\right)^{|\lambda|} C_{\varnothing \varnothing \lambda'}C_{\varnothing \varnothing \lambda} \notag\\
&=\sum_\lambda\left(-\frac{v^*}{v q}\right)^{|\lambda|}\mathcal{N}_\varnothing^{-2}
\lsubscript{\braket{\varnothing|\lambda}}{\mathcal{S}}\cdot 
\lsubscript{\braket{\varnothing|\lambda'}}{\mathcal{S}} .
\end{align}
We can also calculate the $U(1)$-dual case, corresponding to the right of Fig. \ref{fig:U1ori}.
Vertical lines and horizontal lines are exchanged in the $U(1)$ case and the $U(1)$-dual case.
The amplitude $Z'_{U(1)}$ is
\begin{align} \label{eq:ZU1dual}
Z'_{U(1)}&=\sum_{\lambda} \bra{0} \Phi_\varnothing(N=0) \ket{\lambda} \bra{\lambda} \Phi_\varnothing^*(N=0) \ket{0} \notag\\
&=\sum_{\lambda}\left(-\frac{v^*}{v}\right)^{|\lambda|} C_{\varnothing \lambda' \varnothing}C_{\varnothing \lambda \varnothing} \notag\\
&=\sum_\lambda\left(-\frac{v^*}{v q}\right)^{|\lambda|}\mathcal{N}_\varnothing^{-2}
\lsubscript{\braket{\lambda'|\varnothing}}{\mathcal{S}}\cdot
\lsubscript{\braket{\lambda|\varnothing}}{\mathcal{S}} .
\end{align}
From Proposition \ref{prop:innerproduct_sym},
\begin{equation}
Z_{U(1)}=Z'_{U(1)} .
\end{equation}
(\ref{eq:ZU1}) and (\ref{eq:ZU1dual}) suggest that the horizontal lines correspond to the standard basis $\{\ket{\lambda}\}$  and the vertical lines correspond to the S-dual basis $\{\ket{\lambda}_\mathcal{S}\}$.

\subsubsection{The $U(2)$ case}
\begin{figure}[h]
\centering
\begin{tabular}{cc}
\begin{minipage}{0.45\hsize}
\centering
\includegraphics[width=7cm]{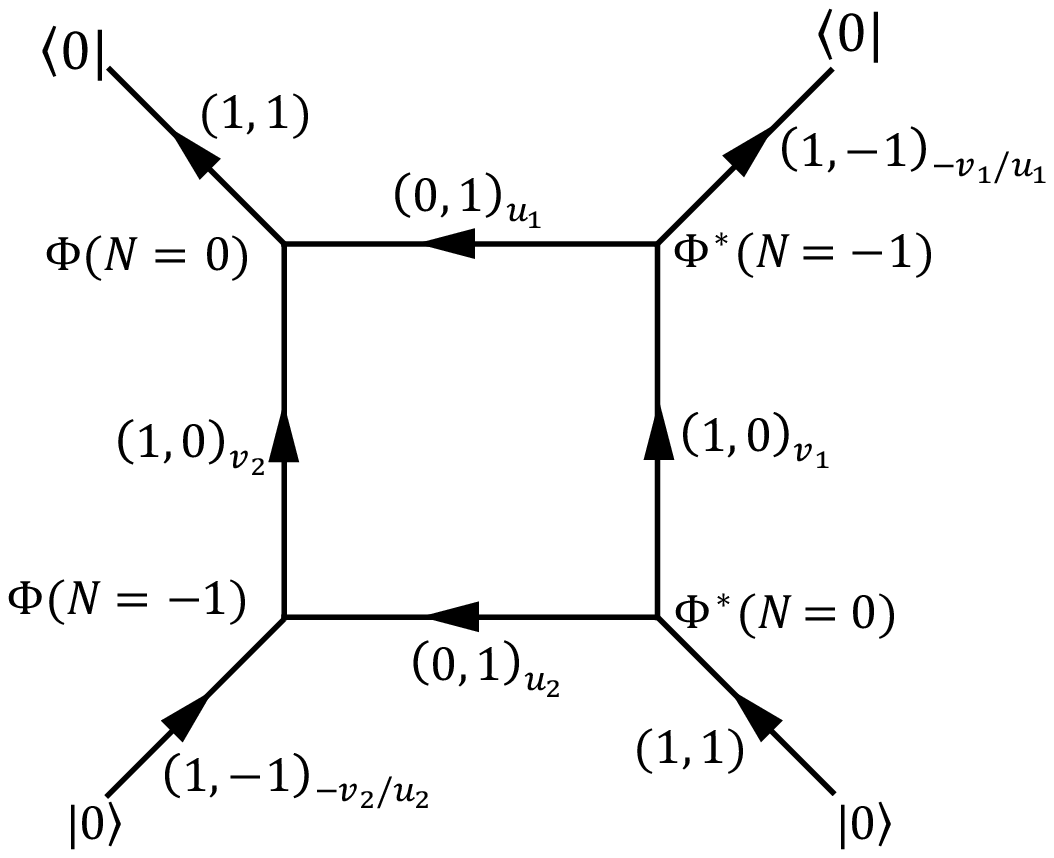}
\end{minipage}
\begin{minipage}{0.45\hsize}
\centering
\includegraphics[width=7cm]{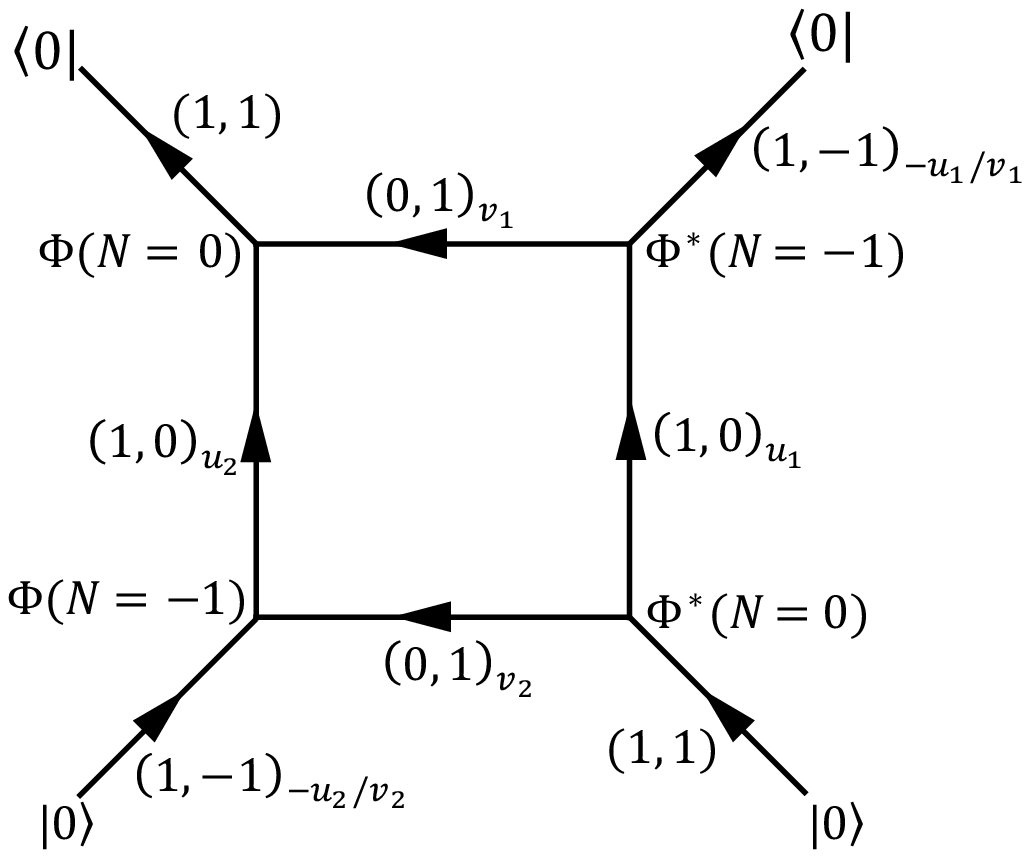}
\end{minipage}
\end{tabular}
\caption{Amplitudes of $U(2)$ and $U(2)$-dual}
\label{fig:SU2ori}
\end{figure}

We consider the more complicated case that corresponds to $U(2)$ gauge group, depicted by the left of Fig. \ref{fig:SU2ori}.
Using Definition \ref{def:AFSintertwiner} or Theorem \ref{thm:AFSint}, the amplitude $Z_{U(2)}$ is
\begin{align} \label{eq:ZSU2}
Z_{U(2)}&=\sum_{\sigma, \tau, \mu, \nu} \bra{0} \Phi_\nu(N=0)\ket{\sigma} \bra{\sigma} \Phi_\mu(N=-1) \ket{0}
\bra{0} \Phi_\nu^*(N=-1) \ket{\tau} \bra{\tau} \Phi_\mu^*(N=0) \ket{0} \notag\\
&=\sum_{\sigma, \tau, \mu, \nu}\left(\frac{v_2}{v_1}\right)^{|\mu|+|\nu|} \left(\frac{u_2}{u_1}\right)^{|\sigma|+|\tau|}
q^{\frac{\kappa_\mu}{2}-\frac{\kappa_\nu}{2}+\frac{\kappa_\sigma}{2}-\frac{\kappa_\tau}{2}}
C_{\varnothing \sigma' \nu'}C_{\sigma \varnothing \mu'}C_{\tau' \varnothing \nu}C_{\varnothing \tau \mu}\notag\\
&=\sum_{\sigma, \tau, \mu, \nu}\left(\frac{v_2}{v_1 q}\right)^{|\mu|+|\nu|} \left(\frac{u_2}{u_1 q}\right)^{|\sigma|+|\tau|}
\mathcal{N}_\varnothing^{-4}\lsubscript{\braket{\sigma|\nu}}{\mathcal{S}}\cdot\lsubscript{\braket{\sigma|\mu}}{\mathcal{S}}\cdot
\lsubscript{\braket{\tau|\nu}}{\mathcal{S}}\cdot\lsubscript{\braket{\tau|\mu}}{\mathcal{S}}.
\end{align}
Computation of the $U(2)$-dual case, corresponding to the right of Fig. \ref{fig:SU2ori} is similar.
Vertical lines and horizontal lines are exchanged in the $U(2)$ case and the $U(2)$-dual case.
The amplitude $Z'_{U(2)}$ is
\begin{equation} \label{eq:ZSU2dual}
Z'_{U(2)}
=\sum_{\sigma, \tau, \mu, \nu}\left(\frac{v_2}{v_1 q}\right)^{|\mu|+|\nu|} \left(\frac{u_2}{u_1 q}\right)^{|\sigma|+|\tau|}
\mathcal{N}_\varnothing^{-4}\lsubscript{\braket{\mu|\tau}}{\mathcal{S}}\cdot\lsubscript{\braket{\mu|\sigma}}{\mathcal{S}}\cdot
\lsubscript{\braket{\nu|\tau}}{\mathcal{S}}\cdot\lsubscript{\braket{\nu|\sigma}}{\mathcal{S}}.
\end{equation}
From Proposition \ref{prop:innerproduct_sym},
\begin{equation}
Z_{U(2)}=Z'_{U(2)} .
\end{equation}
(\ref{eq:ZSU2}) and (\ref{eq:ZSU2dual}) suggest that the horizontal lines correspond to the standard basis $\{\ket{\lambda}\}$  and the vertical lines correspond to the S-dual basis $\{\ket{\lambda}_\mathcal{S}\}$.

From the above two examples, it can be found that by introducing the S-dual basis, we can express the S-duality by the symmetry of the inner product.
As long as the amplitude is expressed by the combination of $C_{\varnothing\sigma\nu}$, similar argument can be applied to prove the S-duality of the amplitude.

\section{Conclusion} \label{sec:conclusion}

In this paper, we proposed a  new set of basis in the free fermion Fock space labeled by partitions.  It converts the $(1,0)$ representation to the $(0,1)$ representation, which realizes the S-duality, or Miki automorphism.  We argued that the straightforward change of the basis does not produce the proper change of the representation, and we need to introduce the projection operator. We note that a similar operation was used in mathematical literature such as \cite{Schiffmann:2012} to connect the positive and negative modes. Finally, we show that the S-duality of some topological string amplitudes is the direct consequence of the symmetry of the inner product between the standard and S-dual bases.

There are many things which should be studied in the future.
\begin{itemize}
\item In this paper, we restrict ourselves to the free fermionic system to explore the property of the S-duality basis.  Since the S-duality is already an established property of the DIM algebra, one should obtain a similar result for more general parameter choice.
\item In Section \ref{sec:modification} , we note that we need the extra modification of the generators and the shift by $(z/v)^{\pm 1}$ to obtain the proper $(0,1)$  representation.  It may be a manifestation of a quantum anomaly, but we do not know its exact nature at this moment.
\item In principle, there should be a new basis for each generator of $SL(2,\mathbb{Z})$.
\end{itemize}

\section*{Acknowledgement}
We would like to thank J.-E. Bourgine, O. Foda, M. Fukuda, R.-D. Zhu and K. Harada for the valuable discussions. The research of AW was partially supported by the Program for Leading Graduate Schools, MEXT, Japan. 
The research of YM is partially supported by
Grant-in-Aid MEXT/JSPS KAKENHI 18K03610.

\appendix

\section{Formulas concerning $\mathcal{Y}_{\lambda,v;q}^m(z)$, $\mathcal{N}_\lambda$ and $\mathcal{N}'_\lambda$} \label{sec:Y}

\begin{lem} \label{lem:diff_A_R}
    The size of sets $\chi_{A,v;q}^m(\lambda)$ and $\chi_{R,v;q}^m(\lambda)$ satisfies
    \begin{equation}
        |\chi_{A,v;q}^m(\lambda)| - |\chi_{R,v;q}^m(\lambda)| = m.
    \end{equation}
\end{lem}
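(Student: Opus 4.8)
The plan is to view both $\chi_{A,v;q}^{m}(\lambda)$ and $\chi_{R,v;q}^{m}(\lambda)$ as set differences of two explicit finite sets and then just count. Write
\[
    S = \{\, vq^{\lambda_i-i} \mid 1\le i\le l(\lambda) \,\},\qquad
    T = \{\, vq^{\lambda_i+m-i} \mid 1\le i\le l(\lambda)+m \,\},
\]
with the convention $\lambda_i=0$ for $i>l(\lambda)$, so that by the definitions (\ref{eq:chi_A}), (\ref{eq:chi_R}) we have $\chi_{A,v;q}^{m}(\lambda)=T\setminus S$ and $\chi_{R,v;q}^{m}(\lambda)=S\setminus T$.

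First I would pin down $|S|$ and $|T|$. The sequence $(\lambda_i-i)_{i\ge 1}$ is strictly decreasing, since $\lambda_i\ge\lambda_{i+1}$ forces $\lambda_i-i>\lambda_{i+1}-(i+1)$; the same holds for $(\lambda_i+m-i)_{i\ge 1}$. Because $|q|<1$ and $q\neq 0$, so that $q$ is not a root of unity, and because $v\neq 0$, the map $a\mapsto vq^{a}$ is injective on $\mathbb{Z}$. Hence the elements listed in $S$ (resp.\ in $T$) are pairwise distinct, and $|S|=l(\lambda)$, $|T|=l(\lambda)+m$.

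Then I would invoke the elementary identity $|T\setminus S|-|S\setminus T|=|T|-|S|$, valid for any two finite sets (both sides equal $|T|-|S\cap T|-(|S|-|S\cap T|)$), which immediately yields
\[
    |\chi_{A,v;q}^{m}(\lambda)|-|\chi_{R,v;q}^{m}(\lambda)| = |T|-|S| = (l(\lambda)+m)-l(\lambda) = m .
\]

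There is essentially no obstacle: the only point that needs care is the pairwise distinctness of the exponents, which is exactly where the strict monotonicity of $\lambda_i-i$ together with the genericity of $q$ enter. One could equally run the same argument on the transposed side using the second expressions in (\ref{eq:chi_A}) and (\ref{eq:chi_R}); the two computations agree because $\lambda+\rho$ and $-\lambda'-\rho$ are complementary subsequences of $\mathbb{Z}+\tfrac12$, as noted just after those equations.
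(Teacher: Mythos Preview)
Your proof is correct and follows essentially the same approach as the paper: both identify $\chi_{A}$ and $\chi_{R}$ as $T\setminus S$ and $S\setminus T$ for the two finite sets $S,T$ and conclude via $|T|-|S|=(l(\lambda)+m)-l(\lambda)=m$. Your version is in fact slightly more careful, since you make explicit the pairwise distinctness of the elements (via strict monotonicity of $\lambda_i-i$ and injectivity of $a\mapsto vq^a$ when $|q|<1$) that the paper simply takes for granted.
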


\begin{proof}
    From (\ref{eq:chi_A}) and (\ref{eq:chi_R}), we can derive
    \begin{align}
        |\chi_{A,v;q}^m(\lambda)| - |\chi_{R,v;q}^m(\lambda)| &= \left| \Set{ vq^{\lambda_l+m-l} | 1 \leq l \leq l(\lambda)+m } \right| - \left| \Set{ vq^{\lambda_l-l} | 1 \leq l \leq l(\lambda) } \right| \notag \\
        &= (l(\lambda) + m) - l(\lambda) = m.
    \end{align}
\end{proof}

\begin{lem} \label{lem:Y_alternative}
    If $z \neq 0$, then $\mathcal{Y}_{\lambda,v;q}^m(z)$ can be written as
    \begin{equation}
        \mathcal{Y}_{\lambda,v;q}^m(z) = \begin{cases} \displaystyle z^m \prod_{l>0} \frac{z-vq^{\lambda_l+m-l}}{z-vq^{\lambda_l-l}} \qquad (|q|>1) \\ \displaystyle z^m \prod_{l>0} \frac{z-vq^{-\lambda'_l-1+l}}{z-vq^{-\lambda'_l+(m-1)+l}} \qquad (|q|<1)  \end{cases}. \label{eq:Y_alternative}
    \end{equation}
\end{lem}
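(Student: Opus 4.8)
The plan is to work directly with the finite-product description of $\mathcal{Y}_{\lambda,v;q}^m(z)$ coming from (\ref{eq:Y_def}) and to reorganize the infinite product on the right-hand side of (\ref{eq:Y_alternative}) by an index shift of $m$. Recall from (\ref{eq:chi_A}) and (\ref{eq:chi_R}) that, in the ``row'' description, $\chi_{A,v;q}^m(\lambda)=A\setminus B$ and $\chi_{R,v;q}^m(\lambda)=B\setminus A$ with $A=\{vq^{\lambda_i+m-i}\mid 1\le i\le l(\lambda)+m\}$ and $B=\{vq^{\lambda_i-i}\mid 1\le i\le l(\lambda)\}$ (with the convention $\lambda_i=0$ for $i>l(\lambda)$). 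Since $\lambda_i-i$ and $\lambda_i+m-i$ are each strictly decreasing in $i$, the sets $A$ and $B$ have exactly $l(\lambda)+m$ and $l(\lambda)$ distinct elements, and cancelling the common factors in numerator and denominator yields $\mathcal{Y}_{\lambda,v;q}^m(z)=\prod_{\chi\in A}(z-\chi)\big/\prod_{\chi\in B}(z-\chi)$ as a ratio of polynomials.

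For the $|q|>1$ case, I would first check that $\prod_{l>0}\frac{z-vq^{\lambda_l+m-l}}{z-vq^{\lambda_l-l}}$ converges for $z\neq 0$: for $l$ large one has $\lambda_l=0$ and the $l$-th factor equals $1+\frac{vq^{-l}(1-q^m)}{z-vq^{-l}}=1+O(q^{-l})$, which is summable since $|q|>1$. Then truncate at a level $L>l(\lambda)+m$ and split: the first $l(\lambda)+m$ numerator factors are exactly $\prod_{\chi\in A}(z-\chi)$, the first $l(\lambda)$ denominator factors are $\prod_{\chi\in B}(z-\chi)$, and the leftover tail is $\prod_{l=l(\lambda)+m+1}^{L}(z-vq^{m-l})$ over $\prod_{l=l(\lambda)+1}^{L}(z-vq^{-l})$. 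The substitution $l\mapsto l-m$ turns the numerator tail into $\prod_{l=l(\lambda)+1}^{L-m}(z-vq^{-l})$, so the tail ratio collapses to $1\big/\prod_{l=L-m+1}^{L}(z-vq^{-l})$. Letting $L\to\infty$, each of these $m$ factors tends to $z$ because $q^{-l}\to 0$, hence the tail ratio tends to $z^{-m}$ and we get $\prod_{l>0}\frac{z-vq^{\lambda_l+m-l}}{z-vq^{\lambda_l-l}}=z^{-m}\,\mathcal{Y}_{\lambda,v;q}^m(z)$, which is the asserted formula.

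For $|q|<1$ the argument is the same but run with the ``column'' description in (\ref{eq:chi_A}), (\ref{eq:chi_R}): take $A'=\{vq^{-\lambda'_j-1+j}\mid 1\le j\le l(\lambda')+m\}$, $B'=\{vq^{-\lambda'_j+(m-1)+j}\mid 1\le j\le l(\lambda')\}$ (the exponents now strictly increasing in $j$), so $\mathcal{Y}_{\lambda,v;q}^m(z)=\prod_{\chi\in A'}(z-\chi)/\prod_{\chi\in B'}(z-\chi)$; the relevant tail exponents now go to $+\infty$, so convergence and the vanishing of the leftover $m$ factors use $|q|<1$, and the same index shift $j\mapsto j-m$ again produces the factor $z^{-m}$.

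The only genuine subtlety — and the step I would be most careful about — is the convergence and reorganization of the infinite product: the numerator and denominator products individually do not converge, so they must be kept paired, and it is precisely the requirement that the leftover tail terms $z-vq^{(\cdots)}$ tend to $z$ that forces the split $|q|>1$ versus $|q|<1$ together with the corresponding choice of the row- versus column-description of $\chi_{A}^m,\chi_R^m$. Everything else is bookkeeping with strictly monotone exponent sequences, together with the fact — already recorded in Lemma \ref{lem:diff_A_R} — that $A$ has exactly $m$ more elements than $B$, which is what makes the leftover product have exactly $m$ factors.
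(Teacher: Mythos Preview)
Your proof is correct and follows essentially the same approach as the paper: truncate the infinite product at a finite level, separate off the factors that rebuild $\prod_{\chi\in A}(z-\chi)\big/\prod_{\chi\in B}(z-\chi)$, and observe that the remaining tail of $m$ factors tends to $z^{-m}$ as the cutoff goes to infinity, the case $|q|<1$ being handled by the column description. Your treatment is in fact a bit more careful than the paper's about the index shift and about why convergence of the paired product forces the row/column choice according to the sign of $\log|q|$.
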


\begin{proof}
    From (\ref{eq:chi_A}) and (\ref{eq:chi_R}), we can compute as
    \begin{align}
        \prod_{l=1}^N \frac{z-vq^{\lambda_l+m-l}}{z-vq^{\lambda_l-l}} =& \frac{\prod_{l=1}^{l(\lambda)+m} (z-vq^{\lambda_l+m-l})}{\prod_{l=1}^{l(\lambda)} (z-vq^{\lambda_l-l})} \frac{\prod_{l=l(\lambda)+m}^N (z-vq^{m-l})}{\prod_{l=l(\lambda)}^N (z-vq^{-l})} \notag \\
        =& \frac{\prod_{\chi \in \chi_{A,v;q}(\lambda)} (z-\chi)}{\prod_{\chi \in \chi_{R,v;q}(\lambda)} (z-\chi)} \prod_{l=N-m+1}^N \frac{1}{z-vq^{-l}}.
    \end{align}
    Taking the limit $N \to \infty$ we obtain the identity in the $|q|>1$ case.
    We can confirm the $|q|<1$ case by a similar computation.
\end{proof}

\begin{cor} \label{cor:Y_recursion}
    If a box $x$ can be added to $\lambda$, then
    \begin{equation}
        \mathcal{Y}^m_{\lambda+x,v;q}(z) = \frac{(z-q^{-1}\chi_x)(z-q^{m}\chi_x)}{(z-q^{m-1}\chi_x)(z-\chi_x)} \mathcal{Y}^m_{\lambda,v;q}(z).
    \end{equation}
    Especially,
    \begin{equation} \label{eq:Y(0)}
        \mathcal{Y}^m_{\lambda,v;q}(0) = \mathcal{Y}^m_{\varnothing,v;q}(0) = (-1)^m v^m q^{\frac{1}{2}m(m-1)}.
    \end{equation}
\end{cor}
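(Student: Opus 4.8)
The plan is to read the recursion off the product representation in Lemma~\ref{lem:Y_alternative} and then obtain the value at $z=0$ by induction on $|\lambda|$.

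First I would work with the $|q|<1$ expression from (\ref{eq:Y_alternative}), $\mathcal{Y}_{\lambda,v;q}^m(z) = z^m \prod_{l>0} \frac{z-vq^{-\lambda'_l-1+l}}{z-vq^{-\lambda'_l+(m-1)+l}}$. Adding an addable box $x=(i,j)$ to $\lambda$ changes the transpose partition only in its $j$-th slot, $(\lambda+x)'_j=\lambda'_j+1$ while $(\lambda+x)'_l=\lambda'_l$ for $l\neq j$, so all but the $l=j$ factor cancel in the ratio $\mathcal{Y}_{\lambda+x,v;q}^m(z)/\mathcal{Y}_{\lambda,v;q}^m(z)$. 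Since $i=\lambda'_j+1$ for an addable box one has $\chi_x=vq^{-i+j}=vq^{-\lambda'_j-1+j}$, so the $l=j$ factor of $\mathcal{Y}_{\lambda,v;q}^m$ is $\frac{z-\chi_x}{z-q^m\chi_x}$ and that of $\mathcal{Y}_{\lambda+x,v;q}^m$ is $\frac{z-q^{-1}\chi_x}{z-q^{m-1}\chi_x}$; their ratio is exactly the claimed prefactor $\frac{(z-q^{-1}\chi_x)(z-q^m\chi_x)}{(z-q^{m-1}\chi_x)(z-\chi_x)}$. The only point needing care is that the term-by-term cancellation in the infinite product is legitimate, which holds because each factor tends to $1$ as $l\to\infty$ when $|q|<1$.

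For the second statement I would set $z=0$ in the recursion just established: the prefactor becomes $\frac{(-q^{-1}\chi_x)(-q^m\chi_x)}{(-q^{m-1}\chi_x)(-\chi_x)}=1$, hence $\mathcal{Y}_{\lambda+x,v;q}^m(0)=\mathcal{Y}_{\lambda,v;q}^m(0)$. Since every Young diagram is obtained from $\varnothing$ by successively adjoining boxes, induction on $|\lambda|$ gives $\mathcal{Y}_{\lambda,v;q}^m(0)=\mathcal{Y}_{\varnothing,v;q}^m(0)$. It remains to evaluate the base case directly from the definition (\ref{eq:Y_def}): for $\lambda=\varnothing$ one reads off $\chi_{R,v;q}^m(\varnothing)=\varnothing$ and $\chi_{A,v;q}^m(\varnothing)=\{v,vq,\dots,vq^{m-1}\}$ from (\ref{eq:chi_A}) and (\ref{eq:chi_R}), so $\mathcal{Y}_{\varnothing,v;q}^m(z)=\prod_{k=0}^{m-1}(z-vq^k)$ and $\mathcal{Y}_{\varnothing,v;q}^m(0)=\prod_{k=0}^{m-1}(-vq^k)=(-1)^m v^m q^{m(m-1)/2}$.

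The argument is entirely routine bookkeeping; the one place to be attentive is the index translation $i=\lambda'_j+1$, which is precisely what makes $\chi_x$ enter the recursion with the powers $q^{-1},q^{m-1},q^m$ rather than a shifted set of exponents. As a cross-check I would also derive the recursion by tracking how the finite multisets $\chi_{A,v;q}^m(\lambda)$ and $\chi_{R,v;q}^m(\lambda)$ change upon adding a box, using the one-to-one correspondences with $Y_A^m(\lambda)$ and $Y_R^m(\lambda)$ noted after (\ref{eq:chi_R}); this avoids infinite products at the cost of a short combinatorial argument.
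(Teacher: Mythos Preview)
Your argument is correct and is exactly the route the paper intends: the statement is labeled a corollary of Lemma~\ref{lem:Y_alternative} with no further proof, and your derivation---computing the ratio from the $|q|<1$ product by isolating the single $l=j$ factor affected by the added box, then specializing the recursion to $z=0$ and evaluating the empty-partition case from the definition---is precisely the computation that labeling invites. The bookkeeping (in particular $i=\lambda'_j+1$ and hence $\chi_x=vq^{-\lambda'_j-1+j}$) is handled correctly.
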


\begin{prop} \label{prop:Y_decomp}
    $\mathcal{Y}^m_{\lambda+x,v;q}(z)$ can be decomposed as
    \begin{align}
        \frac{z^m}{\mathcal{Y}^m_{\lambda,v;q}(z)} &= \sum_{\chi \in \chi_{A,v;q}^m(\lambda)} \frac{1}{1-\chi/z} \underset{z=\chi}{\mathrm{Res}} \frac{z^{m-1}}{\mathcal{Y}^m_{\lambda,v;q}(z)} \\
        \frac{\mathcal{Y}^m_{\lambda,v;q}(z)}{z^m} &= \sum_{\chi \in \chi_{R,v;q}^m(\lambda)} \frac{1}{1-\chi/z} \underset{z=\chi}{\mathrm{Res}} \frac{\mathcal{Y}^m_{\lambda,v;q}(z)}{z^{m+1}} + \sum_{n=0}^{m} \frac{\Delta^m_{n,v;q} (\lambda)}{z^{m-n}},
    \end{align}
    where
    \begin{equation}
        \Delta_{n,v;q}^m (\lambda) = \frac{1}{n!} \frac{d^n}{dz^n} \mathcal{Y}^m_{\lambda,v;q}(0).
    \end{equation}
\end{prop}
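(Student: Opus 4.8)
The plan is to obtain both identities from the ordinary partial-fraction expansion of a rational function, after first dividing through by $z$ so that the resulting function has strictly negative degree at infinity; multiplying back by $z$ at the end then turns the naturally occurring factors $\tfrac{1}{z-\chi}$ into the factors $\tfrac{1}{1-\chi/z}$ that appear in the statement. Before doing this I would record the structural facts about $\mathcal{Y}^m_{\lambda,v;q}$. Since $|q|<1$ (so $q$ is not a root of unity) and $v\neq 0$, the assignment $a\mapsto vq^{a}$ is injective on $\mathbb{Z}$; combined with the strict monotonicity of the sequences $\lambda_i-i$ and $\lambda_i+m-i$, this shows that the elements of $\chi_{A,v;q}^m(\lambda)$ are pairwise distinct, likewise for $\chi_{R,v;q}^m(\lambda)$, and that $\chi_{A,v;q}^m(\lambda)\cap\chi_{R,v;q}^m(\lambda)=\varnothing$. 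Hence, by $(\ref{eq:Y_def})$, $\mathcal{Y}^m_{\lambda,v;q}(z)$ has exactly simple zeros at the points of $\chi_{A,v;q}^m(\lambda)$ and exactly simple poles at the points of $\chi_{R,v;q}^m(\lambda)$, none of which equals $0$; moreover $\mathcal{Y}^m_{\lambda,v;q}(0)\neq 0$ by Corollary \ref{cor:Y_recursion}, and $\mathcal{Y}^m_{\lambda,v;q}(z)/z^m\to 1$ as $z\to\infty$ by the degree count of Lemma \ref{lem:diff_A_R} (equivalently, from the product form of Lemma \ref{lem:Y_alternative}).

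For the first identity, consider $h(z):=z^{m-1}/\mathcal{Y}^m_{\lambda,v;q}(z)$. By the facts above this is a rational function whose only poles are simple, located precisely at the points $\chi\in\chi_{A,v;q}^m(\lambda)$ — there is no pole at $z=0$ because $m\geq 1$ and $\mathcal{Y}^m_{\lambda,v;q}$ is regular and nonzero there — and $h(z)\sim z^{-1}\to 0$ as $z\to\infty$, so $h$ has no polynomial part. Its partial-fraction expansion is therefore $h(z)=\sum_{\chi\in\chi_{A,v;q}^m(\lambda)}\tfrac{1}{z-\chi}\,\mathrm{Res}_{w=\chi}h(w)$. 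Multiplying by $z$ and using $\tfrac{z}{z-\chi}=\tfrac{1}{1-\chi/z}$ together with $z\,h(z)=z^m/\mathcal{Y}^m_{\lambda,v;q}(z)$ gives the first formula. For the second, consider $k(z):=\mathcal{Y}^m_{\lambda,v;q}(z)/z^{m+1}$. This is rational with a pole of order $m+1$ at $z=0$, simple poles at the points $\chi\in\chi_{R,v;q}^m(\lambda)$, and $k(z)\sim z^{-1}\to 0$ at infinity, hence again no polynomial part. Expanding $\mathcal{Y}^m_{\lambda,v;q}(z)=\sum_{n\geq 0}\Delta^m_{n,v;q}(\lambda)\,z^{n}$ near $z=0$, the principal part of $k$ at $0$ is $\sum_{n=0}^{m}\Delta^m_{n,v;q}(\lambda)\,z^{n-m-1}$, so $k(z)=\sum_{n=0}^{m}\Delta^m_{n,v;q}(\lambda)\,z^{n-m-1}+\sum_{\chi\in\chi_{R,v;q}^m(\lambda)}\tfrac{1}{z-\chi}\,\mathrm{Res}_{w=\chi}k(w)$. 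Multiplying by $z$ and using $z\,k(z)=\mathcal{Y}^m_{\lambda,v;q}(z)/z^{m}$ and $\tfrac{z}{z-\chi}=\tfrac{1}{1-\chi/z}$ yields the second formula exactly as stated.

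The argument is essentially bookkeeping; the one point that genuinely needs care is the claim that, away from $0$ and $\infty$, every pole of $\mathcal{Y}^m_{\lambda,v;q}$ is simple and that the zero set and pole set do not collide, since only then does the residue form of the partial-fraction expansion apply with no confluent terms. This is precisely where the hypotheses $|q|<1$ and $v\neq 0$ enter, through the distinctness and disjointness statements established at the outset. A secondary and routine point is the decay at infinity of $h$ and $k$, which follows from Lemma \ref{lem:diff_A_R} and the product form in Lemma \ref{lem:Y_alternative}.
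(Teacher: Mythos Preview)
Your proof is correct and follows essentially the same approach as the paper: both use the partial-fraction expansion of $z^{m-1}/\mathcal{Y}^m_{\lambda,v;q}(z)$ and $\mathcal{Y}^m_{\lambda,v;q}(z)/z^{m+1}$, exploiting the decay at infinity from Lemma~\ref{lem:diff_A_R} and the simplicity of the finite poles, then multiply back by $z$. You are somewhat more explicit than the paper in justifying that the zero and pole sets are disjoint and simple (via the injectivity of $a\mapsto vq^a$), but the argument is otherwise identical.
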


\begin{proof}
    The meromorphic function
    \begin{equation}
        \frac{z^{m-1}}{\mathcal{Y}^m_{\lambda,v;q}(z)} = \frac{z^{m-1} \prod_{\chi \in \chi_{R,v;q}^m(\lambda)} (z-\chi)}{\prod_{\chi \in \chi_{A,v;q}^m(\lambda)} (z-\chi)}
    \end{equation}
    has simple poles at $z = \chi \in \chi_{A,v;q}^m(\lambda)$ and approaches to 0 as $z \to \infty$, which is the consequence of Lemma \ref{lem:diff_A_R}. Thus the Liouville theorem ensures that
    \begin{equation}
        \frac{z^{m-1}}{\mathcal{Y}^m_{\lambda,v;q}(z)} = \sum_{\chi \in \chi_{A,v;q}^m(\lambda)} \frac{1}{z-\chi} \underset{z=\chi}{\mathrm{Res}} \frac{z^{m-1}}{\mathcal{Y}^m_{\lambda,v;q}(z)}.
    \end{equation}
    On the other hand, the meromorphic function
    \begin{equation}
        \frac{\mathcal{Y}^m_{\lambda,v;q}(z)}{z^{m+1}} = \frac{\prod_{\chi \in \chi_{A,v;q}^m(\lambda)} (z-\chi)}{z^{m+1} \prod_{\chi \in \chi_{R,v;q}^m(\lambda)} (z-\chi)}
    \end{equation}
    approaches to 0 as $z \to \infty$ and has simple poles at $z = \chi \in \chi_{R,v;q}^m(\lambda)$ and a pole of order $m+1$ at $z=0$. The behavior in the neighborhood of $z=0$ can be determined by standard methods in the function analysis, and we obtain
    \begin{equation}
        \frac{\mathcal{Y}^m_{\lambda,v;q}(z)}{z^{m+1}} = \sum_{\chi \in \chi_{R,v;q}^m(\lambda)} \frac{1}{z-\chi} \underset{z=\chi}{\mathrm{Res}} \frac{\mathcal{Y}^m_{\lambda,v;q}(z)}{z^{m+1}} + \sum_{n=0}^m \frac{\Delta_{n,v;q}^m (\lambda)}{z^{m+1-n}}.
    \end{equation}
\end{proof}

We note that direct computation gives the explicit form of $\Delta_{n,v;q}^m (\lambda)$ as
\begin{align}
    \Delta_{0,v;q}^m (\lambda) =& (-1)^m v^m q^{\frac{1}{2}m(m-1)}, \label{eq:Delta0} \\
    \Delta_{1,v;q}^m (\lambda) =& (-1)^{m-1} v^m q^{\frac{1}{2}m(m-1)} \left( \sum_{\chi \in \chi_{A,v;q}^m(\lambda)} \frac{1}{\chi} - \sum_{\chi \in \chi_{R,v;q}^m(\lambda)} \frac{1}{\chi} \right) \notag \\
    =& \begin{cases} (-1)^m v^{m-1} q^{\frac{1}{2}m(m-1)} (1-q^{-m}) p_1(q^{-\lambda-\rho+1/2}) \qquad (|q|<1) \\
     (-1)^{m-1} v^{m-1} q^{\frac{1}{2}m(m-1)} (1-q^{-m}) p_1(q^{\lambda'+\rho+1/2}) \qquad (|q|>1) \end{cases}. \label{eq:Delta1}
\end{align}

\begin{prop} \label{prop:N_add}
    If $\mu \in Y^m_A(\lambda)$, then
    \begin{align}
        \mathcal{N}_\mu &= (-1)^{\mathrm{ht}(\mu'/\lambda')} \frac{1}{1-q^m} \underset{z=\chi_{x,v;q}}{\mathrm{Res}} \frac{z^{m-1}}{\mathcal{Y}^m_{\lambda,v;q}(z)} \mathcal{N}_\lambda, \label{eq:N_add} \\
        \mathcal{N}'_\mu &= (-1)^{\mathrm{ht}(\mu'/\lambda')} \frac{1}{1-q^{-m}} \underset{z=\chi_{x,v^{-1};q^{-1}}}{\mathrm{Res}} \frac{z^{m-1}}{\mathcal{Y}^m_{\lambda,v^{-1};q^{-1}}(z)} \mathcal{N}'_\lambda, \label{eq:N'_add}
    \end{align}
    where $x$ is the box at the top-right corner of $\mu/\lambda$.
\end{prop}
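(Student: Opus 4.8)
The plan is to establish \eqref{eq:N_add} by a direct computation from the definition \eqref{eq:N} of $\mathcal{N}_\lambda$ together with Lemma \ref{lem:Y_alternative}, and then to obtain \eqref{eq:N'_add} from it --- either by repeating the argument with $q\mapsto q^{-1}$, $v\mapsto v^{-1}$ (so that $\mathcal{Y}^m_{\lambda,v^{-1};q^{-1}}$ is handled by the other branch of Lemma \ref{lem:Y_alternative}), or by inserting the relation $\mathcal{N}'_\lambda=(-1)^{|\lambda|}q^{-\kappa_\lambda/2+|\lambda|}\mathcal{N}_\lambda$ that follows immediately from \eqref{eq:N} and \eqref{eq:N'}. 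For \eqref{eq:N_add} I would first encode $\lambda$ by the two complementary subsets $A_\lambda=\{\lambda_i-i:i\ge1\}$ and $B_\lambda=\{-\lambda'_j+j-1:j\ge1\}$ of $\mathbb{Z}$ (their disjointness is the fact recorded just after \eqref{eq:chi_R}). Since $-\lambda_i-\lambda'_j+i+j-1=(-\lambda'_j+j-1)-(\lambda_i-i)$, the product in \eqref{eq:N} becomes $\prod_{i,j>0}(1-q^{-\lambda_i-\lambda'_j+i+j-1})=\prod_{a\in A_\lambda}\prod_{b\in B_\lambda}(1-q^{b-a})$, which converges for $|q|<1$. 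A diagram $\mu\in Y_A^m(\lambda)$ corresponds to deleting one element $\ell\in A_\lambda$ with $\ell+m\notin A_\lambda$ and inserting $\ell+m$, so $B_\mu=(B_\lambda\setminus\{\ell+m\})\cup\{\ell\}$, the top-right box $x$ of $\mu/\lambda$ has $\chi_x=vq^{\ell+m}$ and content $\ell+m$ (whence $\kappa_\mu-\kappa_\lambda=m(2\ell+m+1)$), and $\mathrm{ht}(\mu'/\lambda')=\#\{b\in B_\lambda:\ell<b<\ell+m\}$.

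Writing $A'=A_\lambda\setminus\{\ell\}$ and $B'=B_\lambda\setminus\{\ell+m\}$, a routine cancellation in the ratio $\prod_{A_\lambda\times B_\lambda}(1-q^{b-a})\big/\prod_{A_\mu\times B_\mu}(1-q^{b-a})$ gives
\begin{equation*}
\frac{\mathcal{N}_\mu^2}{\mathcal{N}_\lambda^2}=(-1)^m q^{\frac12(\kappa_\mu-\kappa_\lambda)-m}\,\frac{(1-q^m)\,\prod_{a\in A'}(1-q^{\ell+m-a})\,\prod_{b\in B'}(1-q^{b-\ell})}{(1-q^{-m})\,\prod_{a\in A'}(1-q^{\ell-a})\,\prod_{b\in B'}(1-q^{b-\ell-m})}.
\end{equation*}
On the other hand, the $|q|<1$ form of Lemma \ref{lem:Y_alternative} gives $\frac{z^{m-1}}{\mathcal{Y}^m_{\lambda,v;q}(z)}=\frac1z\prod_{b\in B_\lambda}\frac{z-vq^{b+m}}{z-vq^{b}}$; its unique pole at $z=\chi_x=vq^{\ell+m}$ is simple, coming from the $b=\ell+m$ factor of the denominator, and evaluating the residue while simplifying each factor by $\frac{q^{\ell+m}-q^{b+m}}{q^{\ell+m}-q^{b}}=\frac{1-q^{b-\ell}}{1-q^{b-\ell-m}}$ yields
\begin{equation*}
\frac{1}{1-q^{m}}\,\mathrm{Res}_{z=\chi_x}\frac{z^{m-1}}{\mathcal{Y}^m_{\lambda,v;q}(z)}=\prod_{b\in B'}\frac{1-q^{b-\ell}}{1-q^{b-\ell-m}}.
\end{equation*}
Comparing the two displays, the relation \eqref{eq:N_add} --- once its overall sign is set aside --- is equivalent, using $\frac{1-q^m}{1-q^{-m}}=-q^m$ and $\frac12(\kappa_\mu-\kappa_\lambda)=m\ell+\frac{m(m+1)}2$, to the combinatorial identity
\begin{equation*}
\prod_{b\in B'}\frac{1-q^{b-\ell}}{1-q^{b-\ell-m}}\cdot\prod_{a\in A'}\frac{1-q^{\ell-a}}{1-q^{\ell+m-a}}=(-1)^{m+1}q^{\,m\ell+\frac{m(m+1)}2}. \tag{$\star$}
\end{equation*}

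I would prove $(\star)$ by truncating the Maya diagram: because $\lambda$ has charge zero, $A_\lambda$ and $B_\lambda$ agree with $\mathbb{Z}_{<0}$ and $\mathbb{Z}_{\ge0}$ outside a finite window $[-L,L]$, so each product in $(\star)$ splits as a finite ``window'' part times a ``tail'' part, the tails being explicit ratios of finite products $\prod(1-q^{k})$ that telescope; using that $A'\cap[-L,L]$ and $B'\cap[-L,L]$ partition $[-L,L]\setminus\{\ell,\ell+m\}$ together with the elementary identity $\frac{1-q^{b-\ell}}{1-q^{b-\ell-m}}=q^{m}\,\frac{1-q^{\ell-b}}{1-q^{\ell+m-b}}$, the two window parts combine into $q^{\,m\,|B'\cap[-L,L]|}\prod_{n\in[\ell-L,\ell+L]\setminus\{0,-m\}}\frac{1-q^{n}}{1-q^{n+m}}$, and a final mechanical simplification shows the $L$-dependence cancels and one is left with the right side of $(\star)$. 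It then remains to pin the sign in \eqref{eq:N_add}: specialising to $q\in(0,1)$, the bracket in \eqref{eq:N} is a positive real (the product $\prod_{i,j>0}(1-q^{-\lambda_i-\lambda'_j+i+j-1})$ has exactly $|\lambda|$ negative factors, cancelling the explicit $(-1)^{|\lambda|}$), so $\mathcal{N}_\lambda,\mathcal{N}_\mu>0$; and a factor $\frac{1-q^{b-\ell}}{1-q^{b-\ell-m}}$ is negative precisely when $\ell\le b<\ell+m$, so the product in the second display above has sign $(-1)^{\#\{b\in B_\lambda:\ell<b<\ell+m\}}=(-1)^{\mathrm{ht}(\mu'/\lambda')}$. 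Since $\mathcal{N}_\mu/\mathcal{N}_\lambda>0$ equals the absolute value of $\frac{1}{1-q^m}\mathrm{Res}$, this fixes the square-root branch and gives \eqref{eq:N_add} for $q\in(0,1)$, hence on the whole connected domain $0<|q|<1$ by continuity. The main obstacle is exactly the bookkeeping hidden in $(\star)$: taming the regularised infinite products and matching both the power of $q$ and the sign $(-1)^{\mathrm{ht}(\mu'/\lambda')}$ on the nose.
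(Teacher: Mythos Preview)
Your approach is correct and follows essentially the same route as the paper's proof: compute the residue via Lemma~\ref{lem:Y_alternative}, compute the ratio $\mathcal{N}_\mu^2/\mathcal{N}_\lambda^2$ directly from the definition, and compare. The paper parametrizes by the column index $j$ (so the residue product runs over $l\neq j$ in terms of $\lambda'_l$), while you use the Maya-diagram sets $A_\lambda,B_\lambda$; these are the same objects under the dictionary $b=-\lambda'_l+l-1$, $\ell+m=-\lambda'_j+j-1$, and your $\prod_{b\in B'}\frac{1-q^{b-\ell}}{1-q^{b-\ell-m}}$ is exactly the paper's $\prod_{l\neq j}\frac{1-q^{\lambda'_j-\lambda'_l-j+l+m}}{1-q^{\lambda'_j-\lambda'_l-j+l}}$.

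The one place where the two write-ups genuinely differ is in how the key step is presented. The paper simply asserts the ratio identity as ``direct computation gives \ldots'' and does not justify it further. You instead isolate the content of that step as the combinatorial identity $(\star)$ and supply a proof via truncation. Your truncation scheme does work: once one tracks the tails explicitly (the $A'$-tail telescopes to $\prod_{k=\ell+L+1}^{\ell+L+m}(1-q^k)$, the $B'$-tail to $\big[\prod_{k=L-\ell-m+1}^{L-\ell}(1-q^k)\big]^{-1}$), combines the window pieces into $q^{m|B'\cap[-L,L]|}\prod_{n\in[\ell-L,\ell+L]\setminus\{0,-m\}}\frac{1-q^n}{1-q^{n+m}}$, and uses the charge-zero condition $|B'\cap[-L,L]|=L$ for $L$ large, all $L$-dependence cancels and one lands on $(-1)^{m+1}q^{m\ell+m(m+1)/2}$. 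So your sketch is sound; you have simply made explicit what the paper leaves as an exercise. The sign argument (specialise to $q\in(0,1)$, count negative factors, extend by continuity) is likewise more careful than the paper's one-line remark that ``the sign is $(-1)^{\mathrm{ht}(\mu'/\lambda')}$ when $|q|<1$'', which tacitly assumes real $q$ anyway.
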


\begin{proof}
    If $\mu \in Y^m_A(\lambda)$, then there is a positive integer $j$ which satisfies $\mu' + \rho = (\lambda' + \rho + m \epsilon_j)^+$ and $\chi_{x,v;q} = vq^{-\lambda'_j-1+j}$ for the box $x$ at the top-right corner of $\mu/\lambda$ (see Fig. \ref{fig:ribbon}). Using Lemma \ref{lem:Y_alternative}, we obtain 
    \begin{equation}
        \frac{1}{1-q^m} \underset{z=\chi_{x,v;q}}{\mathrm{Res}} \frac{z^{m-1}}{\mathcal{Y}^m_{\lambda,v;q}(z)} = \prod_{l (\neq j)} \frac{1-q^{\lambda'_j-\lambda'_l-j+l+m}}{1-q^{\lambda'_j-\lambda'_l-j+l}}.
    \end{equation}
    On the other hand, direct computation gives 
    \begin{align}
        &\prod_{k,l>0} (1-q^{-\mu_k-\mu'_l+k+l-1})^{-1} \notag \\
        =& \prod_{k,l>0} (1-q^{-\lambda_k-\lambda'_l+k+l-1})^{-1} \times (-1)^m q^{m+\frac{1}{2}\kappa_\lambda-\frac{1}{2}\kappa_\mu} \left( \prod_{l (\neq j)} \frac{1-q^{\lambda'_j-\lambda'_l-j+l+m}}{1-q^{\lambda'_j-\lambda'_l-j+l}} \right)^2.
    \end{align}
    We note that the sign of the factor $\displaystyle \prod_{l (\neq j)} \frac{1-q^{\lambda'_j-\lambda'_l-j+l+m}}{1-q^{\lambda'_j-\lambda'_l-j+l}}$ is $(-1)^{\mathrm{ht}(\mu'/\lambda')}$ when $|q|<1$. Substituting these result into the definition of $\mathcal{N}_\lambda$ (\ref{eq:N}), we obtain (\ref{eq:N_add}). The proof of (\ref{eq:N'_add}) is parallel.
\end{proof}

\begin{prop} \label{prop:N_remove}
    If $\mu \in Y^m_R(\lambda)$, then
    \begin{align}
        \mathcal{N}_\mu &= (-1)^{\mathrm{ht}(\lambda'/\mu')} \frac{1}{1-q^{-m}} \underset{z=\chi_{x,v;q}}{\mathrm{Res}} \frac{\mathcal{Y}^m_{\lambda,v;q}(z)}{z^{m+1}} \mathcal{N}_\lambda, \label{eq:N_remove} \\
        \mathcal{N}'_\mu &= (-1)^{\mathrm{ht}(\lambda'/\mu')} \frac{1}{1-q^m} \underset{z=\chi_{x,v^{-1};q^{-1}}}{\mathrm{Res}} \frac{\mathcal{Y}^m_{\lambda,v^{-1};q^{-1}}(z)}{z^{m+1}} \mathcal{N}'_\lambda, \label{eq:N'_remove}
    \end{align}
    where $x$ is the box at the top-right corner of $\lambda/\mu$.
\end{prop}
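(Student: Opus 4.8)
The plan is to run the argument of Proposition \ref{prop:N_add} in reverse: removing a ribbon of $m$ boxes from $\lambda$ to produce $\mu$ is the mirror of adding one, so the same residue computation applies with the roles of $\lambda$ and $\mu$ interchanged. First I would record the combinatorial input. Since $\mu\in Y^m_R(\lambda)$, there is a positive integer $j$ with $\mu'+\rho=(\lambda'+\rho-m\epsilon_j)^+$, and the box $x$ at the top-right corner of the ribbon $\lambda/\mu$ has content $\chi_{x,v;q}=vq^{-\lambda'_j+(m-1)+j}$, which is one of the elements of $\chi^m_{R,v;q}(\lambda)$ appearing in the denominator of $\mathcal{Y}^m_{\lambda,v;q}$ (second line of (\ref{eq:chi_R})). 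Hence $\mathcal{Y}^m_{\lambda,v;q}(z)/z^{m+1}$ has a simple pole at $z=\chi_{x,v;q}$, so the residue in the statement is well defined; it is one of the residue terms in the decomposition of Proposition \ref{prop:Y_decomp}.

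Next I would evaluate that residue using the product form of Lemma \ref{lem:Y_alternative}. Working in the region $|q|<1$, where $\mathcal{Y}^m_{\lambda,v;q}(z)/z^{m+1}=z^{-1}\prod_{l>0}\frac{z-vq^{-\lambda'_l-1+l}}{z-vq^{-\lambda'_l+(m-1)+l}}$, the pole at $z=\chi_{x,v;q}$ comes from the $l=j$ factor of the denominator, and after multiplying by $1/(1-q^{-m})$ the residue collapses to $P_j\equiv\prod_{l(\neq j)}\frac{1-q^{\lambda'_j-\lambda'_l-j+l-m}}{1-q^{\lambda'_j-\lambda'_l-j+l}}$. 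This is the same kind of product that governs Proposition \ref{prop:N_add}, and by the description of the ribbon height as the number of order-reversals produced when the sequence $\lambda'+\rho-m\epsilon_j$ is rearranged in decreasing order, its sign is $(-1)^{\mathrm{ht}(\lambda'/\mu')}$ when $|q|<1$.

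Then I would compare the hook-type products. From $\mu'+\rho=(\lambda'+\rho-m\epsilon_j)^+$ one gets $\prod_{k,l>0}(1-q^{-\mu_k-\mu'_l+k+l-1})^{-1}$ equal to $\prod_{k,l>0}(1-q^{-\lambda_k-\lambda'_l+k+l-1})^{-1}$ times a monomial in $q$ built from $\kappa_\lambda-\kappa_\mu$ and $m$ times $P_j^2$ — this is precisely the identity used for Proposition \ref{prop:N_add}, read backwards. Substituting this, together with $|\mu|=|\lambda|-m$ and hence $(-1)^{|\mu|}=(-1)^{|\lambda|-m}$ and the value of $q^{\frac12\kappa_\mu-|\mu|}$, into the definition (\ref{eq:N}) of $\mathcal{N}_\lambda$, the square $P_j^2$ turns into a single $P_j$, the monomial prefactors cancel against the $v,q$-prefactor of the residue, and (\ref{eq:N_remove}) results. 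Finally, (\ref{eq:N'_remove}) follows by repeating the computation with the substitution $q\to q^{-1}$, $v\to v^{-1}$ (which exchanges $1-q^{-m}\leftrightarrow 1-q^m$ and uses the second line of Lemma \ref{lem:Y_alternative}), combined with the relation between $\mathcal{N}'_\lambda$ and $\mathcal{N}_\lambda$ implied by (\ref{eq:N}) and (\ref{eq:N'}).

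I expect the main obstacle to be the sign bookkeeping, namely matching the sign of $P_j$ with $(-1)^{\mathrm{ht}(\lambda'/\mu')}$: one must carefully count how many entries of $\lambda'+\rho-m\epsilon_j$ are displaced when the sequence is sorted and relate this to the number of rows occupied by the removed ribbon, while also keeping the $|q|<1$ and $|q|>1$ cases (the two branches of Lemma \ref{lem:Y_alternative}) straight. Once the reindexing mirrors that of Proposition \ref{prop:N_add}, the rest — manipulation of the hook products and cancellation of the monomial prefactors — is routine.
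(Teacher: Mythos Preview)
Your approach is correct but takes a longer route than the paper. You essentially repeat the whole computation of Proposition~\ref{prop:N_add} with the roles of $\lambda$ and $\mu$ interchanged: you locate the pole, evaluate the residue as a product $P_j$, redo the hook-product comparison, and track the sign of $P_j$ against $(-1)^{\mathrm{ht}(\lambda'/\mu')}$. The paper instead uses Proposition~\ref{prop:N_add} as a black box. Since $\mu\in Y^m_R(\lambda)$ is equivalent to $\lambda\in Y^m_A(\mu)$ with the same ribbon and the same top-right box $x$, Proposition~\ref{prop:N_add} already gives
\[
\mathcal{N}_\lambda=(-1)^{\mathrm{ht}(\lambda'/\mu')}\frac{1}{1-q^m}\underset{z=\chi_{x,v;q}}{\mathrm{Res}}\frac{z^{m-1}}{\mathcal{Y}^m_{\mu,v;q}(z)}\,\mathcal{N}_\mu,
\]
and the proof is completed by the single reciprocity identity
\[
\frac{1}{1-q^m}\underset{z=\chi_{x,v;q}}{\mathrm{Res}}\frac{z^{m-1}}{\mathcal{Y}^m_{\mu,v;q}(z)}
=\left[\frac{1}{1-q^{-m}}\underset{z=\chi_{x,v;q}}{\mathrm{Res}}\frac{\mathcal{Y}^m_{\lambda,v;q}(z)}{z^{m+1}}\right]^{-1},
\]
which follows immediately from the product form in Lemma~\ref{lem:Y_alternative} (the two residues are the products $\prod_{l\neq j}$ you write down and its reciprocal). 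This bypasses the hook-product manipulation and the sign analysis you flag as the main obstacle, since both are inherited from the already-proved Proposition~\ref{prop:N_add}. Your direct computation has the virtue of being self-contained, but the paper's inversion argument is considerably shorter.
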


\begin{proof}
    Applying the identity
    \begin{equation}
        \frac{1}{1-q^m} \underset{z=\chi_{x,v;q}}{\mathrm{Res}} \frac{z^{m-1}}{\mathcal{Y}^m_{\mu,v;q}(z)} = \left[ \frac{1}{1-q^{-m}} \underset{z=\chi_{x,v;q}}{\mathrm{Res}} \frac{\mathcal{Y}^m_{\lambda,v;q}(z)}{z^{m+1}} \right]^{-1}
    \end{equation}
    to Proposition \ref{prop:N_add}, we can derive the result.
\end{proof}

\begin{prop}
    The skew Schur polynomials can be written as
    \begin{align}
        s_{\lambda/\nu} &= \bra{\lambda} \exp \left( \sum_{n=1}^\infty \frac{p_n}{n} (-1)^n a_{-n} \right) \ket{\nu} = \bra{\nu} \exp \left( \sum_{n=1}^\infty \frac{p_n}{n} (-1)^n a_{n} \right) \ket{\lambda} \notag \\
        &= \bra{\lambda'} \exp \left( - \sum_{n=1}^\infty \frac{p_n}{n} a_{-n} \right) \ket{\nu'} = \bra{\nu'} \exp \left( - \sum_{n=1}^\infty \frac{p_n}{n} a_{n} \right) \ket{\lambda'}. \label{eq:Schur}
    \end{align}
    and
    \begin{align}
        s_{\lambda/\nu} &= \lsubscript{\bra{\lambda}}{\mathcal{S}} \exp \left( \sum_{n=1}^\infty \frac{p_n}{n} (-1)^{n-1} b_{-n} \right) \ket{\nu}_\mathcal{S} = \lsubscript{\bra{\nu}}{\mathcal{S}} \exp \left( \sum_{n=1}^\infty \frac{p_n}{n} (-1)^{n-1} b_{n} \right) \ket{\lambda}_\mathcal{S} \notag \\
        &= \lsubscript{\bra{\lambda'}}{\mathcal{S}} \exp \left( \sum_{n=1}^\infty \frac{p_n}{n} b_{-n} \right) \ket{\nu'}_\mathcal{S} = \lsubscript{\bra{\nu'}}{\mathcal{S}} \exp \left( \sum_{n=1}^\infty \frac{p_n}{n} b_{n} \right) \ket{\lambda'}_\mathcal{S}. \label{eq:Schur_S}
    \end{align}
\end{prop}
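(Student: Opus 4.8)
The plan is to read both displays as two instances of a single structural fact from the boson--fermion correspondence and then to feed in the combinatorial data already assembled for the two bases. The abstract statement I would isolate first is: if $\{c_{-n}\}_{n\ge 1}$ are pairwise commuting operators on a space carrying a partition-labelled basis $\{|\lambda\rangle\}$ that is orthonormal for a chosen pairing, and if each $c_{-n}$ adds border strips by the Bernstein/Murnaghan--Nakayama rule $c_{-n}|\lambda\rangle=\sum_{\mu\in Y_A^n(\lambda)}(-1)^{\mathrm{ht}(\mu/\lambda)}|\mu\rangle$, then under $|\lambda\rangle\leftrightarrow s_\lambda$ the operator $\exp(\sum_{n\ge 1}\frac{p_n}{n}c_{-n})$ is multiplication by the Cauchy kernel $\sum_\lambda s_\lambda(x)s_\lambda$, so $\langle\lambda|\exp(\sum_n\frac{p_n}{n}c_{-n})|\nu\rangle=s_{\lambda/\nu}$, while the transposed operator $\exp(\sum_n\frac{p_n}{n}c_n)$ built from the ribbon-removing adjoints $c_n=c_{-n}^{\,T}$ satisfies $\langle\nu|\exp(\sum_n\frac{p_n}{n}c_n)|\lambda\rangle=s_{\lambda/\nu}$ as well. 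This is classical: it is just the Cauchy identity $\exp(\sum_n\frac{p_n(x)p_n}{n})=\sum_\lambda s_\lambda(x)s_\lambda$ combined with $s_{\lambda/\nu}=\sum_\eta c^{\lambda}_{\nu\eta}s_\eta$, so I would either cite it or include the one-line derivation.

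For $(\ref{eq:Schur})$ I would supply the Murnaghan--Nakayama rule $(\ref{eq:a-_action})$--$(\ref{eq:a0_action})$ together with the border-strip reflection $\mathrm{ht}(\mu'/\lambda')+\mathrm{ht}(\mu/\lambda)=n-1$ for an $n$-ribbon. These give $(-1)^n a_{-n}|\lambda\rangle=\sum_{\mu\in Y_A^n(\lambda)}(-1)^{\mathrm{ht}(\mu/\lambda)}|\mu\rangle$ and, dually, that $(-1)^n a_{n}$ is precisely the ribbon-removing adjoint; since the $a_n$ commute on the relevant charge sector and $\langle\lambda|\mu\rangle=\delta_{\lambda\mu}$, the abstract statement with $c_{\mp n}=(-1)^n a_{\mp n}$ yields the first two equalities. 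The two primed equalities follow by applying the involution $\omega$ ($\omega s_{\lambda/\nu}=s_{\lambda'/\nu'}$, $\omega p_n=(-1)^{n-1}p_n$): the substitution $p_n\mapsto(-1)^{n-1}p_n$ turns $(-1)^n a_{\mp n}$ into $-a_{\mp n}$, and relabelling $\lambda\to\lambda'$, $\nu\to\nu'$ produces exactly the remaining two identities.

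For $(\ref{eq:Schur_S})$ the same machine runs on the S-dual basis, which is orthonormal for the pairing $\{{}_\mathcal{S}\langle\lambda|\}$ by $(\ref{eq:innerproduct_S})$. Here the input is Proposition \ref{prop:b_action_on_S}: combined with the border-strip reflection it gives $(-1)^{n-1}b_{-n}|\lambda\rangle_\mathcal{S}=\sum_{\mu\in Y_A^n(\lambda)}(-1)^{\mathrm{ht}(\mu/\lambda)}|\mu\rangle_\mathcal{S}$, and its companion bra formula shows ${}_\mathcal{S}\langle\lambda|(-1)^{n-1}b_n$ is the matching adjoint (the accompanying $q^{-n a_0}$ commutes through and acts trivially on the charge-zero S-dual states). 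Since the $b_n$ commute, the abstract statement with $c_{\mp n}=(-1)^{n-1}b_{\mp n}$ gives the first two equalities of $(\ref{eq:Schur_S})$, and applying $\omega$ again --- now $(-1)^{n-1}(-1)^{n-1}=1$, so $(-1)^{n-1}b_{\mp n}\mapsto b_{\mp n}$ --- together with $\lambda\to\lambda'$, $\nu\to\nu'$ gives the two primed equalities.

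I expect the main obstacle to be bookkeeping rather than ideas: carefully checking the border-strip height reflection, verifying that the twists $(-1)^n a_{\mp n}$ and $(-1)^{n-1}b_{\mp n}$ land exactly on the ``standard'' Bernstein signs for Schur functions, and confirming that the stray $q^{\pm n a_0}$ factors and the divergent exponents in Definition \ref{defn:S-dual_basis} (regularised as in the footnote there) remain inert throughout. The one genuinely conceptual point is that the $b_n$, which are mutually commuting and act \emph{diagonally} on the standard basis by $(\ref{eq:b|lambda>})$, nonetheless realise the full ribbon-adding Bernstein action on the S-dual basis; this is precisely the content of Proposition \ref{prop:b_action_on_S}, and it is what makes the S-dual half of the proposition work.
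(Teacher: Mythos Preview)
Your proposal is correct, and the underlying ingredients---the Murnaghan--Nakayama rule (\ref{eq:a-_action}) for the $a$-side, Proposition~\ref{prop:b_action_on_S} for the $b$-side, and orthonormality of each basis---are exactly what the paper uses. The packaging, however, is different. The paper treats (\ref{eq:Schur}) as classical and does not prove it; it then derives (\ref{eq:Schur_S}) from (\ref{eq:Schur}) in one line by the S-duality dictionary $a_{\mp m}\leftrightarrow -b_{\mp m}$ of Section~\ref{sec:modification}, together with the observation that the projection $\mathcal{P}$ appearing in that dictionary can be dropped because $\mathcal{P}\,b_{-n_1}\cdots b_{-n_i}\ket{\lambda}_\mathcal{S}=b_{-n_1}\cdots b_{-n_i}\ket{\lambda}_\mathcal{S}$ and the companion bra identity (both immediate from Proposition~\ref{prop:b_action_on_S}). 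You instead prove the two displays in parallel from a single abstract Cauchy/Bernstein lemma, feeding in the ribbon rules on each basis and the height reflection $\mathrm{ht}(\mu/\lambda)+\mathrm{ht}(\mu'/\lambda')=n-1$, and then obtain the primed versions via $\omega$. Your route is more explicit and avoids introducing $\mathcal{P}$ altogether; the paper's route makes manifest that (\ref{eq:Schur_S}) is literally (\ref{eq:Schur}) viewed through the duality mirror, which is closer to the theme of the paper. Your sign bookkeeping (in particular the $q^{-na_0}$ factors being inert on charge-zero S-dual states) checks out.
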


\begin{proof}
    (\ref{eq:Schur_S}) is obtained by applying
    \begin{gather}
        \mathcal{P} b_{-n_1} b_{-n_2} \cdots b_{-n_i} \ket{\lambda}_\mathcal{S} = b_{-n_1} b_{-n_2} \cdots b_{-n_i} \ket{\lambda}_\mathcal{S}, \\
        \lsubscript{\bra{\lambda}}{\mathcal{S}} b_{n_1} b_{n_2} \cdots b_{n_i} \mathcal{P} = \lsubscript{\bra{\lambda}}{\mathcal{S}} b_{n_1} b_{n_2} \cdots b_{n_i}
    \end{gather}
    to (\ref{eq:Schur}).
\end{proof}

\begin{prop} \label{prop:Schur_flip}
    \begin{align}
        \mathcal{N}_{\lambda} s_{\mu'} (q^{-\lambda-\rho-1/2}) &= q^{\frac{1}{2}\kappa_\lambda } \mathcal{N}_{\mu} s_\lambda (q^{-\mu'-\rho-1/2}), \\
        \mathcal{N}_{\lambda'} s_{\mu'} (q^{-\lambda-\rho-1/2}) &= q^{\frac{1}{2}\kappa_\mu } \mathcal{N}_{\mu'} s_\lambda (q^{-\mu'-\rho-1/2}), \label{eq:Schur_flip_1} \\
        \mathcal{N}'_{\lambda} s_{\mu'} (q^{-\lambda-\rho+1/2}) &= (-1)^{|\lambda|+|\mu|}  q^{\frac{1}{2}\kappa_\mu } \mathcal{N}'_{\mu} s_\lambda (q^{-\mu'-\rho+1/2}), \\
        \mathcal{N}'_{\lambda'} s_{\mu'} (q^{-\lambda-\rho+1/2}) &= (-1)^{|\lambda|+|\mu|}  q^{\frac{1}{2}\kappa_\lambda } \mathcal{N}'_{\mu'} s_\lambda (q^{-\mu'-\rho+1/2}).
    \end{align}
\end{prop}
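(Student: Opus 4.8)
The plan is to reduce all four identities to one symmetric identity for principally specialized Schur functions, after first putting $\mathcal{N}_\lambda$ and $\mathcal{N}'_\lambda$ into closed form. Starting from (\ref{eq:N}) and (\ref{eq:N'}), I would regularize the infinite product $\prod_{i,j>0}(1-q^{-\lambda_i-\lambda'_j+i+j-1})$ by dividing out its value at $\lambda=\varnothing$, namely $\prod_{i,j>0}(1-q^{i+j-1})=\prod_{n\ge 1}(1-q^n)^n=\mathcal{N}_\varnothing^{-2}$. A telescoping computation then gives the hook-product identity
\[
\prod_{i,j>0}\frac{1-q^{-\lambda_i-\lambda'_j+i+j-1}}{1-q^{i+j-1}}=\prod_{(i,j)\in\lambda}(1-q^{h(i,j)})(1-q^{-h(i,j)}),
\]
where $h(i,j)$ is the hook length. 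Feeding this, together with the hook-content formula $s_\lambda(q^{-\rho})=q^{\frac12|\lambda|+n(\lambda)}\prod_{(i,j)\in\lambda}(1-q^{h(i,j)})^{-1}$ (with $n(\lambda)=\sum_i(i-1)\lambda_i$) and the elementary identities $\sum_{(i,j)\in\lambda}h(i,j)=n(\lambda)+n(\lambda')+|\lambda|$, $\kappa_\lambda=2\bigl(n(\lambda')-n(\lambda)\bigr)$, into (\ref{eq:N})--(\ref{eq:N'}), one finds that the argument of each square root becomes the perfect square $q^{\ast}\mathcal{N}_\varnothing^2 s_\lambda(q^{-\rho})^2$, all factors $(-1)^{|\lambda|}$ cancelling, so that
\[
\mathcal{N}_\lambda=q^{\frac12(\kappa_\lambda-|\lambda|)}\mathcal{N}_\varnothing\,s_\lambda(q^{-\rho}),\qquad \mathcal{N}'_\lambda=(-1)^{|\lambda|}q^{\frac12|\lambda|}\mathcal{N}_\varnothing\,s_\lambda(q^{-\rho}).
\]

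With these closed forms in hand I would substitute into each of the four claimed equalities and use the homogeneity $s_\nu(c\,x_1,c\,x_2,\dots)=c^{|\nu|}s_\nu(x_1,x_2,\dots)$ to pull the overall factors $q^{\mp1/2}$ out of $s_{\mu'}(q^{-\lambda-\rho\mp1/2})$ and $s_\lambda(q^{-\mu'-\rho\mp1/2})$. Each identity then collapses---after cancelling the common $\mathcal{N}_\varnothing$ and the common power of $q$, and using the transpose specialization $s_{\lambda'}(q^{-\rho})=q^{\kappa_\lambda/2}s_\lambda(q^{-\rho})$ (immediate from the hook-content formula, since $\lambda$ and $\lambda'$ share the same multiset of hook lengths and $n(\lambda')-n(\lambda)=\kappa_\lambda/2$)---to the single statement
\[
s_\lambda(q^{-\rho})\,s_{\mu'}(q^{-\lambda-\rho})=s_{\mu'}(q^{-\rho})\,s_\lambda(q^{-\mu'-\rho}).
\]
But this is exactly (\ref{eq:tv_sym}) read through (\ref{eq:tv_C}): specializing the first index of the topological vertex to $\varnothing$ gives $C_{\varnothing\lambda\mu'}=q^{\kappa_\lambda/2}s_\mu(q^{-\rho})s_\lambda(q^{-\mu-\rho})$, whence $q^{\kappa_\mu/2}C_{\varnothing\lambda\mu'}=q^{\kappa_\lambda/2}C_{\varnothing\mu\lambda'}$ reads $s_\mu(q^{-\rho})s_\lambda(q^{-\mu-\rho})=s_\lambda(q^{-\rho})s_\mu(q^{-\lambda-\rho})$, and replacing $\mu$ by $\mu'$ (legitimate, since this holds for every pair of partitions) yields the displayed identity. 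Hence all four equalities of Proposition~\ref{prop:Schur_flip} follow. As a side remark, the closed forms also give the transpose ratios $\mathcal{N}_\lambda/\mathcal{N}_{\lambda'}=q^{\kappa_\lambda/2}$ and $\mathcal{N}'_{\lambda'}/\mathcal{N}'_\lambda=q^{\kappa_\lambda/2}$, from which one sees at once that the second identity is equivalent to the first and the fourth to the third, so only two of the four cases actually need checking.

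The one genuinely delicate point is the first step: establishing the regularized hook-product identity and then keeping exact track of the exponents of $q$, of the signs $(-1)^{|\lambda|}$ sitting inside the square roots in (\ref{eq:N})--(\ref{eq:N'}), and of the branch of the square root, so that the closed forms for $\mathcal{N}_\lambda,\mathcal{N}'_\lambda$ come out with the correct constants. Everything after that is routine bookkeeping with $q$-powers together with the two classical Schur-function facts---the transpose specialization and the symmetric identity, the latter already available in the text as (\ref{eq:tv_sym}).
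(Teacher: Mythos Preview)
Your argument is correct, and it is genuinely different from the paper's own proof. You first extract the closed forms $\mathcal{N}_\lambda=q^{\frac12(\kappa_\lambda-|\lambda|)}\mathcal{N}_\varnothing\,s_\lambda(q^{-\rho})$ and $\mathcal{N}'_\lambda=(-1)^{|\lambda|}q^{\frac12|\lambda|}\mathcal{N}_\varnothing\,s_\lambda(q^{-\rho})$ via the regularized hook-product identity, and then reduce all four equalities to the Hopf-link symmetry $s_\lambda(q^{-\rho})s_{\mu}(q^{-\lambda-\rho})=s_\mu(q^{-\rho})s_\lambda(q^{-\mu-\rho})$, which the paper imports as the known fact~(\ref{eq:tv_sym}). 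The paper instead argues internally, through the S-dual basis: it writes $s_{\mu'}(q^{-\lambda-\rho-1/2})$ as a matrix element $\lsubscript{\bra{\mu'}}{\mathcal{S}}\exp(\cdots b_{-n})\sum_\nu\ket{\nu}\braket{\nu|\varnothing}_\mathcal{S}$, uses the diagonal action~(\ref{eq:b|lambda>}) of $b_{-n}$ together with the Cauchy-type vanishing $\prod_{i,j>0}(1-q^{-\lambda_i-\nu'_j+i+j-1})=\delta_{\lambda\nu}(\mathcal{N}_{\lambda'}\mathcal{N}'_{\lambda'})^{-1}$, and unwinds the resulting inner products back into Schur functions to reach~(\ref{eq:Schur_flip_1}); the other three then follow from the explicit $q$-powers in~(\ref{eq:N}),~(\ref{eq:N'}).

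What each buys: your route is shorter and makes the content of the proposition completely transparent---once one has the closed forms, the proposition \emph{is} the Hopf-link symmetry, nothing more---and your side remark that the transpose ratios $\mathcal{N}_\lambda/\mathcal{N}_{\lambda'}=q^{\kappa_\lambda/2}$, $\mathcal{N}'_{\lambda'}/\mathcal{N}'_\lambda=q^{\kappa_\lambda/2}$ collapse the four identities to two is a nice structural observation. The paper's route, by contrast, is self-contained: it does not invoke~(\ref{eq:tv_sym}) but rather derives the symmetry from the free-field machinery, and in the process produces the intermediate formula~(\ref{eq:Schur_flip_unfinished}), which is exactly what is later used in Proposition~\ref{prop:tv_innerproduct} to \emph{relate} the inner product to the Hopf link. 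In that sense the two proofs run in opposite directions along the same bridge between $\mathcal{N}_\lambda$ and $W_{\lambda\mu}$.
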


\begin{proof}
    The following equality follows from (\ref{eq:Schur_S}):
    \begin{align}
        s_{\mu'}(q^{-\lambda-\rho-1/2}) =& (-q)^{-|\mu|} \lsubscript{\bra{\mu'}}{\mathcal{S}} \exp \left[ - \sum_{n>0} \frac{1}{n} p_n(q^{-\lambda-\rho+1/2}) b_{-n} \right] \sum_{\nu} \ket{\nu} \braket{\nu | \varnothing}_\mathcal{S}.
    \end{align}
    Using (\ref{eq:b|lambda>}) and 
    \begin{equation}
        \exp \left( - \sum_{n>0} \frac{1}{n} p_n(q^{-\lambda-\rho+1/2}) p_n(q^{-\nu'-\rho-1/2}) \right) = \prod_{i,j>0} (1-q^{-\lambda_i-\nu'_j+i+j-1}) = \delta_{\lambda \nu} (\mathcal{N}_{\lambda'} \mathcal{N}'_{\lambda'})^{-1},
    \end{equation}
    we obtain
    \begin{align}
        s_{\mu'}(q^{-\lambda-\rho-1/2}) =& (-q)^{-|\mu|} (\mathcal{N}_{\lambda'} \mathcal{N}'_{\lambda'})^{-1} \lsubscript{\braket{\mu' | \lambda}}{\mathcal{S}} \braket{\lambda | \varnothing}_\mathcal{S} \notag \\
        =& (-1)^{|\lambda|+|\mu|} q^{-|\mu|} \mathcal{N}_\varnothing \mathcal{N}'_{\mu'} (\mathcal{N}_{\lambda'} \mathcal{N}'_{\lambda'})^{-1} s_\lambda(q^{-\mu'-\rho+1/2}) s_{\lambda'} (q^{-\rho-1/2}) \label{eq:Schur_flip_unfinished},
    \end{align}
    where we used (\ref{eq:Schur_S}) again. Replacing $\mu \to \lambda$ and $\lambda \to \varnothing$ in this equation, one can derive
    \begin{equation}
        s_{\lambda'} (q^{-\rho-1/2}) = (-q)^{-|\lambda|} \mathcal{N}'_{\lambda'} \mathcal{N}^{-1}_\varnothing.
    \end{equation}
    Substituting this into (\ref{eq:Schur_flip_unfinished}), we can prove (\ref{eq:Schur_flip_1}). The other equations can be confirmed by using the definition of the normalizing constant (\ref{eq:N})(\ref{eq:N'}).
\end{proof}

\section{Symmetric functions}
It is known that there is one-to-one correspondence between Fock states and symmetric functions:
\begin{equation}
    a_{-\lambda_1} a_{-\lambda_2} \cdots a_{-\lambda_n} \ket{0} \ \leftrightarrow \ p_{\lambda_1} p_{\lambda_2} \cdots p_{\lambda_n}.
\end{equation}
We note that this identification can be realized by acting
\begin{equation}
    \bra{0} \exp \left( \sum_{n>0} \frac{a_n}{n} p_n \right)
\end{equation}
on Fock states. In this identification, $\ket{\lambda}$ is mapped to the Schur function as
\begin{equation}
    s_\lambda = \bra{0} \exp \left( \sum_{n>0} \frac{a_n}{n} p_n \right) \ket{\lambda}. \label{eq:Schur_free_field}
\end{equation}


Using this free field realization of the Schur polynomials, we can confirm 
\begin{align}
    \langle \lambda | \mu \rangle_\mathcal{S} &= \mathcal{N}_{\mu} s_{\lambda'}(q^{-\mu'-\rho-1/2}) = q^{\frac{1}{2} \kappa_\lambda} \mathcal{N}_{\lambda'} s_{\mu'} (q^{-\lambda'-\rho-1/2}), \\
    {}_\mathcal{S} \langle \mu | \lambda \rangle &= (-1)^{|\lambda|} \mathcal{N}'_{\mu} s_\lambda(q^{-\mu-\rho+1/2}) = (-1)^{|\mu|} q^{ - \frac{1}{2} \kappa_\lambda} \mathcal{N}'_{\lambda'} s_{\mu} (q^{-\lambda-\rho+1/2}).
\end{align}
The second equality in each equation follows from Proposition \ref{prop:Schur_flip}. Thus the S-dual states are expanded as
\begin{equation}
    \ket{\lambda}_\mathcal{S} = \sum_\mu q^{\frac{1}{2} \kappa_\mu} \mathcal{N}_{\mu'} s_{\lambda'}(q^{-\mu'-\rho-1/2}) \ket{\mu}.
\end{equation}
By this equation and (\ref{eq:b|lambda>}), we obtain
\begin{align}
    b_m \ket{\lambda'}_\mathcal{S} &= \sum_\mu q^{\frac{1}{2} \kappa_\mu} \mathcal{N}_{\mu'} (\textcolor{blue}{p_{-m}} s_{\lambda})(q^{-\mu'-\rho-1/2}) \ket{\mu}. \label{eq:b_on_Sdual}
\end{align}
Naively, we can identify the action of $b_m$ on $\ket{\lambda'}_\mathcal{S}$ as a multiplication  of $s_\lambda$ by $p_{-m}$.
However, this interpretation does not work because $p_{-m} s_\lambda \ (m>0)$ is not a polynomial and ill-defined. Thus the insertion of the projection operator $\mathcal{P}$ corresponds to an elimination of the negative-power part.

\bibliographystyle{utphys}
\bibliography{bib}

\end{document}